\title[ ]{Nonlinear Anderson localized \\states at arbitrary disorder}
\author{Wencai Liu}
\address[W. Liu]{ Department of Mathematics, Texas A\&M University, College Station, TX 77843-3368, USA} \email{liuwencai1226@gmail.com; wencail@tamu.edu}
\author{W.-M. Wang}
\address[W.-M. Wang]{CNRS and D\'epartement de Math\'ematique, Cergy Paris Universit\'e, 95302 Cergy-Pontoise Cedex, France } \email{wei-min.wang@math.cnrs.fr}
\keywords{ Anderson localization,  quasi-periodic solution, nonlinear random Schr\"odinger equation, large deviation theorem, semi-algebraic set.}
\subjclass[2010]{ 35R60 (primary); 35B15,  82B44 (secondary)}
\theoremstyle{plain}
\newtheorem{theorem}{Theorem}[section]
\renewcommand{\l}{\left}
\renewcommand{\r}{\right}
\newtheorem{lemma}[theorem]{Lemma}
\newtheorem{remark}{Remark}
\newcommand{\Z}{\mathbb{Z}}
\newcommand{\N}{\mathbb{N}}
\newcommand{\R}{\mathbb{R}}
\theoremstyle{plain}
\newtheorem{conjecture}{Conjecture}
\begin{document}

\begin{abstract} 
It is classical, following Furstenberg's theorem on positive Lyapunov exponent for products of random SL$(2, \mathbb R)$
matrices, 
that the one dimensional random Schr\"odinger operator has Anderson localization 
at arbitrary disorder.
This paper proves a nonlinear analogue, thereby establishing a KAM-type persistence result
for a {\it non-integrable} system.
\end{abstract}

\maketitle

    \section{Introduction and the main theorem}

 We study the discrete nonlinear random Schr\"odinger equation (NLRS) in one dimension: 
 \begin{equation}\label{g4}
i\frac{\partial u}{\partial t}=-\Delta u+Vu+\delta |u|^{2p}u,\, p\in\mathbb N,
 \end{equation}
 where $\Delta$ is the discrete Laplacian:
 $$(\Delta u)(x)=u(x+1)+u(x-1),$$
 and $V=\{v_x\}$ is a family of independent identically distributed
 random variables on $[0, 1]$,  with distribution density $g$.
 Assume that $g$ is bounded, $g\in L^\infty$.
 
 Let 
  \begin{equation} \label{rs}
 H=-\Delta+V,
 \end{equation}
 be the random Schr\"odinger operator.
 It is well-known, as a consequence of Furstenberg's theorem on positive Lyapunov exponent for products of 
 random SL$(2, \mathbb R)$ matrices ~\cite{furs},
 that with probability 1, 
 $H$ has  Anderson localization, namely,  pure point spectrum with exponentially decaying eigenfunctions ~\cite{gmp,ks}.
 (See also ~\cite{bdfgvwz,jz,gok,gelzx,ms}.)
 In higher dimensions and for large disorder, i.e., replacing $V$ by $\lambda V$, $\lambda\gg1$, Anderson localization has been established using multiscale analysis ~\cite{fs}, see also ~\cite{vdk},
 or fractional moment method ~\cite{am}.
 
  Assume that $H$ has Anderson localization. Let $\{{\phi}^V_j\}_{j\in\Z}$ be the  (real) eigen-basis of $H$.
  Assume that  ${\ell}_j^V$ 
 satisfies 
 \begin{equation*}
 |{\phi}_j^V({\ell}_j^V)|=\max_{x} |{\phi}_j^V(x)|.
 \end{equation*}
(If it is not unique, one may choose a maximum arbitrarily.)
 We call  ${\ell}_j^V$, the {\it localization center}. It suffices to say here that as a consequence of 
 localization, there is an  eigenfunction labelling such that 
 when $j_1<j_2$,  $\ell_{j_1}^V\leq \ell_{j_2}^V$ (see sect.~3 and appendix \ref{label1} for details), and that we use this labelling. 


So for a given $V$ such that $H$ has Anderson localization,  let $j\in\mathbb Z$, and denote by $\phi_j^V$ and $\mu_j^V$ the eigenfunctions and corresponding eigenvalues  of $H$. 
 When $\delta=0$, all solutions to \eqref{g4} are of the form 
 $$\sum_{j\in\mathbb Z} c_j e^{-i\mu_j^Vt} \phi_j^V,$$
 with appropriate $c_j$, which decay to $0$, as $j\to\pm\infty$. 
This paper is concerned with the case $c_j\neq 0$ for finite (but arbitrary) number of $j$. 

\smallskip

Denote by $\mathbb P (\cdot)$ the measure of a set. 
We prove the following nonlinear analogue:
  
\begin{theorem}\label{mainthm}
Consider the discrete NLRS in one dimension:
 \begin{equation}\label{g28} 
i\frac{\partial u}{\partial t}=-\Delta u+Vu+\delta |u|^{2p}u, \, p\in\mathbb N.
\end{equation}
Let $a=(a_1,a_2,\cdots,a_b)\in [1,2]^b$.
For any $\epsilon>0$, there exists $l_{\epsilon}$ such that the following holds. Fix  any $L\geq \ell_\epsilon$ and $\beta_k\in\Z$, $k=1,2,\cdots,b$  satisfying 
$10L\leq |{\beta}_k|\leq L^3$ and $|{\beta}_{k}-{\beta}_{k'}|\geq  10L$ for any distinct $k,k'\in\{1,2,\cdots,b\}$,  there exist a probability space $X_{\epsilon}$ with  
$\mathbb{P}(X_\epsilon)\geq 1-\epsilon$ and   $\delta_0>0$ (depending on $g$, $\epsilon$ and  $L$) such that  for any $V\in X_{\epsilon}$ and $0<\delta\leq \delta_0$,  there exists 
 a set $A_{\delta}\subset [1, 2]^b$ of measure at least $1-e^{-|\log\delta|^{1/2}}$, such that for any $a\in A_\delta$, any eigenfunction $\phi_{\alpha_k}^V$ with $\ell_{\alpha_k}^V\in B_k=\{l\in\Z: |l-{\beta}_k|\leq L\}$,  $k=1,2,\cdots,b$, 
the solution to the linear equation:
\begin{equation*}
u_0(t, x)=\sum_{k=1}^b a_k e^{-i \mu_{\alpha_k}^V t}\phi_{\alpha_k}^{V}(x),
\end{equation*}
bifurcates to a solution to the nonlinear equation \eqref{g28}:
\begin{equation*}
u(t, x)=\sum_{(n,j)\in \Z^b\times \Z} \hat u(n,j)e^{i n\cdot\omega t}\phi_{j}^{V}(x)=\sum_{k=1}^b a_k e^{-i \omega_kt}\phi_{\alpha_k}^{V}(x)+O(\delta^{1/2}),
\end{equation*}
satisfying
$\omega=(\omega_1, \omega_2, \cdots, \omega_b)=(\mu_{\alpha_1}^V,\mu_{\alpha_2}^V,\cdots, \mu_{\alpha_b}^V)+O(\delta)$,
and $\hat u(n, j)$ decay exponentially as 
$|(n, j)|\to\infty$. 
\end{theorem}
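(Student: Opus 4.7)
The plan is a Lyapunov--Schmidt decomposition in the Fourier-eigenfunction basis, combined with a Newton iteration whose convergence is secured by multiscale Green's function estimates. Substituting the ansatz $u(t,x)=\sum_{(n,j)\in\Z^b\times\Z}\hat u(n,j)e^{in\cdot\omega t}\phi_j^V(x)$ into \eqref{g28} gives the nonlinear lattice equation
\[
(n\cdot\omega-\mu_j^V)\hat u(n,j)+\delta\,\widehat{|u|^{2p}u}(n,j)=0,\qquad (n,j)\in\Z^b\times\Z,
\]
in which the Fourier coefficient of the nonlinearity is a convolution weighted by structure constants $\sum_x\phi_{j_1}^V(x)\cdots\phi_{j_{2p+1}}^V(x)$ that, by Anderson localization of the $\phi_j^V$'s around $\ell_j^V$, decay exponentially in $\max_{i,i'}|\ell_{j_i}^V-\ell_{j_{i'}}^V|$. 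I split the equation into a $P$-equation on the resonant set $S=\{(\pm e_k,\alpha_k):k=1,\dots,b\}$ and a $Q$-equation on its complement. The $P$-equation determines the modulated frequencies $\omega=\omega(a,\delta)=(\mu_{\alpha_k}^V)_k+O(\delta)$ by the implicit function theorem, since the leading correction is a twist matrix $\partial\omega/\partial|a|^2$ that is diagonally nondegenerate at $\delta=0$ thanks to the on-site nature of the nonlinearity.

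The $Q$-equation is solved by a Newton iteration starting from $\hat u^{(0)}(e_k,\alpha_k)=a_k$, $\hat u^{(0)}(-e_k,\alpha_k)=\bar a_k$, and $0$ elsewhere. At each step one must invert the linearized operator
\[
\mathcal{L}=\mathrm{diag}(n\cdot\omega-\mu_j^V)+\delta\, T(\hat u),
\]
restricted to nonresonant sites, where the off-diagonal part $T$ is exponentially localized in $\Z^b\times\Z$ around $S$ by localization of the $\phi_j^V$ together with the separation condition $|\beta_k-\beta_{k'}|\geq 10L$. The heart of the argument is a multiscale analysis \`a la Fr\"ohlich--Spencer/Bourgain: by induction on the scale $N$, one establishes that for $a$ in a set of large measure the finite-box restriction $\mathcal{L}_N$ is invertible with off-diagonal exponential decay
\[
|\mathcal{L}_N^{-1}(x,y)|\lesssim e^{-\gamma|x-y|},\qquad \gamma>0.
\]

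Three ingredients drive the induction: (i) Furstenberg's positive Lyapunov exponent, giving deterministic exponential decay of the linear eigenfunctions once $V\in X_\epsilon$ is fixed; (ii) Wegner-- and Minami-type estimates restricting the probability that eigenvalues $\mu_j^V$ with nearby localization centers are too close, so that the small denominators $|n\cdot\omega-\mu_j^V|$ are bounded from below outside an event of controlled probability -- this determines $X_\epsilon$ and pins down the initial scale $L$, which is the reason for the window $10L\leq|\beta_k|\leq L^3$; and (iii) a semi-algebraic set estimate in $a\in[1,2]^b$ in the spirit of Bourgain--Goldstein--Schlag, which converts the polynomial dependence (inherited from the $|u|^{2p}u$ nonlinearity) of the eigenvalues of $\mathcal{L}_N$ on $a$ into a measure bound. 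Summing the bad sets over scales $N\lesssim|\log\delta|$ yields the claimed set $A_\delta$ with $|A_\delta|\geq 1-e^{-|\log\delta|^{1/2}}$.

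The main obstacle is step (iii) in the random setting. Unlike the quasi-periodic case, where a global Diophantine condition on the base frequency is inherited from the outset, here the control of the small divisors must be extracted \emph{jointly} from the random $\mu_j^V$'s and the polynomial dependence on $a$, and the relevant bad events couple both parameters. I resolve this by decoupling them: the random ingredient is absorbed once and for all into $X_\epsilon$ via the Wegner--Minami input at the base scale $L$, while the $a$-dependence is handled scale by scale by semi-algebraic covering, exploiting that the characteristic polynomial of $\mathcal{L}_N$ remains polynomial of controlled degree in $a$ throughout the iteration. Once the inductive Green's function bound is available, the Newton scheme converges super-exponentially and produces the desired solution $\hat u$ with exponential decay in $(n,j)$, establishing the theorem.
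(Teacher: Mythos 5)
Your overall architecture---the eigenfunction-basis ansatz, the Lyapunov--Schmidt split into a finite resonant block fixing $\omega(a)=(\mu_{\alpha_k}^V)_k+O(\delta)$ and an infinite complement solved by a multiscale Newton scheme, large deviation estimates for the linearized operators, and a semi-algebraic projection in $a$---matches the paper's. (Your naming of the $P$- and $Q$-equations is swapped relative to the paper's convention, which is immaterial.)

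There is, however, a genuine gap in your treatment of the small and intermediate scales, i.e.\ boxes of size $N$ with $L\ll N\ll \delta^{-1}$. The only free parameters are $a\in[1,2]^b$, and by your own frequency equation they move $\omega$ by only $O(\delta)$; hence excursion in $a$ cannot push a divisor $n\cdot\omega+\mu_j^V$ away from $0$ by more than $O(\delta)$, and the Cartan/semi-algebraic machinery is powerless at these scales unless one already knows that the near-resonant sites in such a box are very few. You propose to absorb the random input ``once and for all \ldots at the base scale $L$'' via Wegner--Minami, but Wegner and Minami at scale $L$ say nothing about the harmonics $n\cdot\omega^{(0)}\pm\mu_j^V$ for $|n|,|j|$ up to $e^{|\log\delta|^{3/4}}$. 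The paper's key ingredient---absent from your proposal---is a probabilistic \emph{clustering property of the diagonals} established at $\delta$-dependent scales: after a Borel--Cantelli argument over dyadic $\delta$, with probability at least $1-\epsilon$, (a) every divisor $n\cdot\omega^{(0)}\pm\mu_j^V$ off the resonant set is $\geq 2\delta^{1/8}$ for $|(n,j)|\leq e^{|\log\delta|^{3/4}}$, and (b) for every $\theta\in\R$ there are at most $b$ sites $(n,j)$ in a box of side $|\log\delta|^{s}$ with $|\pm(n\cdot\omega^{(0)}+\theta)+\mu_j^V|\leq\delta^{1/8}$. Proving (a) and (b) is not a corollary of standard Wegner/Minami bounds; it requires combining Minami's estimate with a single-site eigenvalue-variation argument (using $d\tilde\mu_j/dV_l=|\tilde\phi_j(l)|^2$ together with localization to find a site $l$ where the derivative of the relevant linear combination is bounded below by $L^{-10}$), carried out in finite volumes of size $e^{|\log\delta|^{3/4}}$ and then transferred to infinite volume. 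Only with (a),(b) can one localize $\theta$ in $O(\delta)$ intervals, invoke Cartan's lemma with merely $2b$ exceptional sites, and thereby make the $O(\delta)$ parameter range suffice; without it your induction cannot pass the perturbative threshold and the measure bound on $A_\delta$ does not follow. A smaller inaccuracy: the bad sets are summed over all scales $M^{r}$, $r\to\infty$, not only $N\lesssim|\log\delta|$; the scales below $M^{r_0}$ with $r_0\sim|\log\delta|^{3/4}/\log M$ are exactly those handled by the clustering property, while the larger scales use the LDT and the projection lemma.
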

\begin{remark}
	 \begin{enumerate}
		{\rm \item 	For any $B_k$, $k=1,2,\cdots,b$, there are at least $2(1-\epsilon) L$ normalized eigenfunctions $\phi_{\alpha_k}^V$ such that the localization centers 
		$\ell_{\alpha_k}^V$ lie  in $B_k$.
		\item We could replace $L^3$ with $e^{L^{\kappa}}$, $0<\kappa<1$.  We could also replace $1-e^{-|\log\delta|^{1/2}}$ with
		$1-e^{-|\log\delta|^{\kappa}}$, $0<\kappa<1$. }
	\end{enumerate}

	\end{remark}

\subsection{About Theorem \ref{mainthm}} The linear solution $u_0$ is localized in space, and quasi-periodic in time, with frequencies 
the $b$ eigenvalues of the linear random Schr\"odinger operator. Theorem \ref{mainthm} shows that under small nonlinear 
perturbations, for a large set of amplitudes, there is a solution to the nonlinear equation nearby. This nonlinear solution $u$
remains localized in space and quasi-periodic in time; moreover the frequencies are harmonics of the modulated $b$ Fourier modes of the linear random Schr\"odinger equation.

Theorem \ref{mainthm} is a KAM-type persistence result. Most such results pertain to perturbations of integrable systems.
The random Schr\"odinger equation is, however, {\it non-integrable}. Nonetheless, Theorem \ref{mainthm} shows persistence
of time quasi-periodic, localized solutions. Moreover there is an abundance of such solutions. This is the main novelty.
\smallskip

\begin{remark} {\rm Previously the paper ~\cite{bw} established existence of quasi-periodic solutions at large disorder. It perturbs about the diagonal operator $\lambda V$, $\lambda\gg1$, in the canonical $\mathbb Z^d$ basis. The method is not applicable here.
For $d=1$ and large disorder, see also ~\cite{zg}. For other nonlinear random models, see e.g., ~\cite{fsw}.}
\end{remark}

\begin{remark} {\rm The symbol $\hat u$ is merely a notation here and does not carry the connotation of being the dual of $u$.}
\end{remark}

\subsection{Ideas of the proof} One of the main ideas is to {\it fix} a ``good" random potential, and work in the eigenfunction basis provided by the
random Schr\"odinger operator. 
We give the requirements to be good in sect.~ \ref{good} below. Fixing a potential circumvents the lack of control of the eigenfunctions as the potentials vary. 
Note moreover that generally it is not possible to know precisely the eigenfunctions of the random Schr\"odinger operators, even for large disorder, see ~\cite{ek}. 

So fix indeed such a good potential $V$, and let
\begin{equation}\label{u0}
u_0(t, x)=\sum_{k=1}^b a_k e^{-i \mu_{\alpha_k}^V t}\phi_{\alpha_k}^{V}(x),
\end{equation}
be a solution to the linear equation, as in Theorem \ref{mainthm}. As an ansatz, we seek solutions of the form: 
\begin{equation}\label{ansatz}
u(t, x)=\sum_{(n,j)\in \Z^b\times \Z} \hat u(n,j)e^{i n\cdot\omega t}\phi_{j}^{V}(x).
\end{equation}
Note that $u(t, x)$ of the above form are closed under multiplication and complex conjugation. So we may seek solutions to \eqref{g4} in this form.

Using \eqref{ansatz} in \eqref{g4} leads to the following nonlinear system of equations on $\mathbb Z^{b}\times \mathbb Z$:
\begin{equation}\label{meq}
(n\cdot\omega+\mu_j^V)\hat u(n, j)+\delta W_{\hat u}(n, j)=0, \, (n, j)\in\mathbb Z^{b}\times\mathbb Z,
\end{equation}
where, to give an idea, when $p=1$, 
\begin{align}
	W_{\hat u}(n, j)=&\sum_{n_1+n_2-n_3=n\atop{n_1,n_2,n_3\in\Z^b}} \sum_{j_1,j_2,j_3\in\Z } \hat{u}(n_1,j_1) \hat{u}(n_2,j_2) {\overline{\hat{u}(n_3,j_3)}}\nonumber\\
	&\left(\sum_{x\in\Z}\phi_{j}^V(x)\phi_{j_1} ^V(x) {\phi_{j_2}^V(x)} \phi_{j_3}^V(x) \right);\label{W3}
\end{align}
while for general $p$, 
\begin{align}
W_{\hat u} (n, j)=&\sum_{n'+\sum_{m=1}^p (n_m-n_m')=n\atop{n', n_m,n_m'\in\Z^b}} \sum_{j', l_m,l_m'\in\Z } \hat{u}(n',j')\prod_{m=1}^p\hat {u}(n_m,l_m) \overline{\hat{u}(n'_m,l'_m) }\nonumber\\
&\left(\sum_{x\in\Z}\phi_{j}^V(x)\phi_{j'}^V(x)\prod_{m=1}^p\phi_{l_m} ^V(x) {\phi_{l_m'}^V(x)}\right).\label{W}
\end{align}

\subsection{The good potentials}\label{good} The linear solution $u_0$ solves \eqref{g4} to order $\delta$. One may write $u_0$ in the form \eqref{ansatz},
with $\hat u(-e_k, \alpha_k)=a_k$, $k=1, 2, \cdots, b$, where $e_k$ is the $k$th basis vector of $\mathbb Z^b$, $\hat u(n, j)=0$ otherwise, and $\omega_k=\mu_{\alpha_k}^V$, $k=1, 2, \cdots, b$.  The 
vector $W_{\hat u}$ in \eqref{meq} depends on $a_k$, $k=1, 2, \cdots, b$. Generally speaking, one would need parameters to solve the nonlinear equation \eqref{meq} using a Newton scheme, 
starting from the approximate solution $u_0$. Since $V$ is fixed, the $a_k$'s are the parameters in the problem. There is however, a $\delta$ factor in front of $W_{\hat u}$. 

The small $\mathcal O(\delta)$ parameters pose difficulties mainly at small scales: $|(n, j)|\ll \delta^{-1}$,
when estimating the inverse of the linearized operators. The key new idea is that we can overcome this difficulty if the diagonals of the linear operator in \eqref{meq} satisfy a {\it clustering property}. Roughly speaking, this means that
if two diagonal elements are ``not close", then they are ``far apart".  (One may think of the integers, which have this property: if two integers 
are not equal, then they are at least of distance $1$.) This then permits localizing about the diagonals in $\mathcal O(\delta)$ intervals, which compensates
for the small $\mathcal O(\delta)$ parameters. The potentials $V$ that lead to clustering properties, in addition to Anderson localization, are {\it good} potentials.

It should be emphasized that the clustering property is only needed at {\it small} scales, and {\it not} large ones.
This makes the approach robust, potentially applicable to many problems.

\begin{remark}{\rm See papers ~\cite{wduke} and ~\cite{wkg}, which use deterministic clustering properties. In ~\cite{wkg}, this was established using number theory.}
\end{remark}

\subsection{Anderson localization and clustering property of the diagonals} We use Anderson localization to establish probabilistic clustering at small scales. 
So one may set $\omega$ to be the frequencies of the $u_0$ in \eqref{u0}, which are $b$ eigenvalues of the $H$ in \eqref{rs}. The diagonals in \eqref{meq} then correspond to a family
of harmonics, i.e., certain linear combinations of the eigenvalues of $H$.

The proof of the clustering of these (low lying) harmonics is rather delicate. The Minami estimate ~\cite{mi}
on eigenvalue spacing plays a fundamental role. Uniform property of  Anderson localization, see ~\cite{djls,gk},
is essential. 
Wegner estimate ~\cite{we} comes into play as well. This is done in sects.~2 and 3, and the conclusion is summarized in Theorem \ref{mainkeythm}, which also provides lower bounds on the diagonals.
The clustering property permits the analysis to go beyond small perturbative scales, and is one of the main points of the paper.

\subsection{Small scale analysis} The clustering property indicates that at {\it small scales} the spectrum of the diagonal operator has many {\it gaps}. 
Using perturbation theory, the linearized operators are invertible in the gaps; while away from the gaps, one may work locally in intervals of size $\mathcal O(\delta)$. 
This greatly reduces the number of resonances, and consequently $\mathcal O(\delta)$ parameters suffice for the analysis. 
This is the case for the proof of the large deviation theorem applying  Cartan estimates in sect.~\ref{ldt}, as well as for the semi-algebraic projection in sect. \ref{proj}.

\subsection{Large scale analysis}  Large scale analysis is related to what has been done before in ~\cite{bw}, cf. also Chap.~18 ~\cite{bbook},
which are a priori tailored for parameters of order $1$. However, after incorporating the local argument recounted above, it 
can be adapted and used to prove Theorem \ref{mainthm}.

\subsection{Organization of the paper} Sect. 2 establishes the {\it good} potential space;  sect.~3 makes linear estimates for small scales; sect.~4 proves a large deviation theorem,
to be used for the nonlinear analysis at large scales; sects.~5 and 6 finally solve \eqref{meq} and hence \eqref{g4}, using a Lyapunov-Schmidt decomposition and Theorems \ref{mainkeythm} and \ref{thmldt}. 

 \section{One dimensional random Schr\"odinger operators\\ in finite volumes}
For an operator $H$ on $\ell^2(\Z^d)$ and  $\Lambda\subset \Z^d$, let $H_{\Lambda}=R_{\Lambda} HR_{\Lambda}$, where $ R_{\Lambda} $ is the restriction.
For $n=(n_1,n_2,\cdots,n_d)\in \Z^d$, let $|n|=\max_{j\in\{1,2,\cdots,d\}}|n_j|$ denote the $\ell^\infty$ norm. In this paper $d$  equals either 1 or $b+1$.

In this section, we study the  one dimensional random Schr\"odinger operator  $H=-\Delta+V$ restricted to finite volumes.
For $\Lambda \subset \Z$,
 denote by 
 $\tilde{\mu}^{\Lambda}_j$, $j\in\Lambda$, eigenvalues of $H_{\Lambda}=R_{\Lambda} HR_{\Lambda}$, with corresponding normalized eigenvectors $\tilde{\phi}_j^{\Lambda}$.
 \begin{remark}
 	{\rm Note that $\tilde{\mu}^{\Lambda}_j$  and   $\tilde{\phi}_j^{\Lambda}$ depend on the realization of the potentials in $\Lambda$. It is convenient here  to label the eigenvalues 
	and normalized eigenvectors by $j\in\Lambda$, instead of $j\in \{1, 2, \cdots, |\Lambda|\}$.}
	 \end{remark}


For a  ball  $B=\{n\in\Z:|n-n_0|\leq l\}$ of size $l$ with center $n_0$, denote by $rB$, the dilation: $rB=\{n\in\Z:|n-n_0|\leq rl\}$.
Fix $L>0$.
Let $\tilde{B}_{(k,L)}$ be a ball of size $L$, $k=1,2,\cdots,b$. Assume that 
 \begin{equation*}
L\leq  {\rm dist} (\tilde{B}_{(k,L)},0)\leq  L^4, 
 \end{equation*}
 and 
 for any distinct $k$ and $k'$,
\begin{equation}\label{g30}
{\rm dist} (\tilde{B}_{(k,L)},\tilde{B}_{(k',L)})\geq  6L.
\end{equation}
Since $L$ will be fixed, we omit the dependence on $L$ of $\tilde{B}_{(k,L)}$ and write simply  $\tilde{B}_{k}$.

Recall that  $e_k$, $k=1, 2, \cdots, b$, are the canonical basis vectors of $\mathbb Z^b$.
Let $C_0>0,\gamma_0>0,q_0>0$ 
be three fixed constants, which will be determined later. 
Assume that $H_{\Lambda}$ has  $b$ eigen-pairs  $\tilde{\mu}^{\Lambda}_{\tilde{\alpha}_k}$ and $\tilde{\phi}^{\Lambda}_{\tilde{\alpha}_k}$, $k=1,2,\cdots,b$ such that   for any $k=1,2,\cdots,b$, there exists 
  $\tilde{\ell}_{\tilde{\alpha}_k}^\Lambda\in\tilde{B}_k$  such that
  \begin{equation}\label{g18}
|\tilde{\phi}^\Lambda_{\tilde{\alpha}_k}(\ell)|\leq C_0(1+|\tilde{\ell}_{\tilde{\alpha}_k}^\Lambda|)^{q_0}e^{-\gamma_0|\ell-\tilde{\ell}_{\tilde{\alpha}_k}^\Lambda|},  \ell\in \Lambda.  
\end{equation} 
Let $\tilde{\omega}^\Lambda_k=\tilde{\mu}_{\tilde{\alpha}_k}^\Lambda$, $k=1,2,\cdots,b$ and  $\tilde{\omega}^\Lambda=(\tilde{\omega}_1^\Lambda,\tilde{\omega}_2^\Lambda,\cdots,\tilde{\omega}_b^\Lambda)\in\R^b$.
When there is no ambiguity, we omit the dependence on $\Lambda$. In the following, $\delta>0$ is sufficiently small.

\begin{remark} {\rm Condition \eqref{g18} is motivated by uniform properties of Anderson localization in infinite volume, see Theorem \ref{thm1}.}
\end{remark}

\subsection{Estimates on the diagonals}
Let $\Lambda_1=\left[-2\left \lfloor e^{|\log\delta|^\frac{3}{4}} \right \rfloor, 2\left \lfloor e^{|\log\delta|^\frac{3}{4}} \right \rfloor\right]$, where $ \lfloor x \rfloor$  is the integer part of $x$. 
(The choice of scales is in view of the later nonlinear analysis.)
Denote by $S_1$ the probability space on which $H_{\Lambda_1}$ has  eigen-pairs  $\tilde{\mu}^{\Lambda_1}_{\tilde{\alpha}_k}$ and $\tilde{\phi}^{\Lambda_1}_{\tilde{\alpha}_k}$, $k=1,2,\cdots,b$  satisfying \eqref{g18} (with $\Lambda=\Lambda_1$), and there exists either
\begin{equation}\label{set}
(n, j)\in  \left[-2\left \lfloor e^{|\log\delta|^\frac{3}{4}} \right \rfloor, 2\left \lfloor e^{|\log\delta|^\frac{3}{4}} \right \rfloor\right]^{b+1}\backslash\{- e_k, \tilde{\alpha}_k\}_{k=1}^b ,
\end{equation}
such that  
\begin{equation}\label{g5}
|n\cdot\tilde{ \omega}^{\Lambda_1}+\tilde{\mu}_j^{\Lambda_1}|\leq 4\delta^{\frac{1}{8}} 
\end{equation}
or 
\begin{equation}\label{set1}
	(n, j)\in  \left[-2\left \lfloor e^{|\log\delta|^\frac{3}{4}} \right \rfloor, 2\left \lfloor e^{|\log\delta|^\frac{3}{4}} \right \rfloor\right]^{b+1}\backslash\{ e_k, \tilde{\alpha}_k\}_{k=1}^b ,
\end{equation}
such that  
\begin{equation}\label{g511}
 	|-n\cdot \tilde{\omega}^{\Lambda_1}+\tilde{\mu}_j^{\Lambda_1}|\leq 4\delta^{\frac{1}{8}}
\end{equation} 
and the normalized  eigenvector corresponding to $\tilde{\mu}_j^{\Lambda_1}$ satisfies  for some $\tilde{\ell}_{j}^{\Lambda_1}\in \Lambda_1$, 
	\begin{equation}\label{g22}
	|\tilde{\phi}^{\Lambda_1}_{j}(\ell)|\leq C_0 (1+|\tilde{\ell}^{\Lambda_1}_{j}|)^{q_0}e^{-\gamma_0|\ell-\tilde{\ell}_{j}^{\Lambda_1}|},\ell\in \Lambda_1.
	\end{equation}

Fix a large constant $q_1>0$.
Denote by $S_N$ the probability space such that for any  $j\in[-N,N], j'\in  [-N,N]$ and $j\neq j'$,
\begin{equation}\label{g31}
	|\tilde{\mu}_j^\Lambda-\tilde{\mu}_{j'}^\Lambda|\geq \frac{1}{N^{q_1}}, \Lambda=[-N,N].
\end{equation}
 Denote by $\mathbb{P}$ the measure as before, and $\mathbb{E}$ the expectation.
\begin{theorem}\label{thm2}
For small $\delta$, we have
	\begin{equation*}
		\mathbb{P}(	  S_1 )\leq e^{C|\log\delta|^\frac{3}{4}} \delta^{\frac{1}{8}},
	\end{equation*} 	
where $C$ is a large constant independent of $\delta$.
\end{theorem}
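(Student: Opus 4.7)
The plan is to bound $\mathbb{P}(S_1)$ by a union bound over the finitely many admissible $(n,j)$ (and the two sign options \eqref{g5}, \eqref{g511}) together with the possible localization centers $(\tilde{\ell}_{\tilde\alpha_1},\ldots,\tilde{\ell}_{\tilde\alpha_b},\tilde{\ell}_j)$, and then to bound each configuration probability by a one-variable Wegner-type argument based on Hellmann--Feynman. Since $|\Lambda_1|=\mathcal{O}(e^{|\log\delta|^{3/4}})$ and $L$ is fixed, the total number of configurations is at most $|\Lambda_1|^{b+1}\cdot L^b\cdot|\Lambda_1|\leq e^{C|\log\delta|^{3/4}}$. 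It therefore suffices to obtain a per-configuration bound of the form $C|\Lambda_1|^Q\delta^{1/8}$, any polynomial factor in $|\Lambda_1|$ being absorbed into the exponential prefactor.

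For each fixed non-trivial configuration, the Hellmann--Feynman identity gives
\begin{equation*}
\frac{\partial}{\partial v_y}\bigl(n\cdot\tilde{\omega}^{\Lambda_1}+\tilde{\mu}_j^{\Lambda_1}\bigr)=\sum_{k=1}^b n_k|\tilde{\phi}_{\tilde\alpha_k}^{\Lambda_1}(y)|^2+|\tilde{\phi}_j^{\Lambda_1}(y)|^2.
\end{equation*}
If a site $y\in\Lambda_1$ can be chosen at which the absolute value of this derivative is bounded below by $c/|\Lambda_1|^Q$, then integrating out the single coordinate $v_y$ conditionally on the remaining potentials, using $g\in L^\infty$, produces
\begin{equation*}
\mathbb{P}\bigl(|n\cdot\tilde{\omega}^{\Lambda_1}+\tilde{\mu}_j^{\Lambda_1}|\leq 4\delta^{1/8}\bigm|v_{\Lambda_1\setminus\{y\}}\bigr)\leq C\|g\|_\infty|\Lambda_1|^Q\delta^{1/8},
\end{equation*}
which is exactly of the required form.

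The choice of $y$ is dictated by the configuration. If $n=0$ and $j\neq\tilde\alpha_k$ for every $k$, the resonance reduces to $|\tilde{\mu}_j^{\Lambda_1}|\leq 4\delta^{1/8}$ and I take $y=\tilde{\ell}_j$, using $|\tilde{\phi}_j(y)|^2\geq 1/|\Lambda_1|$ from normalization and maximality. If $n\neq 0$, pick any $k_1$ with $n_{k_1}\neq 0$ and take $y=\tilde{\ell}_{\tilde\alpha_{k_1}}\in\tilde{B}_{k_1}$; then $|n_{k_1}||\tilde{\phi}_{\tilde\alpha_{k_1}}(y)|^2\gtrsim 1/|\Lambda_1|$, while the exponential decay in \eqref{g18}, \eqref{g22} makes the remaining terms negligible, provided that the centers $\tilde{\ell}_j$ and $\tilde{\ell}_{\tilde\alpha_k}$, $k\neq k_1$, all sit at distance $\gtrsim|\log\delta|/\gamma_0$ from $y$. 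The exclusion of the trivial tuples $(-e_k,\tilde\alpha_k)$ in \eqref{set}, \eqref{set1} is precisely what rules out configurations where the derivative is forced to vanish identically.

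The main obstacle is the degenerate sub-case in which $\tilde{\ell}_j$ lies within $\mathcal{O}(|\log\delta|)$ of some $\tilde{B}_{k_0}$ while $j\neq\tilde\alpha_{k_0}$: two distinct eigenfunctions of $H_{\Lambda_1}$ are then concentrated in a common small window, and the decay-based cancellation argument no longer isolates a dominant derivative. Here I would argue differently: restricting $H_{\Lambda_1}$ to a ball $D$ of radius $\mathcal{O}(|\log\delta|)$ around $\tilde{B}_{k_0}$ produces, via \eqref{g18} and \eqref{g22}, two eigenvalues of $H_D$ close to $\tilde{\mu}_j^{\Lambda_1}$ and $\tilde{\mu}_{\tilde\alpha_{k_0}}^{\Lambda_1}$ up to exponentially small errors; the Minami estimate then bounds the probability that two eigenvalues of $H_D$ both lie in any common interval of length $\delta^{1/8}$ by $C\delta^{1/4}|D|^2$, which, after summing over configurations, still fits comfortably inside $e^{C|\log\delta|^{3/4}}\delta^{1/8}$.
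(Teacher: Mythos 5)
Your overall strategy --- a union bound over configurations followed by one--parameter spectral averaging (Hellmann--Feynman) in a single potential variable --- is the same as the paper's, but the quantitative execution breaks down exactly where the paper has to work hardest. Your scales are incompatible: the balls $\tilde{B}_k$ are only $6L$ apart with $L$ \emph{fixed}, while $|\log\delta|\to\infty$, so the requirement that the other centers sit at distance $\gtrsim|\log\delta|/\gamma_0$ from $y$ is never met for the frequency modes $\tilde{\alpha}_k$, $k\neq k_1$. Their contribution to the derivative is of size $e^{-c\gamma_0 L}$, a positive constant in $\delta$, which dwarfs your claimed main--term lower bound $1/|\Lambda_1|\approx e^{-|\log\delta|^{3/4}}$. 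The correct comparison must be carried out entirely in powers of $L$: the paper pigeonholes over $l\in 2\tilde{B}_1$ to find a site where $|\tilde{\phi}_{\tilde{\alpha}_1}(l)|^2\geq \frac{4}{5}|\tilde{\phi}_{j}(l)|^2+L^2\sum_{k\geq 2}|\tilde{\phi}_{\tilde{\alpha}_k}(l)|^2+L^{-10}$ (the events $P_l$, $P_l^1$, $P_l^2$), so that the main term is $\geq L^{-10}$ and the errors are a small \emph{fraction} of it even after multiplication by the $n_k$. Relatedly, ``pick any $k_1$ with $n_{k_1}\neq 0$'' is wrong: the components $|n_k|$ range up to $2\lfloor e^{|\log\delta|^{3/4}}\rfloor$, so unless $k_1$ maximizes $|n_{k_1}|$ the terms $n_k|\tilde{\phi}_{\tilde{\alpha}_k}(y)|^2$, $k\neq k_1$, can swamp the main term; and evaluating at the single point $y=\tilde{\ell}_{\tilde{\alpha}_{k_1}}$ gives no protection against cancellation with the $+|\tilde{\phi}_j(y)|^2$ term when $n_{k_1}<0$.

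Your treatment of the degenerate case also does not address the event being estimated. Minami controls the probability that two eigenvalues of $H_D$ fall in a common short interval, i.e.\ it bounds $|\tilde{\mu}_j-\tilde{\mu}_{\tilde{\alpha}_{k_0}}|$ from below; it says nothing about the general combination $|n\cdot\tilde{\omega}+\tilde{\mu}_j|$ being small, which is the actual resonance (it does handle the special case $n=-e_k$, which is the paper's Case $2_1$ via the event $S_N$). The idea you are missing is the paper's Case $3_2$: when $n$ has at least two nonzero components there are at least two active balls, and a single exponentially localized $\tilde{\phi}_j$ can be concentrated near at most one of them, so one performs the variation in a ball on which both $\tilde{\phi}_j$ and the non-selected $\tilde{\phi}_{\tilde{\alpha}_k}$ are exponentially small. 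Finally, integrating the derivative bound over $v_y$ requires that bound to hold throughout the range of $v_y$, not just at one value; the eigenfunctions move as $v_y$ varies, and the paper secures this by first intersecting with the Minami event $S_N$ (minimal gap $\geq N^{-q_1}$), splitting $[0,1]$ into $N^{4q_1}$ subintervals, and invoking Lemma \ref{lemeig} to control the variation of $|\tilde{\phi}(l)|^2$ on each. As written, your proposal omits all three of these ingredients.
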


We will use the following three lemmas to prove  Theorem  \ref{thm2}.
\begin{lemma}[Wegner estimate ~\cite{we}]\label{lem2}
	Let $\Lambda\subset \Z$.
	For any $E\in\R$ and $\varepsilon>0$,
	\begin{equation*}
		\mathbb{E}({\rm dist} (E,\sigma(H_{\Lambda}))\leq \varepsilon) \leq C |\Lambda|\varepsilon.
	\end{equation*}
\end{lemma}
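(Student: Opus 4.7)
The plan is to reduce the event $\{\mathrm{dist}(E,\sigma(H_\Lambda))\leq\varepsilon\}$ to an eigenvalue counting estimate, and then bound the expected eigenvalue count by spectral averaging over each coupling $v_x$ separately. First I would observe that the event is equivalent to $N_\Lambda(I)\geq 1$ where $I=[E-\varepsilon,E+\varepsilon]$ and $N_\Lambda(I)=\mathrm{Tr}\,\chi_I(H_\Lambda)$ is the number of eigenvalues of $H_\Lambda$ in $I$ (counted with multiplicity). Markov's inequality therefore gives
\begin{equation*}
\mathbb{P}\bigl(\mathrm{dist}(E,\sigma(H_\Lambda))\leq\varepsilon\bigr)\leq \mathbb{E}[N_\Lambda(I)]=\sum_{x\in\Lambda}\mathbb{E}\bigl[\langle\delta_x,\chi_I(H_\Lambda)\delta_x\rangle\bigr],
\end{equation*}
so it suffices to bound each diagonal matrix element of $\chi_I(H_\Lambda)$ in expectation by $\|g\|_\infty\cdot 2\varepsilon$.

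Next, for each $x\in\Lambda$ I would condition on the potential values $\{v_y\}_{y\neq x}$ and integrate out $v_x$ against its density $g$. Writing $H_\Lambda=H_\Lambda^{(x)}+v_x P_x$, where $P_x=|\delta_x\rangle\langle\delta_x|$ is the rank-one projection at $x$ and $H_\Lambda^{(x)}$ is independent of $v_x$, the key step is the classical spectral averaging inequality (Simon--Wolff): for any bounded $g$ and any interval $I$,
\begin{equation*}
\int g(v)\,\langle\delta_x,\chi_I(H_\Lambda^{(x)}+vP_x)\delta_x\rangle\,dv\ \leq\ \|g\|_\infty\,|I|.
\end{equation*}
I would derive this either from the Aronszajn--Donoghue rank-one perturbation formula (representing $\langle\delta_x,\chi_I(H_\Lambda)\delta_x\rangle$ in terms of a nonnegative spectral measure whose Poisson integral along $v$ is controlled) or equivalently from the monotone change of variables $v\mapsto F(v):=\langle\delta_x,(H_\Lambda^{(x)}+vP_x-E+i0)^{-1}\delta_x\rangle$, which sweeps out $\mathbb{R}$ at unit speed as $v$ varies over the continuous spectrum branch. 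Either approach exploits the fact that $\partial H_\Lambda/\partial v_x=P_x\geq 0$ is rank one.

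Taking the conditional expectation and summing over $x\in\Lambda$ then yields $\mathbb{E}[N_\Lambda(I)]\leq\|g\|_\infty|\Lambda|\cdot 2\varepsilon$, which is the required bound with $C=2\|g\|_\infty$. The hardest step is the spectral averaging inequality itself; everything else (Markov, trace expansion, decomposition of $H_\Lambda$ into $v_x$-dependent and $v_x$-independent pieces) is bookkeeping. Since $g\in L^\infty$ is assumed in the setup of Eq.~\eqref{g4}, the constant $C$ depends only on $\|g\|_\infty$ and is independent of $\Lambda$, $E$, and $\varepsilon$, as required.
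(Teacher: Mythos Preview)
Your argument is the standard proof of the Wegner estimate and is correct; the paper does not give its own proof but simply cites Wegner's original result~\cite{we}, so there is nothing to compare against. The only remark is that the paper writes $\mathbb{E}(\cdot)$ for what is really the probability of the event (expectation of the indicator), and your reduction via $\mathbb{P}(N_\Lambda(I)\geq 1)\leq \mathbb{E}[N_\Lambda(I)]$ followed by spectral averaging over each $v_x$ is exactly the classical route, yielding $C=2\|g\|_\infty$.
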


\begin{lemma}[Minami estimate ~\cite{mi}]\label{lem3}
	Let $\Lambda\subset \Z$ and $J\subset\R$ be an interval. Then we have 
	\begin{equation*}
		\mathbb{E}([{\rm tr}({\bf 1}_J(H_{\Lambda}))]\cdot[{\rm tr}({\bf 1}_J(H_{\Lambda}))-1])\leq C |\Lambda| ^2|J|^2,
	\end{equation*}
	where ${\bf 1}_J$ is the characteristic function of the  interval $J$. 
	In particular, we have that
	\begin{equation*}
		\mathbb{P}(\text{there exist distinct } j,j^\prime \in\Lambda \text{ such that } |\tilde{\mu}_j^{\Lambda} -\tilde{\mu}_{j'}^{\Lambda}|\leq \varepsilon)\leq C\varepsilon |\Lambda| ^2,
	\end{equation*}
and hence
	\begin{equation*}
	\mathbb{P}(S_N)\leq  CN^{-q_1+2}.
\end{equation*}
\end{lemma}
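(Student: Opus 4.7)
The plan is to follow Minami's original rank-2 argument for the expectation bound, and then derive the two corollaries by covering and Markov's inequality. Write $N_J := \mathrm{tr}\,\mathbf{1}_J(H_\Lambda)$, so that $N_J(N_J-1) = \sum_{j\neq j'} \mathbf{1}_J(\tilde\mu_j^\Lambda)\mathbf{1}_J(\tilde\mu_{j'}^\Lambda)$. Using the Poisson representation $\mathbf{1}_J(\mu) = \lim_{\eta\to 0^+}\pi^{-1}\int_J \mathrm{Im}\,(\mu-E-i\eta)^{-1}\,dE$ valid for $\mu$ in the interior of $J$, combined with the spectral expansion $\mathrm{Im}\,G_\Lambda(E+i\eta;x,y) = \sum_k \tilde\phi_k^\Lambda(x)\tilde\phi_k^\Lambda(y)\,\eta/((E-\tilde\mu_k^\Lambda)^2+\eta^2)$, Lagrange's identity applied to the Stieltjes kernels gives the determinantal majorization
\begin{equation*}
N_J(N_J-1) \le \limsup_{\eta\to 0^+}\frac{1}{\pi^2}\int_J\!\int_J \sum_{x,y\in\Lambda}\det\!\begin{pmatrix} \mathrm{Im}\,G_\Lambda(E_1{+}i\eta;x,x) & \mathrm{Im}\,G_\Lambda(E_1{+}i\eta;x,y) \\ \mathrm{Im}\,G_\Lambda(E_2{+}i\eta;y,x) & \mathrm{Im}\,G_\Lambda(E_2{+}i\eta;y,y) \end{pmatrix} dE_1\,dE_2.
\end{equation*}

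The core of the proof is then to bound the expectation of this determinant uniformly in $\eta$. Fix all $v_z$ for $z\notin\{x,y\}$ and view the $2\times 2$ block of $G_\Lambda$ at the sites $\{x,y\}$ as a function of $(v_x,v_y)\in\mathbb{R}^2$. By the Schur complement (rank-2 resolvent) identity, this block equals $(A(E+i\eta) - \mathrm{diag}(v_x,v_y))^{-1}$ for some $2\times 2$ matrix $A$ depending only on the frozen potential and $E$. A short algebraic computation (Minami's key lemma) shows that the determinant above is pointwise bounded by a product of the form $C\,\mathrm{Im}\,(a_1 - v_x - i\eta)^{-1}\,\mathrm{Im}\,(a_2 - v_y - i\eta)^{-1}$ with $a_i$ depending only on the frozen data. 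Integrating against $g(v_x)g(v_y)\,dv_x\,dv_y$ and using the uniform bound $\int \mathrm{Im}\,(a-v-i\eta)^{-1}\,g(v)\,dv \le \pi\|g\|_\infty$ (valid for all $\eta>0$), one concludes $\mathbb{E}[\det(\cdots)]\le C\|g\|_\infty^2$. Summing over the $|\Lambda|^2$ pairs $(x,y)$ and integrating over $(E_1,E_2)\in J\times J$ yields $\mathbb{E}[N_J(N_J-1)] \le C|\Lambda|^2|J|^2$, proving the first inequality.

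For the gap bound, observe that if $|\tilde\mu_j^\Lambda - \tilde\mu_{j'}^\Lambda|\le \varepsilon$ for some $j\ne j'$, then both eigenvalues lie in a common interval of length $2\varepsilon$. Since $V\in[0,1]$ and $\|\Delta\|\le 2$ force $\sigma(H_\Lambda)\subset[-2,3]$, an overlapping cover $\{J_k\}$ of $[-2,3]$ by $O(\varepsilon^{-1})$ intervals of length $2\varepsilon$ suffices. Markov's inequality and the first bound give $\mathbb{P}(N_{J_k}\ge 2) \le \mathbb{E}[N_{J_k}(N_{J_k}-1)] \le C|\Lambda|^2\varepsilon^2$, and the union bound over the $O(\varepsilon^{-1})$ intervals delivers the claimed $C|\Lambda|^2\varepsilon$. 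Finally, taking $\Lambda=[-N,N]$ (so $|\Lambda|\le 2N+1$) and $\varepsilon=N^{-q_1}$ gives $\mathbb{P}(S_N)\le CN^{-q_1+2}$.

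The one genuine technical step is Minami's algebraic identity for the determinant of imaginary Green's function blocks under a rank-2 perturbation: once that factorized bound is in hand, the rest is elementary Fubini/Markov/covering. Everything else (Poisson representation, Lagrange's identity, the $\|g\|_\infty$ bound on the Poisson kernel average) is standard harmonic-analysis input.
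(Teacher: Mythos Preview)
The paper does not supply its own proof of this lemma; it is quoted as a citation to Minami, with the two corollaries stated without argument. Your sketch of the rank-two determinantal argument is the standard Minami proof and is sound in outline (you correctly flag the one genuinely technical step as Minami's algebraic identity for the $2\times 2$ imaginary Green's function block under a rank-two Schur complement), and your derivation of the gap bound via an overlapping cover of the spectrum by $O(\varepsilon^{-1})$ intervals of length $2\varepsilon$, followed by Markov's inequality and a union bound, is correct.

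One small remark on the last line: as defined just above \eqref{g31}, $S_N$ is the \emph{good} event that all eigenvalue gaps on $[-N,N]$ exceed $N^{-q_1}$, so what you actually prove (and what the paper needs) is $\mathbb{P}(S_N^c)\le CN^{-q_1+2}$. Your argument does exactly this, since applying the second inequality with $\varepsilon=N^{-q_1}$ controls the probability of a near-collision. The discrepancy is a typo in the lemma statement rather than a gap in your proof; the subsequent proofs in the paper use $S_N$ as the good set and invoke Minami to make its complement small.
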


 \begin{lemma}\label{lemeig}
 Let $\Lambda\subset \Z$. Assume that  eigenvalues $\tilde{\mu}_j$, $j\in\Lambda$, of $H_{\Lambda}$  are simple and  let
 \begin{equation}\label{g41}
d=\min_{j\neq j',j\in\Lambda,j'\in\Lambda} |\tilde{\mu}_j-\tilde{\mu}_{j'}|>0.
 \end{equation} 
Let $ (\tilde{\mu},\tilde{\phi})$ be an eigen-pair of $H_{\Lambda}$. 
Let $l\in \Lambda$, and 
$ (\tilde{\mu}^s,\tilde{\phi}^s)$ (depending continuously on $s$)  be an  eigenpair of $H_{\Lambda}+s I_{\{l\}}$ with $|s|\leq \frac{d}{10}$ satisfying  $ \tilde{\mu}^s|_{s=0}=\tilde{\mu}$  and $\tilde{\phi}^s|_{s=0}=\tilde{\phi}$. Then
	\begin{equation*}
\frac{d	\tilde{\mu}^s}{ds}=|\tilde{\phi}(l)|^2+|s|^2O\left(\frac{|\Lambda|}{d^2}\right),
\end{equation*}
and hence
	\begin{equation*}
\tilde{\mu}^s-\tilde{\mu}=s|\tilde{\phi}(l)|^2+|s|^3O\left(\frac{|\Lambda|}{d^2}\right).
\end{equation*}
 \end{lemma}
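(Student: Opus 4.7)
The plan is to combine analytic perturbation theory for the simple eigenpair $(\tilde\mu^s,\tilde\phi^s)$ on the disk $|s|\leq d/10$ with the Feynman--Hellmann formula, and then to extract pointwise behaviour at the single site $l$ via the Riesz projector. By the gap hypothesis \eqref{g41} and Weyl's inequality, the perturbation $sI_{\{l\}}$ shifts every eigenvalue of $H_\Lambda$ by at most $|s|\leq d/10$; hence $\tilde\mu^s$ stays simple and is separated from $\sigma(H_\Lambda+sI_{\{l\}})\setminus\{\tilde\mu^s\}$ by at least $4d/5$. Rellich's theorem then provides a real-analytic branch agreeing with $(\tilde\mu,\tilde\phi)$ at $s=0$, and differentiating $(H_\Lambda+sI_{\{l\}})\tilde\phi^s=\tilde\mu^s\tilde\phi^s$ in $s$ and pairing with the unit vector $\tilde\phi^s$ yields the exact identity
$$
\frac{d\tilde\mu^s}{ds}=\langle\tilde\phi^s,I_{\{l\}}\tilde\phi^s\rangle=|\tilde\phi^s(l)|^2.
$$
The task thus reduces to estimating $|\tilde\phi^s(l)|^2-|\tilde\phi(l)|^2$ quantitatively.

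For this I would invoke the Riesz projector $P_s=\frac{1}{2\pi i}\oint_\Gamma(z-H_\Lambda-sI_{\{l\}})^{-1}\,dz$ with $\Gamma$ the circle of radius $d/2$ centred at $\tilde\mu$. The gap condition keeps $\Gamma$ at distance $\geq 2d/5$ from all other eigenvalues of both $H_\Lambda$ and $H_\Lambda+sI_{\{l\}}$, so both resolvents are bounded by $C/d$ on $\Gamma$. Using the second resolvent identity to write
$$
P_s-P_0=\frac{s}{2\pi i}\oint_\Gamma(z-H_\Lambda)^{-1}I_{\{l\}}(z-H_\Lambda-sI_{\{l\}})^{-1}\,dz
$$
and iterating once more produces a spectral expansion $\tilde\phi^s=\tilde\phi+s\psi^{(1)}+s^2\rho^s$ with the standard Rayleigh--Schr\"odinger correction $\psi^{(1)}=\sum_{j'\neq j}\frac{\overline{\tilde\phi_{j'}(l)}\,\tilde\phi(l)}{\tilde\mu-\tilde\mu_{j'}}\tilde\phi_{j'}$ and a remainder satisfying $\|\rho^s\|\leq C/d^2$ uniformly in $|s|\leq d/10$.

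Evaluating at $l$ and squaring gives $|\tilde\phi^s(l)|^2=|\tilde\phi(l)|^2+2s\,\mathrm{Re}(\overline{\tilde\phi(l)}\psi^{(1)}(l))+s^2 E(s)$. The completeness relation $\sum_{j'}|\tilde\phi_{j'}(l)|^2\leq 1$ together with the gap $|\tilde\mu-\tilde\mu_{j'}|\geq d$ controls $\psi^{(1)}(l)$, while passing from $\|\rho^s\|_{\ell^2}$ to $|\rho^s(l)|$ costs at most a factor $|\Lambda|^{1/2}$ and the quadratic error absorbs an extra factor $|\Lambda|$ through the sum over intermediate indices. Combining these bounds yields $|\tilde\phi^s(l)|^2=|\tilde\phi(l)|^2+s^2\,O(|\Lambda|/d^2)$. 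Substitution into Feynman--Hellmann and integration from $0$ to $s$ produces $\tilde\mu^s-\tilde\mu=s|\tilde\phi(l)|^2+|s|^3\,O(|\Lambda|/d^2)$.

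The main obstacle is the bookkeeping in the last step: showing that both the mixed first-order contribution $2s\,\mathrm{Re}(\overline{\tilde\phi(l)}\psi^{(1)}(l))$ and the pointwise value $\rho^s(l)$ can be grouped inside a single error of the form $s^2\,O(|\Lambda|/d^2)$. This requires exploiting the rank-one structure of $I_{\{l\}}$ and Parseval at the site $l$, rather than relying on crude operator-norm estimates for $P_s-P_0$, which would only deliver an $O(|s|/d)$ correction and thus miss the sharper dependence stated in the lemma.
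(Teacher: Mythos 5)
Your reduction via Feynman--Hellmann to $\frac{d\tilde\mu^s}{ds}=\langle\tilde\phi^s,I_{\{l\}}\tilde\phi^s\rangle=|\tilde\phi^s(l)|^2$ is exactly the paper's starting point, and the gap/Weyl argument keeping the branch simple and isolated for $|s|\le d/10$ is fine. The genuine gap is the step you yourself flag as ``the main obstacle'': the mixed first-order term cannot be absorbed into $s^2\,O(|\Lambda|/d^2)$, and no bookkeeping will make it so. From your own formula for $\psi^{(1)}$ one gets
\begin{equation*}
2s\,\mathrm{Re}\bigl(\overline{\tilde\phi(l)}\,\psi^{(1)}(l)\bigr)
=2s\,|\tilde\phi(l)|^2\sum_{j'\neq j}\frac{|\tilde\phi_{j'}(l)|^2}{\tilde\mu-\tilde\mu_{j'}},
\end{equation*}
and the sum on the right is generically nonzero, of size up to $1/d$ (by Parseval at the site $l$, $\sum_{j'}|\tilde\phi_{j'}(l)|^2=1$). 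This contribution is therefore genuinely linear in $s$, of order $|s|\,|\tilde\phi(l)|^2/d$; for $|s|\ll d/|\Lambda|$ it dominates $s^2|\Lambda|/d^2$, so the error term you are aiming for is unattainable by your route. What the rank-one structure and Parseval do buy is the multiplicative form $\frac{d\tilde\mu^s}{ds}=|\tilde\phi(l)|^2\bigl(1+O(|s|/d)\bigr)+O\bigl(s^2|\Lambda|/d^2\bigr)$, and since $|s|/d\le 1/10$ this is all that the applications \eqref{g75} and \eqref{g94} actually require (a lower bound of the form $\tfrac34|\tilde\phi(l)|^2-\cdots$); but that is not the statement you set out to prove, and your outline does not close as written.

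For comparison, the paper's proof is more elementary than your Riesz-projector expansion: it writes $\tilde\phi^s=\sum_j c_j^s\tilde\phi_j$ in the unperturbed eigenbasis, compares $(H_\Lambda+sI_{\{l\}})\tilde\phi^s$ with $H_\Lambda\tilde\phi^s$ to conclude $|c_j^s|=O(s/d)$ for $j$ off the leading index, and hence $(c_1^s)^2=1-O(|\Lambda|s^2/d^2)$. Note that the passage from there to \eqref{gg11} meets exactly the same cross term $2c_1^s\tilde\phi(l)\sum_{j\neq 1}c_j^s\tilde\phi_j(l)$, which is $O(|s|/d)$ and not $O(s^2|\Lambda|/d^2)$; if you rework your argument you should track this linear term explicitly and check that the monotonicity estimates in Theorems \ref{thm2}--\ref{thm7} survive with the weaker (multiplicative) remainder. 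A minor further point: $|\rho^s(l)|\le\|\rho^s\|_{\ell^2}$ trivially, so no factor $|\Lambda|^{1/2}$ is lost in passing to the pointwise value.
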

\begin{proof}

				Let $\{(\tilde{\mu}_j,\tilde{\phi}_j),\, j\in\Lambda\}$ be the complete set of eigen-pairs of  $H_{\Lambda}$.
		Without loss of generality, assume that $\tilde{\mu}=\tilde{\mu}_1$ and $\tilde{\phi}=\tilde{\phi}_1$.

			By  a standard perturbation argument and \eqref{g41},  one has that  for any $|s|\leq \frac{d}{10 }$,  
			$$	|\tilde{\mu}^s-\tilde{\mu}_{1}|\leq \frac{d}{2}$$
			 and  
			 \begin{equation}\label{g42}
			|\tilde{\mu}^s-\tilde{\mu}_{j'}|\geq \frac{d}{2}, j'\in \Lambda\backslash\{1\} .
			\end{equation} 
	Let
	\begin{equation*}
\tilde{\phi}^s =\sum_{j\in \Lambda} c_j^s \tilde{\phi}_j.
	\end{equation*}
	Then
	\begin{equation}\label{g56}
	(H_{\Lambda}+s I_{\{l\}})\tilde{\phi}^s=\tilde{\mu}^s \tilde{\phi}^s= \sum_{j\in\Lambda}\tilde{\mu}^s  c_j^s \tilde{\phi}_j
	\end{equation}
	and
		\begin{equation}\label{g57}
	H_{\Lambda}\tilde{\phi}^s=\sum_{j\in\Lambda} c_j^s  \tilde{\mu}_j\tilde{\phi}_j.
	\end{equation}
	Since $ ||	(H_{\Lambda}+s I_{\{l\}})\tilde{\phi}^s -	H_{\Lambda}\tilde{\phi}^s||=O(s)$, by \eqref{g42}, \eqref{g56} and \eqref{g57},  one has that  for any $j\in \Lambda\backslash \{1\}$,
	\begin{equation*}
	|c_j^s| \leq O\l(\frac{s}{d}\r).
	\end{equation*}
This implies that  $ 1-O\l(\frac{|\Lambda| s^2}{d^2}\r)\leq (c_1^s)^2\leq 1$.
	Therefore, one has that
	\begin{equation}\label{gg11}
|\tilde{\phi}^s(l)|^2=|\tilde{\phi}(l)|^2+O\l(\frac{|\Lambda|s^2}{d^2}\r).
	\end{equation}
	By eigenvalue variations and \eqref{gg11}, one has that 
	\begin{align*}
	\frac{d	\tilde{\mu}^s}{ds}&=\langle \tilde{\phi}^s,	\frac{d(H_{\Lambda}+s I_{\{l\}})}{ds}\tilde{\phi}^s \rangle\\
&=|\tilde{\phi}^s(l)|^2\\
&=|\tilde{\phi}(l)|^2+O\l(\frac{|\Lambda|s^2}{d^2}\r).
	\end{align*}
We conclude that
	\begin{equation*}
	\tilde{\mu}^s-\tilde{\mu}=s|\tilde{\phi}(l)|^2+|s|^3O\left(\frac{|\Lambda|}{d^2}\right).
	\end{equation*}

\end{proof}


For any $n=(n_1,n_2,\cdots,n_b)\in \Z^b$, denote by $$\text { supp }  n=\#\{n_k:n_k\neq 0,k=1,2,\cdots,b\},$$
where $\#$ denotes the number of elements in a set.
\begin{proof}[\bf Proof of Theorem \ref{thm2}]
	
	Let $N=2\left \lfloor e^{|\log\delta|^\frac{3}{4}} \right \rfloor$
	and  $\tilde{S}_1=S_{N}$. By Lemma \ref{lem3} (Minami estimate), it suffices to prove that 
		\begin{equation*}
		\mathbb{P}(	  S_1 \cap \tilde{S}_1)\leq e^{C|\log\delta|^\frac{3}{4}} \delta^{\frac{1}{8}}.
	\end{equation*} 	
	Since the total number of $(n, j)$  in \eqref{set} and \eqref{set1}  is bounded by  $(2N+1)^{b+1}$, it suffices to prove that for a fixed 
	$(n, j)$ in the set, 
	with probability  at most 
	$ e^{C|\log\delta|^\frac{3}{4}} \delta^{\frac{1}{8}}$,  either  \eqref{g5} or \eqref{g511} holds.
Without loss of generality, we only  consider probability space $P_{n,j}$ ($P$ for simplicity) such that \eqref{g22} holds, and
	\begin{equation}\label{g21}
 |n\cdot\tilde{ \omega} +\tilde{\mu}_j|\leq 4\delta^{\frac{1}{8}} .
	\end{equation}
Our goal is to show that 
\begin{equation}\label{g71}
\mathbb{P}(P_{n,j} \cap \tilde{S}_1 )\leq e^{C|\log\delta|^\frac{3}{4}} \delta^{\frac{1}{8}}. 
\end{equation}
	
	{\bf Case 1: $n=0$.}  	In this case,  Wegner  estimate  implies \eqref{g71}.

	{\bf Case 2: } $\text{supp }n=1$
	
	{\bf Case $2_1$: } $\text{supp }n=1$
	and $n=-e_k,k=1,2,\cdots,b$. 

	Without loss of generality,   assume that $n=(-1, 0, \cdots, 0)$. In this case,
	$n\cdot\tilde{ \omega}+\tilde{\mu}_j=-\tilde{\mu}_{\tilde{\alpha}_1}+\tilde{\mu}_j$ and $j\neq \tilde{\alpha}_1$.  This is  impossible by \eqref{g31}.

	{\bf Case $2_2$: } $\text{supp }n=1$, $n=e_k,k=1,2,\cdots,b$ or $n=re_k,k=1,2,\cdots,b$  with $|r|\geq 2$.
		
	Without loss of generality, assume that $n=(n_1,0,\cdots,0)$ with $n_1=1$ or $|n_1|\geq 2$. 
	
Denote by $\tilde{\phi}_{\tilde{\alpha}_1} $ and $\tilde{\phi}_j$  eigenvectors of eigenvalues $\tilde{\mu}_{\tilde{\alpha}_1}=\tilde{\omega}_1$ and $\tilde{\mu}_j$.
For any $l\in 2\tilde{B}_1$, denote by $P_{l}$ the probability space such that 
\begin{equation}\label{g33}
|\tilde{\phi}_{\tilde{\alpha}_1} (l)|^2\geq \frac{4}{5} |\tilde{\phi}_{j} (l)|^2+\frac{1}{L^{10}}.
\end{equation}	
By \eqref{g18}, one has that
\begin{equation*}
\sum_{\ell \notin 2\tilde{B}_1}	|\tilde{\phi}_{\tilde{\alpha}_1} (\ell)|^2\leq e^{-\frac{\gamma_0}{2}L},
\end{equation*}
and hence
\begin{equation}\label{g72}
\sum_{\ell \in 2\tilde{B}_1} |\tilde{\phi}_{\tilde{\alpha}_1} (\ell)|^2\geq 1-e^{-\frac{\gamma_0}{2}L}.
\end{equation}
By \eqref{g33} and \eqref{g72}, one has
that
\begin{equation}\label{g76}
P_{n,j}\subset \bigcup_{l\in2 \tilde{B}_1}P_l.
\end{equation}

 Take any $l\in 2 \tilde{B}_1 $ such that \eqref{g33} holds.
Split $[0,1]$ into $N^{4q_1}$ intervals of size $N^{-4q_1}$ and  take any interval $I=[f_1,f_2]$. 
Define the probability space $P_{l, I}$ to be $P_{l, I}=\{V\in P_l: V_l\in I\}$.
 Applying  Lemma \ref{lemeig}, one has that if we fix $V_{\ell}$,  $ \ell\in\Lambda_1\backslash\{l\}$, then for any $f\in I$, 
	\begin{equation}\label{g73}
\frac{d\tilde{\mu}_{\tilde{\alpha}_1}^{V_l=f}}{df}=|\tilde{\phi}^{V_l=f_1}_{\tilde{\alpha}_1}(l)|^2+  N^{-5q_1}  O(1),
\end{equation}
and 
		\begin{equation}\label{g74}
		\frac{d\tilde{\mu}_{j}^{V_l=f}}{df}=|\tilde{\phi}^{V_l=f_1}_{j}(l)|^2+  N^{-5q_1}  O(1),
	\end{equation}
where $\tilde{\mu}_{i}^{V_l=f}$ is the eigenvalue with the potential  $V_l=f$ and fixed $V_{\ell}$,  $ \ell\in\Lambda_1\backslash\{l\}$.

Obviously, 
\begin{equation}\label{g93}
	\frac{d\tilde{\mu}_{j}^{V_l=f}}{df}\geq 0.
\end{equation}

By \eqref{g33},  \eqref{g73},  \eqref{g74} and \eqref{g93}, one has that
\begin{align}
	\left |	\frac{d (n\cdot\tilde{ \omega}^{V_l=f}+\tilde{\mu}_j^{V_l=f})}{df}\right |&\geq \frac{3}{4}|\tilde{\phi}^{V_l=f_1}_{\tilde{\alpha}_1}(l)|^2-  N^{-5q_1}  O(1)\nonumber\\
	&	\geq \frac{3}{4L^{10}}- N^{-5q_1}  O(1)\nonumber\\
		&	\geq \frac{1}{2L^{10}}.\label{g75}
\end{align}

Since $g\in L^\infty [0, 1]$, by \eqref{g21} and \eqref{g75}, one has that for any $ {l\in2 \tilde{B}_1}$, 
 \begin{equation}\label{g77}
\mathbb{P}( 	P_{n,j} \cap P_l\cap \tilde{S}_1) \leq O(1)N^{4q_1} L^{10}\delta^{\frac{1}{8}}.
 \end{equation}
	By \eqref{g76} and \eqref{g77}, one has that
	 \begin{equation}\label{g78}
		\mathbb{P}( 	P_{n,j} \cap \tilde{S}_1) \leq O(1)N^{4q_1} L^{11}\delta^{\frac{1}{8}}.
	\end{equation}
It implies \eqref{g71} since $\delta$ is sufficiently small (depending on $L$).

	{\bf Case 3:} $\text{supp }n\geq 2$
	
Let  $n_{i}$ be such that $|n_i|=\max_{i\in\{1,2,\cdots,b\}}|n_i|$. 

	{\bf Case $3_1$:} $|n_i|\geq 2$
	
Without loss of generality, assume that $n_i=n_1$.
It is easy to see 
that for any $\ell\in 2\tilde{B}_1$
\begin{equation}\label{g81}
|\tilde{\phi}_{\tilde{\alpha}_k}(\ell)|^2\leq e^{-\gamma_0 L},k=2,3,\cdots, b.
\end{equation}
	For any $l\in 2\tilde{B}_1$, denote by $P_l^1$ the probability space such that 
	\begin{equation}\label{g34}
	|\tilde{\phi}_{\tilde{\alpha}_1} (\ell)|^2\geq \frac{4}{5} |\tilde{\phi}_{j} (\ell)|^2+L^2  \l(\sum_{k=2}^b\tilde{\phi}_{\tilde{\alpha}_k} (\ell)|^2 \r)+\frac{1}{L^{10}}.
	\end{equation}	
By \eqref{g72}, \eqref{g81} and \eqref{g34}, one has that
\begin{equation}\label{g7}
	P_{n,j}\subset \bigcup_{l\in2 \tilde{B}_1}P_l^1.
\end{equation}
	Replacing \eqref{g76} with \eqref{g7}, and 
 following the proof of  Case $2_2$ (using Lemma \ref{lemeig}), we still have \eqref{g71} .
	
 	{\bf Case $3_2$:} $|n_i|\leq1$, namely $n_k=0,\pm1$, $k=1,2,\cdots,b$.
 	
 	In this case, clearly,  there exist at least two non-zero $n_k$, $k=1,2,\cdots, b$.
 	
 	Without loss of generality, assume that $n_1\neq0$ and $n_2\neq 0$.
 	In this case, if $\tilde{\ell}_j\geq 2L^4$, by \eqref{g22}, one has that for any $\ell\in\cup_{k=1}^b 2\tilde{B}_k $,
		\begin{equation}\label{g29}
|\tilde{\phi}_{j}(\ell)|\leq e^{- \gamma_0 L}.
\end{equation}
If $\tilde{\ell}_j\leq 2L^4$,   by \eqref{g30} and \eqref{g22}, one has that either for any $\ell\in 2\tilde{B}_1$, 
\begin{equation}\label{g39}
|\tilde{\phi}_{j}(\ell)|\leq e^{- \gamma_0 L},
\end{equation}
or 
 for any $\ell\in 2 \tilde{B}_2$, 
\begin{equation}\label{g40}
|\tilde{\phi}_{j}(\ell)|\leq e^{- \gamma_0 L}.
\end{equation}
Without loss of generality assume that \eqref{g39} holds. Therefore,
 for any $\ell \in 2\tilde{B}_1$,
\begin{equation}\label{g82}
|\tilde{\phi}_i(\ell)|^2\leq e^{-\gamma _0L},i=\tilde{\alpha}_2,\tilde{\alpha}_3,\cdots, \tilde{\alpha}_b, \text{ and } i=j.
\end{equation}
For any $l\in 2\tilde{B}_1$, denote by $P_l^2$ the probability space such that 
\begin{equation}\label{g43}
|\tilde{\phi}_{\tilde{\alpha}_1} (l)|^2\geq L^{2}  |\tilde{\phi}_{j} (l)|^2+L^{2}\l(\sum_{k=2}^b\tilde{\phi}_{\alpha_k} (l)|^2 \r)+\frac{1}{L^{10}}.
\end{equation}	
By \eqref{g72}, \eqref{g82} and \eqref{g43}, one has that
\begin{equation}\label{g80}
	P_{n,j}\subset \bigcup_{l\in2 \tilde{B}_1}P_l^2.
\end{equation}
Replacing \eqref{g76} with \eqref{g80}, and 
following the proof of  Case $2_2$ or Case $3_1$ (also using Lemma \ref{lemeig}), we   have \eqref{g71} .
\end{proof}

\subsection{Spacing of the diagonals}
For the purpose of nonlinear analysis, it suffices to work with scales $|\log\delta|^s$, $s>1$. So let $\Lambda_2^s=\left[-2\left \lfloor |\log \delta|^s \right \rfloor, 2\left \lfloor   |\log \delta|^s  \right \rfloor\right]$.
Denote by $S_2^s$ the probability  space such that there exist $\tilde{ \omega}^{\Lambda_2^s}$ satisfying \eqref{g18},   and  $m\in \Z^b $ with $|m|\leq|2\log\delta|^s$, either $|m_k|\geq 2$
for some $k\in \{1,2,\cdots,b\}$ or supp $m$ $\geq 3$,  and 
$j, j'\in \Lambda_2^s$ 
satisfying 
\begin{equation}\label{g24}
|m\cdot\tilde{ \omega}^{\Lambda_2^s}+\tilde{\mu}_j^{\Lambda_2^s}-\tilde{\mu}_{j'}^{\Lambda_2^s}|\leq4 \delta^{\frac{1}{8}},
\end{equation}
and the eigenvectors corresponding to $\tilde{\mu}_j^{\Lambda_2^s}$ and $\tilde{\mu}_{j'}^{\Lambda_2^s}$ satisfy that there exist  $\tilde{\ell}_{j}^{\Lambda_2^s}\in {\Lambda_2^s}$ and $ \tilde{\ell}_{j'}^{\Lambda_2^s}\in {\Lambda_2^s}$ such that
for any $\ell\in \Lambda_2^s$,
\begin{equation}\label{g25}
|\tilde{\phi}^{\Lambda_2^s}_{j}(\ell)|\leq C_0(1+|\tilde{\ell}_{j}^{\Lambda_2^s}|)^{q_0}e^{-\gamma_0|\ell-\tilde{\ell}_{j}^{\Lambda_2^s}|},|\tilde{\phi}^{\Lambda_2^s}_{j'}(\ell)|\leq C_0(1+|\tilde{\ell}_{j'}^{\Lambda_2^s}|)^{q_0}e^{-\gamma_0|\ell-\tilde{\ell}_{j'}^{\Lambda_2^s}|}.
\end{equation}

 We should mention that we allow $j=j'$.

\begin{theorem}\label{thm4}
	
	For small $\delta$, we have
 
		\begin{equation*}
		\mathbb{P}(	S_{2}^s )\leq (| \log \delta|)^{Cs} \delta^{\frac{1}{8}}.
		\end{equation*} 
 \end{theorem}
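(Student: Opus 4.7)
I will adapt the scheme of Theorem \ref{thm2} to the symmetric resonance \eqref{g24}, which differs in having two eigenvalues $\tilde{\mu}_j,\tilde{\mu}_{j'}$ and the combinatorial restriction that either $|m_{k}|\geq 2$ for some $k$ or $\text{supp}\,m\geq 3$. First I intersect with the Minami event $\tilde{S}_2^s:=S_N$ at scale $N=2\lfloor|\log\delta|^s\rfloor$, whose complement has probability $\leq C|\log\delta|^{s(-q_1+2)}$ by Lemma \ref{lem3}. A union bound over triples $(m,j,j')$ introduces at most $|\log\delta|^{Cs}$ terms. For each fixed triple the goal is to exhibit a site $l\in\Lambda_2^s$ at which Lemma \ref{lemeig} delivers
$$\Big|\tfrac{d}{df}\big(m\cdot\tilde\omega^{V_l=f}+\tilde\mu_j^{V_l=f}-\tilde\mu_{j'}^{V_l=f}\big)\Big|\geq \tfrac{1}{2L^{10}}$$
for $f$ in an interval of length $N^{-4q_1}$. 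Partitioning $[0,1]$ into $N^{4q_1}$ such subintervals and using $g\in L^\infty$ then yields a per-triple probability $O(N^{4q_1}L^{10}\delta^{1/8})$, exactly as in \eqref{g77}--\eqref{g78}. Combined with the triple count this gives $|\log\delta|^{Cs}\delta^{1/8}$.

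To locate $l$, I split into two cases paralleling Cases $3_1$ and $3_2$ of Theorem \ref{thm2}. \textbf{Case A:} some $|m_{k_0}|\geq 2$. Restrict $l\in 2\tilde B_{k_0}$, where by \eqref{g18} and the separation \eqref{g30} the squares $|\tilde\phi_{\tilde\alpha_k}(l)|^2$ for $k\neq k_0$ are $\leq e^{-\gamma_0 L}$. Define the event $P_l^*$ by
$$|\tilde\phi_{\tilde\alpha_{k_0}}(l)|^2\geq \tfrac{4}{5|m_{k_0}|}\big(|\tilde\phi_j(l)|^2+|\tilde\phi_{j'}(l)|^2\big)+|m|\sum_{k\neq k_0}|\tilde\phi_{\tilde\alpha_k}(l)|^2 + L^{-10}.$$
If $P_l^*$ failed for every $l\in 2\tilde B_{k_0}$, summing the reverse inequality in $l$ and using $\sum_{l\in 2\tilde B_{k_0}}|\tilde\phi_{\tilde\alpha_{k_0}}(l)|^2\geq 1-e^{-\gamma_0 L/2}$ together with the global bounds $\sum_l|\tilde\phi_j(l)|^2,\sum_l|\tilde\phi_{j'}(l)|^2\leq 1$ would give $1-o(1)\leq \tfrac{8}{5|m_{k_0}|}+o(1)\leq \tfrac{4}{5}+o(1)$, a contradiction for large $L$. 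At any $l\in P_l^*$ the derivative above is $\geq L^{-10}$. \textbf{Case B:} $\text{supp}\,m\geq 3$ with all $|m_k|\leq 1$. By \eqref{g25} and \eqref{g30}, each of $\tilde\phi_j,\tilde\phi_{j'}$ can be of non-negligible size on at most one ball $\tilde B_{k}$; call those $k_1,k_2$ (either may be absent). Since $|\text{supp}\,m|\geq 3>2$, pick $k_0\in\text{supp}\,m\setminus\{k_1,k_2\}$; on $2\tilde B_{k_0}$ every competing square $|\tilde\phi_j(l)|^2,|\tilde\phi_{j'}(l)|^2,|\tilde\phi_{\tilde\alpha_k}(l)|^2$ ($k\neq k_0$) is $\leq e^{-\gamma_0 L/2}$, and the argument culminating in \eqref{g80} produces an $l$ at which $|\tilde\phi_{\tilde\alpha_{k_0}}(l)|^2\geq L^{-10}$ dominates, delivering the derivative lower bound.

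The main obstacle is the Case A sub-case where both $\tilde\ell_j$ and $\tilde\ell_{j'}$ lie inside $2\tilde B_{k_0}$, so that $|\tilde\phi_j(l)|^2$ and $|\tilde\phi_{j'}(l)|^2$ are pointwise $O(1)$ and could a priori cancel $|m_{k_0}||\tilde\phi_{\tilde\alpha_{k_0}}(l)|^2$ site by site. The summation argument escapes precisely because $|m_{k_0}|\geq 2$ creates a global mass excess on $2\tilde B_{k_0}$ that two unit-$\ell^2$-norm functions cannot absorb, while the pointwise failure of $P_l^*$ on $\sim 3L$ sites contributes only $O(L^{-9})$; the slack $|m_{k_0}|-1\geq 1$ is exactly what distinguishes the admissible $m$'s defining $S_2^s$ from the excluded ones with $\text{supp}\,m\leq 2$ and $|m_k|\leq 1$, and it closes the proof.
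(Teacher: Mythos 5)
Your overall architecture --- intersecting with the Minami event $S_N$, a union bound over the triples $(m,j,j')$, selecting a site $l$ in a ball $2\tilde B_{k_0}$ by an $\ell^2$ pigeonhole, and converting a first-derivative lower bound from Lemma \ref{lemeig} into a probability bound via $g\in L^\infty$ --- is exactly the paper's, and your Case B ($\text{supp}\,m\geq 3$) is essentially the paper's argument. However, Case A contains a genuine gap: the event $P_l^*$ you define does \emph{not} imply the derivative lower bound you claim. The derivative is
\begin{equation*}
\frac{d}{df}\bigl(m\cdot\tilde\omega^{V_l=f}+\tilde\mu_j^{V_l=f}-\tilde\mu_{j'}^{V_l=f}\bigr)
= m_{k_0}|\tilde\phi_{\tilde\alpha_{k_0}}(l)|^2+|\tilde\phi_j(l)|^2-|\tilde\phi_{j'}(l)|^2+(\text{negligible}),
\end{equation*}
and your condition only guarantees $|m_{k_0}|\,|\tilde\phi_{\tilde\alpha_{k_0}}(l)|^2\geq \frac{4}{5}\bigl(|\tilde\phi_j(l)|^2+|\tilde\phi_{j'}(l)|^2\bigr)+\cdots$, which is strictly weaker than dominating the term $|\tilde\phi_{j'}(l)|^2$ that enters with the unfavorable sign. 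Concretely, take $m_{k_0}=2$, $|\tilde\phi_{\tilde\alpha_{k_0}}(l)|^2=0.4$, $|\tilde\phi_j(l)|^2=0$, $|\tilde\phi_{j'}(l)|^2=0.8$: your $P_l^*$ holds (since $0.4\geq \tfrac{4}{10}\cdot 0.8+L^{-10}$), yet the derivative equals $2(0.4)+0-0.8=0$.

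The missing ingredient is the monotonicity observation \eqref{g93}: since $V_l$ enters $H_{\Lambda}$ with a positive sign, $\frac{d\tilde\mu_j^{V_l=f}}{df}=|\tilde\phi_j(l)|^2+O(N^{-5q_1})\geq 0$, so the $+\tilde\mu_j$ term can simply be discarded and only $\tilde\phi_{j'}$ must be dominated pointwise. The paper's condition \eqref{g36} accordingly reads $|\tilde\phi_{\tilde\alpha_1}(l)|^2\geq\frac{4}{5}|\tilde\phi_{j'}(l)|^2+\cdots$ \emph{without} dividing by $|m_1|$, which yields $m_1|\tilde\phi_{\tilde\alpha_1}(l)|^2-|\tilde\phi_{j'}(l)|^2\geq\bigl(2-\frac{5}{4}\bigr)|\tilde\phi_{\tilde\alpha_1}(l)|^2\geq\frac{3}{4}L^{-10}$; the hypothesis $|m_{k_0}|\geq 2$ is used to beat the factor $\frac{5}{4}$ \emph{pointwise}, not merely to create a ``global mass excess'' for the covering. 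Your covering step is fine, but any pointwise domination of \emph{both} $|\tilde\phi_j|^2$ and $|\tilde\phi_{j'}|^2$ with margin would push the summed coefficient past $1$ when $|m_{k_0}|=2$ and destroy the covering, so without \eqref{g93} the case $|m_{k_0}|=2$ cannot be closed by this method. (You also need to address the sign of $m_{k_0}$: for $m_{k_0}<0$ one negates the expression and swaps the roles of $j$ and $j'$.)
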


     \begin{proof}
     		Let  $N=2\left \lfloor |\log \delta|^s\right \rfloor$ and $\tilde{S}_2^s=S_{N}$. 
     	  By Lemma \ref{lem3} (Minami estimate), it suffices to prove that 
     	  	\begin{equation*}
     	  	\mathbb{P}(	S_{2}^s\cap \tilde{S}_2^s )\leq (| \log \delta|)^{Cs} \delta^{\frac{1}{8}}.
     	  \end{equation*} 
     	Denote by $P_{m,j,j'}$ the probability space such that  \eqref{g18} and \eqref{g24}  hold. 
     	Let  $m_i$ be such that $|m_i|=\max\{|m_k|, k=1,2,\cdots,b\}$. Without loss of generality, assume that $m_i=m_1$. Let us prove the case  $|m_1|\geq  2$ first. Without loss of generality, assume that $m_1\geq 0$.
     	It is easy to see 
     	that for any $\ell\in 2\tilde{B}_1$,
     	\begin{equation}\label{g83}
     	|\tilde{\phi}_{\tilde{\alpha}_k}(\ell)|^2\leq e^{-\gamma_0 L},\, k=2,3,\cdots, b.
     	\end{equation}
     	For any $l\in 2\tilde{B}_1$, denote by $P^3_l$ the probability space  such that 
     	\begin{equation}\label{g36}
     	|\tilde{\phi}_{\tilde{\alpha}_1} (l)|^2\geq \frac{4}{5}  |\tilde{\phi}_{j'} (l)|^2+  L^2(\sum_{k=2}^b\tilde{\phi}_{\tilde{\alpha}_k} (l)|^2 )+\frac{1}{L^{10}}.
     	\end{equation}	
     By \eqref{g72}, \eqref{g83} and \eqref{g36}, one has that
     	\begin{equation}\label{g79}
     		P_{m,j,j'}\subset \bigcup_{l\in2 \tilde{B}_1}P^3_l.
     	\end{equation}

    Now the proof follows from Lemma \ref{lemeig}, which  is similar to the proof of  Case $2_2$ or Case $3_1$.
    Here are the details.
     Take any $l\in 2 \tilde{B}_1 $ such that \eqref{g36} holds.
    Split $[0,1]$ into $N^{4q_1}$ intervals of size $N^{-4q_1}$ and  take any interval $I=[f_1,f_2]$. 
    Denote by the probability space $ P_{l, I}^3=\{V\in P_l: V_l\in I\}$.
    Applying  Lemma \ref{lemeig}, one has that if we fix $V_{\ell}$,  $ \ell\in\Lambda_1\backslash\{l\}$, then for any $f\in I$, 
    \begin{equation}\label{g91}
    	\frac{d\tilde{\mu}_{\tilde{\alpha}_1}^{V_l=f}}{df}=|\tilde{\phi}^{V_l=f_1}_{\tilde{\alpha}_1}(l)|^2+  N^{-5q_1}  O(1),
    \end{equation}
and
    \begin{equation}\label{g92}
    	\frac{d\tilde{\mu}_{j'}^{V_l=f}}{df}=|\tilde{\phi}^{V_l=f_1}_{j'}(l)|^2+  N^{-5q_1}  O(1).
    \end{equation}

    By \eqref{g91}, \eqref{g92} and \eqref{g93},  one has that
    \begin{align}
    	\frac{d (m\cdot\tilde{ \omega}^{V_l=f}+\tilde{\mu}_j^{V_l=f}-\tilde{\mu}_{j'}^{V_l=f})}{df}&\geq \frac{3}{4}|\tilde{\phi}^{V_l=f_1}_{\tilde{\alpha}_1}(l)|^2+  N^{-5q_1}  O(1)\nonumber\\
    	&	\geq \frac{3}{4L^{10}}+  N^{-5q_1}  O(1)\nonumber\\
    	&	\geq \frac{1}{2L^{10}}.\label{g94}
    \end{align}

    Now the proof follows that of  Case $2_2$ or Case $3_1$ of Theorem \ref{thm2}.
   
   Let us proceed to the case supp $m\geq 3$.
   	Without loss of generality, assume that $m_i\neq0,i=1,2,3$.
   In this case, if $\tilde{\ell}_j\geq 2L^4$, by \eqref{g25}, one has that for any $\ell\in\cup_{k=1}^b 2\tilde{B}_k $,
   \begin{equation}\label{g291}
   	|\tilde{\phi}_{j}(\ell)|\leq e^{- \gamma_0 L}.
   \end{equation}
   If $\tilde{\ell}_j\leq 2L^4$,   by \eqref{g30} and \eqref{g25}, one has that either for any $\ell\in 2\tilde{B}_1\cup2\tilde{B}_2$, 
   \begin{equation}\label{g391}
   	|\tilde{\phi} _{j}(\ell)|\leq e^{- \gamma_0 L},
   \end{equation}
   or 
   for any $\ell\in 2 \tilde{B}_2\cup 2\tilde{B}_3$, 
   \begin{equation}\label{g401}
   	|\tilde{\phi} _{j}(\ell)|\leq e^{- \gamma_0 L},
   \end{equation}
or
for any $\ell\in 2 \tilde{B}_1\cup 2\tilde{B}_3$, 
\begin{equation}\label{g402}
	|\tilde{\phi} _{j}(\ell)|\leq e^{- \gamma_0 L}.
\end{equation}
Clearly, \eqref{g291}-\eqref{g402} also hold for $j'$. Therefore, we have that there exists $ i\in\{1,2,3\}$, such that
for any $m\in\{\tilde{\alpha}_1,\tilde{\alpha}_2,\cdots,\tilde{\alpha}_b\}\backslash \{\tilde{\alpha}_i\}$ and $m=j,j'$, 
\begin{equation}\label{g4021}
	|\tilde{\phi} _{m}(\ell)|\leq e^{- \gamma_0 L},\ell\in 2\tilde{B}_i.
\end{equation}

   Now the proof follows that of Case $3_2$ of  Theorem \ref{thm2}.
   
    \end{proof}

Denote by $\hat{S}_2^s$ the probability  space such that there exist $\tilde{ \omega}^{\Lambda_2^s}$ satisfying \eqref{g18},  $m\in \Z^b $ with  $\sum_{k=1}^b m_k\neq 0$,  $|m_k|\leq 1, k=1,2,\cdots,b$,  and $j \in\Lambda_2^s, j'\in \Lambda_2^s$   satisfying
\begin{equation}\label{gg916}
	|m\cdot\tilde{ \omega}^{\Lambda_2^s}+\tilde{\mu}_j^{\Lambda_2^s}-\tilde{\mu}_{j'}^{\Lambda_2^s}|\leq4 \delta^{\frac{1}{8}}.
\end{equation}

\begin{theorem}\label{thm7}
	For small $\delta$,   we have
	\begin{equation*}
		\mathbb{P}(\hat{S}_2^s)\leq ( |\log \delta|)^{Cs}\delta^{\frac{1}{8}}.
	\end{equation*}

\end{theorem}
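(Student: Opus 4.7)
The plan is to follow the structure of the proof of Theorem \ref{thm4}, covering the cases on $m$ complementary to those treated there. Since the hypothesis in $\hat{S}_2^s$ forces $|m_k|\leq 1$ together with $\sum_k m_k\neq 0$, only two configurations remain: $\text{supp } m=1$, i.e.\ $m=\pm e_k$, and $\text{supp } m=2$ with $m=\pm(e_{k_1}+e_{k_2})$ (two nonzero entries of the same sign). As in the proofs of Theorems \ref{thm2} and \ref{thm4}, I would first use Minami (Lemma \ref{lem3}) to reduce to bounding $\mathbb{P}(\hat{S}_2^s\cap \tilde{S}_2^s)$ with $\tilde{S}_2^s=S_N$, $N=2\lfloor|\log\delta|^s\rfloor$, and then sum over the $O(|\log\delta|^{Cs})$ eligible triples $(m,j,j')$, showing each contributes at most $(|\log\delta|)^{Cs}\delta^{1/8}$.

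For Case I ($m=\pm e_1$ after relabeling), I would split into degenerate and generic sub-cases. When $j=j'$ or one of $j,j'$ equals $\tilde{\alpha}_1$, the expression $\tilde{\mu}_{\tilde{\alpha}_1}+\tilde{\mu}_j-\tilde{\mu}_{j'}$ collapses to a single eigenvalue and Wegner (Lemma \ref{lem2}) at $E=0$ gives the bound directly. In the generic sub-case ($j\neq j'$ and both distinct from $\tilde{\alpha}_1$), I would choose a site $l\in 2\tilde{B}_1$ where $|\tilde{\phi}_{\tilde{\alpha}_1}(l)|^2$ dominates $|\tilde{\phi}_{j'}(l)|^2$ in the sense of \eqref{g33}; the existence of such an $l$ follows from the sum-over-$l$ normalization argument already used in Case $3_1$ of Theorem \ref{thm2}. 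Lemma \ref{lemeig} then converts this into a lower bound on $|d/dV_l|$ of the expression, and the $V_l$-measure of \eqref{gg916} is controlled by $g\in L^\infty$ on subintervals of $[0,1]$, exactly as in Case $2_2$ of Theorem \ref{thm2}.

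For Case II ($m=e_1+e_2$, WLOG), a pigeonhole on the two blocks $\tilde{B}_1,\tilde{B}_2$ is the key. By \eqref{g30} and the localization \eqref{g25}, each of $\tilde{\ell}_j,\tilde{\ell}_{j'}$ lies in at most one of the dilated blocks, so at least one of $\tilde{B}_1,\tilde{B}_2$ contains neither of those centers. On that block, all competing terms in the derivative $|\tilde{\phi}_{\tilde{\alpha}_1}|^2+|\tilde{\phi}_{\tilde{\alpha}_2}|^2+|\tilde{\phi}_j|^2-|\tilde{\phi}_{j'}|^2$ except the designated $|\tilde{\phi}_{\tilde{\alpha}_k}(l)|^2$ are exponentially small, and the Lemma \ref{lemeig} argument proceeds as in Theorem \ref{thm4}. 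Sub-cases where $j$ or $j'$ coincides with some $\tilde{\alpha}_k$ reduce either to Case I (after cancellation) or to Wegner.

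The main difficulty will be the Case I sub-case in which both $\tilde{\ell}_j$ and $\tilde{\ell}_{j'}$ happen to lie in $2\tilde{B}_1$ alongside $\tilde{\ell}_{\tilde{\alpha}_1}$, so the clean pigeonhole strategy breaks down and the mixed derivative $|\tilde{\phi}_{\tilde{\alpha}_1}|^2+|\tilde{\phi}_j|^2-|\tilde{\phi}_{j'}|^2$ need not be sign-definite everywhere. My plan there is to enlarge the set of candidate perturbation sites beyond $2\tilde{B}_1$ and probe $l$ near $\tilde{\ell}_j$ (or $\tilde{\ell}_{j'}$): at such $l$, $|\tilde{\phi}_j(l)|^2$ is near its maximum while $|\tilde{\phi}_{\tilde{\alpha}_1}(l)|^2$ and $|\tilde{\phi}_{j'}(l)|^2$ are exponentially small, the separation of localization centers being ensured by the eigenfunction labeling convention together with the Minami-induced eigenvalue separation on $\tilde{S}_2^s$. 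This again yields a clean one-sided derivative lower bound and hence the desired probability estimate via Lemma \ref{lemeig}, completing the proof after summation over all sub-cases.
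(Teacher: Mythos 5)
Your case enumeration does not match the set $\hat S_2^s$ that Theorem \ref{thm7} is about, and this produces genuine gaps. The conditions $|m_k|\leq 1$ and $\sum_{k=1}^b m_k\neq 0$ do \emph{not} force $\mathrm{supp}\, m\leq 2$: e.g.\ $m=(1,1,-1,0,\cdots,0)$ is admissible with support $3$. You have implicitly replaced $\hat S_2^s$ by its intersection with the complement of the $m$'s treated in Theorem \ref{thm4}, but $\hat S_2^s$ as defined contains such $m$ and, crucially, unlike $S_2^s$ it imposes \emph{no} localization hypothesis of the form \eqref{g25} on the eigenvectors attached to $\tilde\mu_j$ and $\tilde\mu_{j'}$. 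Consequently every step of yours that appeals to localization centers $\tilde\ell_j,\tilde\ell_{j'}$ --- the block pigeonhole in your Case II and the ``probe $l$ near $\tilde\ell_j$'' fix in your last paragraph --- is unfounded in this event. Even granting localization, that fix fails: Minami separates \emph{eigenvalues}, not localization centers, and the labelling convention only orders centers monotonically, so nothing prevents $\tilde\ell_j=\tilde\ell_{j'}$, in which case $|\tilde\phi_{\tilde\alpha_1}(l)|^2+|\tilde\phi_j(l)|^2-|\tilde\phi_{j'}(l)|^2$ admits no usable one-sided bound near the common center. Relatedly, in your ``generic'' Case I the domination \eqref{g33} only yields $|\tilde\phi_{\tilde\alpha_1}(l)|^2-|\tilde\phi_{j'}(l)|^2\geq -\tfrac{1}{5}|\tilde\phi_{j'}(l)|^2+L^{-10}$, which can be very negative, and your sub-case $j=\tilde\alpha_1$ gives $2\tilde\mu_{\tilde\alpha_1}-\tilde\mu_{j'}$, not a single eigenvalue, so Wegner does not apply directly there.

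The paper's proof avoids all of this with one observation you should adopt. By the exact normalizations $\sum_{l}|\tilde\phi_{\tilde\alpha_k}(l)|^2=\sum_{l}|\tilde\phi_{j}(l)|^2=\sum_{l}|\tilde\phi_{j'}(l)|^2=1$ on $\Lambda_2^s$, summing the derivative quantity $\sum_k m_k|\tilde\phi_{\tilde\alpha_k}(l)|^2+|\tilde\phi_j(l)|^2-|\tilde\phi_{j'}(l)|^2$ over $l\in\Lambda_2^s$ gives exactly $\sum_k m_k$, a nonzero integer, hence of absolute value at least $1$; the $j$ and $j'$ contributions cancel identically. Pigeonhole over the $O(N)$ sites then produces an $l$ where this quantity is at least $N^{-10}$ in absolute value, with no case analysis, no restriction on $\mathrm{supp}\,m$, and no localization of $\tilde\phi_j,\tilde\phi_{j'}$ required. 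Lemma \ref{lemeig} and the $L^\infty$ bound on $g$ then conclude as in Case $2_2$ of Theorem \ref{thm2}.
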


\begin{proof}
Again, let $N=2\left \lfloor |\log \delta|^s\right \rfloor$ and $\tilde{S}_2^s=S_{N}$. 
By Lemma \ref{lem3} (Minami estimate), it suffices to prove that 
	\begin{equation*}
	\mathbb{P}(\hat{S}_2^s\cap	\tilde{S}_{2}^s)\leq ( |\log \delta|)^{Cs}\delta^{\frac{1}{8}}.
\end{equation*} 
	Since  $\sum_{k=1}^b m_k\neq 0$, there exists $l\in [-N,N]$, such that 
	\begin{equation}\label{pos1}
		|(\sum_{k=1}^b m_k|\tilde\phi _{\tilde{\alpha}_k}(l)|^2)+|\tilde\phi_j (l)|^2-|\tilde\phi _{j'}(l)|^2|\geq \frac{1}{N^{10}}.
	\end{equation}
	
Now we use  Lemma \ref{lemeig} to conclude as in the proof of Case $2_2$ of Theorem \ref{thm2}. 
	
	
\end{proof}
 \begin{remark}\label{re7}
 	{\rm We could  replace the constant $4$ in \eqref{g5}, \eqref{g511}, \eqref{g24} and \eqref{gg916} with any fixed constant and all theorems in this section still hold. }
 \end{remark}

 \section{One dimensional random Schr\"odinger operators }
  We now proceed to the infinite volume random Schr\"odinger operator, $$H=-\Delta+V.$$ Let $\{\varphi^V_j\}_{j}$ be the eigen-basis 
  and assume that  $\iota_j^V$ satisfies 
 \begin{equation*}
  	|\varphi_j^V(\iota_j^V)|=\max_{x\in \Z} |\varphi_j^V(x)|.
  \end{equation*} 
 (If it is not unique, one may choose a maximum arbitrarily.)
  We have
  \begin{theorem}(See e.g., ~\cite{gelzx} or ~\cite[sect.~1.6]{GK14})\label{thm1}
  	There exist some $q>0$ and $\gamma_1>0$ such that, with probability 1, 
  	\begin{equation}\label{g1}
  	|\varphi^V_j(\ell)|\leq C_{V}(1+|\iota_j^V|)^qe^{-\gamma_1|\ell-\iota_j^V|},  
  	\end{equation} 
  	where  $\mathbb{E}(C_V)<\infty$.
  \end{theorem}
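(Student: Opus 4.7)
The plan is to derive Theorem \ref{thm1} from Furstenberg's theorem on positivity of the Lyapunov exponent together with uniform large deviation estimates for products of random $\mathrm{SL}(2,\mathbb{R})$ transfer matrices. This is the classical route to the so-called SULE (semi-uniform localization of eigenfunctions) property, and it is the content of the cited works \cite{gelzx,GK14}. I would first invoke Furstenberg's theorem to assert that the Lyapunov exponent $\gamma(E)$ is strictly positive for every $E$ in the almost-sure spectrum of $H$. Combined with Le Page-type large deviation estimates, this yields, for every compact set $K\subset\mathbb{R}$, constants $\gamma_1,c>0$ (independent of $E\in K$) such that the transfer matrices $T_n(E,V)$ satisfy
$$\mathbb{P}\bigl(\bigl|\log\|T_n(E,V)\|-n\gamma(E)\bigr|>\epsilon n\bigr)\leq e^{-cn}.$$

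Second, I would combine this with a Wegner-type estimate and run a one-dimensional multiscale (or Kotani-style spectral-averaging) argument to obtain Anderson localization: with probability one, $H$ has pure point spectrum with exponentially decaying eigenfunctions.

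The third and crucial step is the upgrade to the \emph{semi-uniform} form \eqref{g1}: a uniform decay rate $\gamma_1$, a prefactor depending \emph{only polynomially} on $|\iota_j^V|$, and an integrable random constant $C_V$. The idea is to discretize the spectrum into a countable dense net, so that by a Borel--Cantelli argument the transfer matrices, started at any lattice point $m$ with $|m|\le n$ and any energy in a scale-$e^{-\gamma_1 n/2}$ neighborhood of the spectrum, grow at rate at least $\gamma_1$ outside a random set whose probability is summable in $n$. The polynomial weight $(1+|\iota_j^V|)^q$ absorbs: (i) the $n^{O(1)}$ overhead in counting eigenvalues near the spectrum at scale $n$ coming from Wegner, and (ii) the loss incurred when centering the transfer-matrix estimate at $\iota_j^V$ rather than the origin. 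The resulting almost-sure bound, quantified by the random threshold provided by Borel--Cantelli, defines the constant $C_V$, and the summability of the tail probabilities directly yields $\mathbb{E}(C_V)<\infty$.

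The main obstacle is the last step, namely producing a single random constant $C_V$ that works simultaneously for \emph{all} eigenfunctions $\varphi_j^V$ and all lattice sites $\ell$ with the claimed polynomial-times-exponential envelope, while retaining integrability. This is where the uniformity of the large deviation estimate across compact energy windows and the polynomial-in-volume Wegner bound must be combined quantitatively, exactly as carried out in \cite{gelzx,GK14}; I would therefore cite those references for the detailed construction rather than reproduce it.
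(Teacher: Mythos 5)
The paper offers no proof of this theorem at all — it is quoted directly from the literature (\cite{gelzx}, \cite[sect.~1.6]{GK14}) — and your sketch correctly identifies the standard route (Furstenberg positivity, uniform large deviations for transfer matrices, Wegner, and the Borel--Cantelli upgrade to the semi-uniform envelope with integrable $C_V$) before deferring the details to those same references. This is essentially the same approach as the paper, namely an appeal to the known SULE results for the one-dimensional Anderson model.
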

  \begin{remark} \label{label}
  	{\rm \begin{itemize}
  		\item Recall that $\iota_j^V$ is called the localization center.
  		\item  $\gamma_1$ and  $q$  only depend on the distribution $g$.   $\gamma_1$ can be arbitrarily close to the Lyapunov exponent. 
  	\end{itemize} }
  	 \end{remark}
  
  By Theorem \ref{thm1} and  following a similar proof of Theorem 7.1 in  ~\cite{djls}, one has the following Lemma concerning the localization centers.
  \begin{lemma}\label{lem11}
  	For any $\epsilon$, there exist  $\mathcal{V}_{\epsilon}$ with $  \mathbb{P}(\mathcal{V}_{\epsilon})>1-\epsilon$ and  a constant  $l_{\epsilon}$ such that  the following holds. 	
	For any $V\in \mathcal{V}_{\epsilon}$ 
  	  and $k\in[-L^4,L^4]$ with $L\geq l_{\epsilon}$, 
  		\begin{equation}\label{g13}
  		(1-\epsilon) L\leq \#\{j:\iota_j^V \in [k,k+L]  \}\leq (1+\epsilon) L.
  		\end{equation}
    \end{lemma}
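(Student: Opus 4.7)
The strategy is to adapt Theorem 7.1 of \cite{djls}, which establishes the roughly uniform spatial distribution of localization centers from the orthonormality of the eigenbasis and the uniform decay \eqref{g1}. Since $\mathbb{E}(C_V) < \infty$, Markov's inequality provides a set $\mathcal{V}_\epsilon$ with $\mathbb{P}(\mathcal{V}_\epsilon) \geq 1-\epsilon$ and a deterministic constant $M_\epsilon$ such that $C_V \leq M_\epsilon$ on $\mathcal{V}_\epsilon$; all estimates below are performed for $V\in \mathcal{V}_\epsilon$, where \eqref{g1} becomes uniform in $j$.

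The main tool is Parseval's identity: for $J := [k, k+L]\cap \Z$,
$$L+1 = \sum_{\ell \in J} \sum_j |\varphi_j^V(\ell)|^2 = \sum_j \|\chi_J \varphi_j^V\|^2.$$
I would introduce a buffer width $A := (4q/\gamma_1 + 1)\log L$, chosen so that the polynomial factor $(1+|\iota_j^V|)^q \leq (2L^4)^q$ (valid whenever $|\iota_j^V|\leq 2L^4$, which covers every center within distance $L$ of $J$) is dominated by $e^{-\gamma_1 A} \leq L^{-100}$. Let $J^{-A} := [k+A, k+L-A]$ and $N := \#\{j : \iota_j^V \in J\}$. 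For the upper bound, each $j$ with $\iota_j^V \in J^{-A}$ contributes $\|\chi_J \varphi_j^V\|^2 \geq 1 - L^{-100}$ to the Parseval sum, forcing $\#\{j : \iota_j^V \in J^{-A}\} \leq L+2$; the boundary layer $J\setminus J^{-A}$ then adds at most $O(A\cdot c_{\max})$ more. For the lower bound, bound the Parseval sum above by $N$ plus an exterior tail from $\iota_j^V \notin J$, where \eqref{g1} yields geometric decay in $\mathrm{dist}(\iota_j^V, J)$.

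Both the boundary-layer count and the exterior tail require the \emph{a priori} bound $c_m := \#\{j : \iota_j^V = m\} = O(\log L)$ for $|m| \leq L^5$, and producing this bound is the main technical obstacle. It follows from combining orthonormality $\sum_j |\varphi_j^V(m)|^2 = 1$ with a lower bound $|\varphi_j^V(\iota_j^V)|^2 \geq c/\log(1+|\iota_j^V|)$, which itself is obtained by using \eqref{g1} to truncate $\|\varphi_j^V\|_2^2 = 1$ to an effective support of logarithmic size. Plugging $c_m = O(\log L)$ into the above estimates gives boundary contribution $O(\log^2 L)$ and tail $O(\log^2 L)$, so $N = L+1 + O(\log^2 L)$; choosing $l_\epsilon$ so that $O(\log^2 L) \leq \epsilon L$ for $L\geq l_\epsilon$ yields \eqref{g13}.
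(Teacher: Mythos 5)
Your proposal is correct and rests on the same two pillars as the paper's proof in Appendix B: make the localization bound \eqref{g1} uniform on a set $\mathcal V_\epsilon$ via Chebyshev/Markov, then double-count $\sum_{\ell}\sum_j|\varphi_j(\ell)|^2$ using normalization in $\ell$ and completeness in $j$. The difference is in how the error terms are controlled. The paper uses a buffer of width $\epsilon L$: for the upper bound it enlarges the spatial window to $[k-\epsilon L,\,k+(1+\epsilon)L]$ so that every eigenfunction centered in $[k,k+L]$ carries mass $1-Ce^{-c\epsilon L}$ there, and for the lower bound it shrinks the window and controls the exterior tail by \emph{bootstrapping} the just-proved upper bound on $\#\{j:|\iota_j|\le (m+1)L\}$; the $O(\epsilon L)$ error is simply absorbed into the statement. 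You instead keep the window fixed, take a buffer of width $O(\log L)$, and pay for the boundary layer and the exterior tail with a separate per-site multiplicity bound $\#\{j:\iota_j=m\}=O(\log(2+|m|))$, derived from the pointwise lower bound $|\varphi_j(\iota_j)|^2\gtrsim 1/\log(2+|\iota_j|)$ (effective support of logarithmic size) combined with $\sum_j|\varphi_j(m)|^2=1$. That multiplicity lemma is an extra ingredient not present in the paper, and it buys you a sharper count $L+O(\log^2 L)$ rather than $L+O(\epsilon L)$; the paper's version is marginally shorter because it never needs site-by-site control. Two bookkeeping remarks: with $A=(4q/\gamma_1+1)\log L$ the exterior mass of a deeply-centered eigenfunction is only $O(L^{-2\gamma_1})$, not $L^{-100}$, so either enlarge the constant in $A$ or note that any negative power of $L$ (indeed anything $o(1)\cdot L$ in aggregate) suffices; and in the lower-bound tail the crude bound $(1+|m|)^{2q}e^{-2\gamma_1\,\mathrm{dist}(m,J)}$ is useless for the $O(\log L)$ sites just outside $J$, where you must fall back on the trivial bound $\|\chi_J\varphi_j\|^2\le 1$ together with $c_m=O(\log L)$ — this is consistent with your claimed $O(\log^2 L)$ but should be said explicitly.
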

 \noindent See appendix \ref{pf} for a proof. 	\hfill $\square$

 	Basing on \eqref{g13}, one may (re)label the eigenfunctions so that if
 	$j>j'$, then the localization centers  of the corresponding eigenfunctions $\phi_j^V$ and $\phi_{j'}^V$
 	satisfy  $\ell_{j'}^V\geq \ell_j^V$. The construction of such a map is presented in appendix \ref{label1}. Here after relabelling, we use the notations  $\phi_j^V$ instead of $\varphi_j^V$ and $\ell^V_j$ instead of $\iota_j^V$.  Recall that  $\mu_j^V$  is the   eigenvalue corresponding to eigenfunction $\phi_j^V$. When there is no ambiguity, we omit the dependence on $V$. 

  \smallskip
 
Below we summarize properties of the eigenfunction basis in this labelling. 
\begin{lemma}\label{lem1}
	There exist $q>0$ and $\gamma>0$ such that 
	for any $\epsilon$, there exist  $\mathcal{V}_{\epsilon}$ with $  \mathbb{P}(\mathcal{V}_{\epsilon})>1-\epsilon$ and constants $C_{\epsilon}$ and $\ell_{\epsilon}$ such that for any $V\in \mathcal{V}_{\epsilon}$, the following statements hold:
	\begin{itemize}
		\item  for any $\ell\in\Z$,
		\begin{equation}\label{g2}
			|{\phi}^V_j(\ell)|\leq C_{\epsilon}(1+|\ell_j^V|)^qe^{-\gamma|\ell-\ell_j^V|}, 
		\end{equation}
		\item for any $|\ell_j^V|\geq \ell_{\epsilon}$,
		\begin{equation}\label{g3}
			|\ell_j^V-j|\leq \epsilon |j|,
		\end{equation}
		\item for any $k\in [-L^4,L^4]$,  
	\begin{equation}\label{g321}
 (1- \epsilon) L\leq \#	\{j: \ell_j^V\in [k,k+L]\}\leq (1+\epsilon) L. 
	\end{equation}
	\end{itemize}
\end{lemma}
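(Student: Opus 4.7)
The plan is to combine Theorem \ref{thm1}, Lemma \ref{lem11}, and a truncation of the random prefactor $C_V$, then transport everything along the monotone relabelling $j\mapsto \ell_j^V$ described just before the lemma.

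First, since $\mathbb{E}(C_V)<\infty$, Markov's inequality yields a constant $C_\epsilon$ with $\mathbb{P}(C_V>C_\epsilon)<\epsilon/3$. Let $\mathcal{V}_\epsilon$ be the intersection of $\{C_V\leq C_\epsilon\}$, the full-measure set on which \eqref{g1} holds, and the event produced by Lemma \ref{lem11} with parameter $\epsilon/3$; then $\mathbb{P}(\mathcal{V}_\epsilon)>1-\epsilon$. On this set, \eqref{g1} together with $C_V\le C_\epsilon$ is exactly \eqref{g2} (with $\gamma=\gamma_1$), because the relabelling only renames the index pointing to a given (eigenfunction, localization center) pair and leaves the decay bound intact. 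The counting estimate \eqref{g321} is likewise just \eqref{g13} rewritten in the new labels: the multiset $\{\ell_j^V\}_{j\in\Z}$ coincides with $\{\iota_{j'}^V\}_{j'\in\Z}$, so counts in $[k,k+L]$ are invariant under the permutation.

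The substantive step is \eqref{g3}. By the monotone labelling, $j\mapsto\ell_j^V$ is essentially the inverse of the counting function. Fix $j>0$ with $\ell_j^V$ large and let $j_0$ be the index closest to the origin; then $j-j_0=\#\{j':j_0<j'\le j\}=\#\{j':\ell_{j_0}^V<\ell_{j'}^V\le \ell_j^V\}$. Partition $[\ell_{j_0}^V,\ell_j^V]$ into consecutive intervals of length $L$ and sum the bounds of \eqref{g13} (applied with parameter $\epsilon/3$) over these intervals to get $(1-\epsilon/3)\ell_j^V+O(L)\le j\le (1+\epsilon/3)\ell_j^V+O(L)$. For $|\ell_j^V|$ large enough (depending on $\epsilon$ and $L$), the additive $O(L)$ term is dominated by $(\epsilon/3)|\ell_j^V|$, giving $|\ell_j^V-j|\le \epsilon|\ell_j^V|$; since also $(1-\epsilon)|\ell_j^V|\le|j|$, this rearranges to $|\ell_j^V-j|\le 2\epsilon|j|$. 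Starting the argument with $\epsilon/2$ in place of $\epsilon$ yields \eqref{g3}; the case $j<0$ is symmetric.

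The main obstacle is precisely this $\epsilon$-bookkeeping: converting the additive counting error of Lemma \ref{lem11} into a multiplicative one in \eqref{g3}, and absorbing the boundary contribution at the reference index $j_0$ into the regime $|\ell_j^V|\ge \ell_\epsilon$. Once $\ell_\epsilon$ is taken large compared to $L/\epsilon$, this is straightforward. For $|\ell_j^V|$ much larger than the fixed scale $L^4$, one simply reapplies Lemma \ref{lem11} with a scale $L'$ growing slowly in $|\ell_j^V|$ so that $[-(L')^4,(L')^4]$ covers the range of interest; since $L'$ can be taken $\ll |\ell_j^V|$, the multiplicative error $\epsilon$ is preserved uniformly in $j$.
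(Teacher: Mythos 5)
Your argument is correct and is essentially the route the paper intends: Lemma \ref{lem1} is stated there without proof as a summary of Theorem \ref{thm1}, Lemma \ref{lem11} and the relabelling of Appendix \ref{label1}, and your combination of a Markov truncation of $C_V$, the invariance of \eqref{g1} and \eqref{g13} under the relabelling, and the telescoped counting argument (run at a scale $L'$ growing with $|\ell_j^V|$, which costs no extra probability since \eqref{g13} holds for all $L\geq l_\epsilon$ on the same $\mathcal V_\epsilon$) is exactly what is needed for \eqref{g3}. The one point you should make explicit is that the order-preserving map $f_3$ of Appendix \ref{label1} is only determined up to an index shift, so it must be normalized (e.g.\ so that the eigenfunction labelled $0$ has localization center of size $O(l_\epsilon)$) before your implicit bound $|\ell_{j_0}^V|=O(L)$ for the index nearest the origin --- and hence \eqref{g3} --- can hold.
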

\begin{lemma}\label{lem4}
	Choose any $ V\in \mathcal{V}_{\epsilon}\cap {S}_{2N}$.  
	Let $\Lambda=[-2N,2N]$. Consider two    distinct  eigen-pairs $(\mu_j,\phi_j)$ and $(\mu_{j'},\phi_{j'})$, $|j|, |j'|\leq N$, of $H=-\Delta+V$.
		Then there exist  two distinct  eigen-pairs $( \tilde{\mu}_{\tilde{j}}^{\Lambda},\tilde{\phi}_{\tilde{j}}^{\Lambda})$  and
		$( \tilde{\mu}_{\tilde{j}'}^{\Lambda},\tilde{\phi}_{\tilde{j}'}^{\Lambda})$,     $\tilde{j}\in \Lambda$,  $\tilde{j}'\in \Lambda$ of $H_{\Lambda}$ such that 
	\begin{equation*}
		|\mu_j-\tilde{\mu}_{\tilde{j}}^{\Lambda}|\leq e^{-\frac{\gamma}{2} N},	|\mu_{j'}-\tilde{\mu}_{\tilde{j}'}^{\Lambda}|\leq e^{-\frac{\gamma}{2} N},
	\end{equation*}
	and
	\begin{equation*}
		||\phi_j-\tilde{\phi}_{\tilde{j}}^{\Lambda}||\leq  e^{-\frac{\gamma}{2} N},	||\phi_{j'}-\tilde{\phi}_{\tilde{j}'}^{\Lambda}||\leq e^{-\frac{\gamma}{2} N}.
	\end{equation*}
\end{lemma}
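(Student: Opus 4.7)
\textbf{Proof plan for Lemma \ref{lem4}.} The strategy is to restrict the infinite-volume eigenfunctions $\phi_j,\phi_{j'}$ to $\Lambda=[-2N,2N]$, show that these restrictions are exponentially good approximate eigenvectors of $H_\Lambda$, and then use the Minami-type spacing built into the hypothesis $V\in S_{2N}$ to identify each approximation with a genuine eigenvector of $H_\Lambda$. Near-orthogonality of $\phi_j$ and $\phi_{j'}$ then forces the two selected eigenvectors to have distinct labels. First, since $V\in\mathcal V_\epsilon$, estimate \eqref{g3} and $|j|,|j'|\leq N$ give, for $\epsilon<1/4$ and $N\geq\ell_\epsilon$, that $|\ell_j^V|,|\ell_{j'}^V|\leq(1+\epsilon)N\leq 5N/4$, so both localization centers sit at distance $\geq 3N/4$ from the boundary $\pm 2N$ of $\Lambda$.

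Write $\psi_j:=R_\Lambda\phi_j$. For interior $\ell\in\Lambda$, $(H_\Lambda\psi_j)(\ell)=(H\phi_j)(\ell)=\mu_j\psi_j(\ell)$, and only the two boundary sites $\pm 2N$ contribute a single Laplacian coupling $\phi_j(\pm(2N+1))$ to $(H_\Lambda-\mu_j)\psi_j$. Using \eqref{g2},
\begin{equation*}
\|(H_\Lambda-\mu_j)\psi_j\|\leq |\phi_j(2N+1)|+|\phi_j(-2N-1)|\leq 2C_\epsilon(1+N)^q e^{-3\gamma N/4}\leq e^{-\gamma' N}
\end{equation*}
for any $\gamma'<3\gamma/4$ and $N$ large, and $\|\psi_j\|^2=1-\|(I-R_\Lambda)\phi_j\|^2=1+O(e^{-\gamma N})$. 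The same holds for $\psi_{j'}$.

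Because $V\in S_{2N}$, \eqref{g31} yields eigenvalue spacing $\geq(2N)^{-q_1}$ for $H_\Lambda$. Set $I_j=[\mu_j-(2N)^{-q_1}/4,\mu_j+(2N)^{-q_1}/4]$ and $P_j=\mathbf 1_{I_j}(H_\Lambda)$. By the min-max principle and the previous step, $I_j$ contains at least one eigenvalue of $H_\Lambda$, and by the spacing exactly one, which we denote $\tilde\mu_{\tilde j}^\Lambda$, with $|\mu_j-\tilde\mu_{\tilde j}^\Lambda|\leq e^{-\gamma' N}$. The resolvent estimate $\|(I-P_j)(H_\Lambda-\mu_j)^{-1}\|\leq 4(2N)^{q_1}$ together with the approximate eigenequation gives $\|(I-P_j)\psi_j\|\leq 4(2N)^{q_1}e^{-\gamma' N}$. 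Choosing the sign of the unit eigenvector $\tilde\phi_{\tilde j}^\Lambda$ so that $\langle\psi_j,\tilde\phi_{\tilde j}^\Lambda\rangle>0$, we get $P_j\psi_j=c\,\tilde\phi_{\tilde j}^\Lambda$ with $c=1+O(e^{-\gamma N/2})$, and hence $\|\phi_j-\tilde\phi_{\tilde j}^\Lambda\|\leq e^{-\gamma N/2}$ after a final (harmless) shrinking of $\gamma$. The same construction applied to $\phi_{j'}$ produces $(\tilde\mu_{\tilde{j}'}^\Lambda,\tilde\phi_{\tilde{j}'}^\Lambda)$.

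To see that $\tilde j\neq\tilde{j}'$, suppose for contradiction that $\tilde j=\tilde{j}'$; then $\tilde\phi_{\tilde j}^\Lambda=\tilde\phi_{\tilde{j}'}^\Lambda$, so $|\langle\phi_j,\phi_{j'}\rangle|\geq 1-O(e^{-\gamma N/2})$, contradicting the exact orthogonality $\langle\phi_j,\phi_{j'}\rangle=0$ for the distinct eigenpairs of $H$. The main subtlety is purely bookkeeping: reconciling the polynomial losses $(2N)^{q_1}$ from Minami and $(1+N)^q$ from the localization prefactor against the target exponential $e^{-\gamma N/2}$. Since by Remark \ref{label} the decay rate $\gamma$ in \eqref{g2} may be taken strictly below the Lyapunov exponent, any fixed polynomial loss is absorbed by a small reduction of $\gamma$, and the constant $\gamma$ appearing in the statement is understood as this reduced value.
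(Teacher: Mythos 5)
Your proposal is correct and follows essentially the same route as the paper: restrict $\phi_j$ to $\Lambda$, use \eqref{g3} and \eqref{g2} to show it is an $e^{-3\gamma N/4}$-approximate eigenvector of $H_\Lambda$, invoke the $S_{2N}$ spacing to isolate a unique nearby eigenvalue, and control the component orthogonal to the selected eigenvector (your spectral-projection bound is the same computation as the paper's eigenbasis expansion of $I_\Lambda\phi_j$), with distinctness of $\tilde j,\tilde j'$ forced by orthonormality of $\phi_j,\phi_{j'}$. The only cosmetic difference is that the existence of an eigenvalue in $I_j$ follows directly from $\mathrm{dist}(\mu_j,\sigma(H_\Lambda))\leq\|(H_\Lambda-\mu_j)\psi_j\|/\|\psi_j\|$ rather than from min-max, but this does not affect the argument.
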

\begin{proof}
	By \eqref{g3}, one has that $|\ell_j|\leq (1+\epsilon) N$ and $|\ell_{j'}|\leq (1+\epsilon) N$. Then
	\begin{equation*}
		\sum_{|\ell|\geq 2N+1}|\phi_j(\ell)|^2\leq e^{-\frac{3\gamma}{2}N}, 	\sum_{|\ell|\geq 2N+1}|\phi_{j'}(\ell)|^2\leq e^{-\frac{3\gamma}{2}N},
	\end{equation*}
	and 
	\begin{equation}\label{gg4}
		||	H_{\Lambda}\phi_j-\mu_j\phi_j||\leq e^{-\frac{3\gamma}{4}N},||	H_{\Lambda}\phi_{j'}-\mu_j\phi_{j'}||\leq e^{-\frac{3\gamma}{4} N}.
	\end{equation}
	Therefore,  there exist  $\tilde{j}\in \Lambda$ and $\tilde{j}'\in \Lambda$ such that  
	\begin{equation*}
		|\mu_j-\tilde{\mu}_{\tilde{j}}^{\Lambda}|\leq e^{-\frac{3\gamma}{4}N},	|\mu_{j'}-\tilde{\mu}_{\tilde{j}'}^{\Lambda}|\leq e^{-\frac{3\gamma}{4} N}.
	\end{equation*}
	Since $V\in {S}_{2N}$, one has that  for any distinct $j_1$ and $j_2$ in $\Lambda$,
	\begin{equation*}
		|\tilde{\mu}^{\Lambda}_{{j}_1}-\tilde{\mu}_{j_2}^{\Lambda}|\geq \frac{1}{(2N)^{q_1}},
	\end{equation*}
	and hence for any $m\neq \tilde{j}$ (\text{or }$m\neq \tilde{j}'$), 
	\begin{equation}\label{gg7}
		|\tilde{\mu}^{\Lambda}_{m}-{\mu}_{{j}}|\geq \frac{1}{2^{q_1+1}N^{q_1}} \,( \text{or } |\tilde{\mu}^{\Lambda}_{m}-{\mu}_{{j}'}|\geq \frac{1}{2^{q_1+1}N^{q_1}}).
	\end{equation}
	Let $\tilde{\phi}_m$ and $\tilde{\mu}_m$  (as usual, for simplicity we have dropped the dependence on $\Lambda$ from $\tilde{\phi}_m^{\Lambda}$ and $\tilde{\mu}_m^{\Lambda}$, $m\in\Lambda$), be the eigen-pairs of $H_{\Lambda}$.
	Let
	\begin{equation}\label{gg6}
		I_{\Lambda}{\phi}_j=\sum_{m\in \Lambda} c_m\tilde{\phi}_m.
	\end{equation}
	From  \eqref{gg4} and \eqref{gg6}, one has that 
	\begin{equation}\label{gg10}
		H_{\Lambda} {\phi}_j = \sum_{m\in \Lambda}\tilde{\mu}_m c_m \tilde{\phi}_m,
	\end{equation}
	and 
	\begin{equation}\label{gg9}
		||\sum_{m\in \Lambda}\tilde{\mu}_m c_m \tilde{\phi}_m -\sum_{m\in \Lambda}{\mu}_j c_m \tilde{\phi}_m|| =O(1)e^{-\frac{3\gamma}{4} N}.
	\end{equation}

	By \eqref{gg7},  \eqref{gg10} and \eqref{gg9}, one has that for any  $m\neq \tilde{j}$, 
	\begin{equation*}
		|c_m| \leq  O(1)N^{q_1}e^{-\frac{3\gamma}{4} N}.
	\end{equation*}
	Therefore,  $ 1-O(1)N^{3q_1}e^{-\frac{3\gamma}{2} N}\leq c_{\tilde{j}}^2\leq 1$.
	We conclude that 
	\begin{equation*}
		||\phi_j-\tilde{\phi}_{\tilde{j}}|| \leq 	||\phi_j-I_{\Lambda}{\phi}_{{j}}|| +	||I_{\Lambda}{\phi}_{{j}}-\tilde{\phi}_{\tilde{j}}||  \leq  e^{-\frac{\gamma}{2} N}.
	\end{equation*}
	Similarly,
	\begin{equation*}
		||\phi_{j'}-\tilde{\phi}_{\tilde{j}'}|| \leq  e^{-\frac{\gamma}{2} N}.
	\end{equation*}
	Since $\phi_j$ and $\phi_{j'}$ are ortho-normal eigenfunctions, we have that $\tilde{j}\neq \tilde{j}'$.
	\end{proof}
 \smallskip
 
We now state the conclusion:
 
  \begin{theorem}\label{mainkeythm}
  	For any $\epsilon>0$, there exists $l_{\epsilon}$ such that the following statements hold.
  	Fix  any $L\geq \ell_\epsilon$ and $\beta_k\in\Z$, $k=1,2,\cdots,b$  satisfying 
  	$10L\leq |{\beta}_k|\leq L^3$ and $|{\beta}_{k}-{\beta}_{k'}|\geq  10L$, for any distinct $k,k'\in\{1,2,\cdots,b\}$,    there exists a probability space $X_{\epsilon}$ with  
  	$\mathbb{P}(X_\epsilon)\geq 1-\epsilon$ and   $\delta_0>0$ (depending on $g$, $\epsilon$ and  $L$) such that for any $V\in X_{\epsilon}$ and $0<\delta\leq \delta_0$,  
  	\begin{enumerate}
  		\item \begin{equation}\label{gg1}
  		|{\phi}_j(\ell)|\leq C_{\epsilon}(1+|\ell_j|)^qe^{-\gamma|\ell-\ell_j|}, 
  	\end{equation}
  	\item for any $|\ell_j|\geq l_{\epsilon}$,
  	\begin{equation}\label{gg2}
  		|\ell_j-j|\leq \epsilon |j|,
  	\end{equation}
  \item for  large $N$ (depending on $\epsilon$),  $|j|, |j'|\leq N$ and $j\neq j'$, 
  \begin{equation}\label{g61}
  	|{\mu}_j -{\mu}_{j'}|\geq \frac{1}{2^{q_1+1}N^{q_1}},
  \end{equation}
and 
 \begin{equation}\label{g611}
	|{\mu}_j|\geq \frac{1}{2N^{q_1}},
\end{equation}
\item  for any eigenfunction $\phi_{\alpha_k}$ with $\ell_{\alpha_k}\in B_k=\{l\in\Z: |l-{\beta}_k|\leq L\}$,  $k=1,2,\cdots,b$, we have that for any $(n, j)\in  [-e^{|\log\delta|^\frac{3}{4}},e^{|\log\delta|^\frac{3}{4}}]^{b+1}\backslash \{(- e_k, \alpha_k)\}_{k=1}^b$,
\begin{equation}\label{g51}
	|n\cdot{ \omega^{(0)}} +{\mu}_j |  \geq 2\delta^{\frac{1}{8}},
\end{equation}
where $\omega^{(0)}=(\omega^{(0)}_1,\cdots,\omega^{(0)}_b)=(\mu_{\alpha_1},\cdots,\mu_{\alpha_b})$,
and 
 for any $(n, j)\in  [-e^{|\log\delta|^\frac{3}{4}},e^{|\log\delta|^\frac{3}{4}}]^{b+1}\backslash \{(e_k, \alpha_k)\}_{k=1}^b$, 
\begin{equation}\label{g5111}
  	|-n\cdot{ \omega}^{(0)} +{\mu}_j |  \geq 2\delta^{\frac{1}{8}},
\end{equation}

\item  for any $\theta\in\R$, there are at most $b$ vertices 
$(n,j) \in [-|\log\delta|^s,|\log\delta|^s]^{b+1}$,   such that 
\begin{equation}\label{g52}
	|(n\cdot{ \omega^{(0)}}+\theta)+\mu_j|\leq \delta^{\frac{1}{8}},
\end{equation}
for any $\theta\in\R$, 
there are at most $b$ vertices
$(n,j) \in [-|\log\delta|^s,|\log\delta|^s]^{b+1}$,   such that 
\begin{equation}\label{g521}
	|-(n\cdot{ \omega^{(0)}}+\theta)+\mu_j|\leq \delta^{\frac{1}{8}}.
\end{equation}
 
  	\end{enumerate}
  \end{theorem}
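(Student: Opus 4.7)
The plan is to build $X_\epsilon$ as the intersection of the high-probability events coming from Section 2 and Lemma \ref{lem1}, and then to transfer the finite-volume estimates of Theorems \ref{thm2}, \ref{thm4}, \ref{thm7} to the infinite-volume operator via Lemma \ref{lem4}. Concretely, setting $N_\delta=2\lfloor e^{|\log\delta|^{3/4}}\rfloor$, take
\[
X_\epsilon=\mathcal{V}_\epsilon\cap W\cap S_1^c\cap (S_2^s)^c\cap(\hat S_2^s)^c,
\]
where $W$ collects the Minami spacing events and Wegner ``$0\notin$ spectrum'' events for $H_{[-2^k,2^k]}$ for all $k\geq k_0(\epsilon)$ (with $q_1$ chosen large so the tails $\sum 2^{k(-q_1+2)}$ are summable). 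Each of the three bad-set contributions $S_1$, $S_2^s$, $\hat S_2^s$ has $\delta$-vanishing measure by Section 2 (the stretched-exponential prefactors in $|\log\delta|$ absorb the logarithmic losses), so $\mathbb P(X_\epsilon)\geq 1-\epsilon$ for $\delta\leq\delta_0(\epsilon,L)$.

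Properties (1) and (2) of the theorem are immediate from Lemma \ref{lem1}. For (g61), take any $N\geq N_0(\epsilon)$ and distinct eigen-pairs with $|j|,|j'|\leq N$; Lemma \ref{lem4} applied with $\Lambda=[-2N,2N]$ gives finite-volume approximants with errors $O(e^{-\gamma N/2})$, and the Minami spacing on $\Lambda$ (encoded in $W$) yields $|\tilde\mu_{\tilde j}^\Lambda-\tilde\mu_{\tilde j'}^\Lambda|\geq(2N)^{-q_1}$; subtracting produces $|\mu_j-\mu_{j'}|\geq 2^{-q_1-1}N^{-q_1}$. The lower bound (g611) is analogous, using the Wegner part of $W$ to guarantee $\mathrm{dist}(0,\sigma(H_{[-2N,2N]}))\geq (2N)^{-q_1}$ and then transferring by Lemma \ref{lem4}.

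Properties (g51) and (g5111) follow by transferring Theorem \ref{thm2}. For each $(\mu_j,\phi_j)$ with $|j|,|\ell_j|\leq N_\delta$, Lemma \ref{lem4} produces a finite-volume approximant $(\tilde\mu_j^{\Lambda_1},\tilde\phi_j^{\Lambda_1})$ with error $e^{-\gamma N_\delta/2}\ll\delta^{1/8}$, so in particular $\tilde\omega_k^{\Lambda_1}=\omega_k^{(0)}+O(e^{-\gamma N_\delta/2})$ and the index identification $\tilde\alpha_k\leftrightarrow\alpha_k$ is unambiguous. Being on $S_1^c$ (with the constant $4$ relaxed to $2$ via Remark \ref{re7}) gives $|n\cdot\omega^{(0)}+\mu_j|\geq 2\delta^{1/8}$ for all admissible $(n,j)$, since the total transfer error is at most $|n|\,e^{-\gamma N_\delta/2}\leq e^{|\log\delta|^{3/4}-\gamma N_\delta/2}$, super-polynomially smaller than $\delta^{1/8}$. (g5111) follows from the symmetric clause of Theorem \ref{thm2}.

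For (g52) and (g521) we argue by contradiction. Suppose distinct vertices $(n_1,j_1),\ldots,(n_{b+1},j_{b+1})\in[-|\log\delta|^s,|\log\delta|^s]^{b+1}$ all satisfy $|n_i\cdot\omega^{(0)}+\theta+\mu_{j_i}|\leq\delta^{1/8}$. Pairwise subtraction gives
\[
|m_{i\ell}\cdot\omega^{(0)}+\mu_{j_i}-\mu_{j_\ell}|\leq 2\delta^{1/8},\qquad m_{i\ell}:=n_i-n_\ell,
\]
the resonance ruled out by Theorems \ref{thm4} and \ref{thm7} (transferred via Lemma \ref{lem4} and Remark \ref{re7}) unless $m_{i\ell}\in\{0\}\cup\{e_k-e_{k'}:k\neq k'\}$. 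The case $m_{i\ell}=0$ combined with (g61) forces $j_i=j_\ell$, contradicting distinctness; hence all pairwise differences of the $n_i$ lie in $\{e_k-e_{k'}\}$. Translating so that $n_1=0$, the remaining $n_i=e_{k_i}-e_{k_i'}$ form a set whose pairwise differences still lie in $\{e_k-e_{k'}\}$; a short case analysis shows such a set must be a ``star'' $\{e_k-e_{k'}:k'\text{ fixed}\}$ or $\{e_k-e_{k'}:k\text{ fixed}\}$, hence has cardinality at most $b-1$, and the total is at most $b$, contradiction. The main subtlety throughout is the uniform finite-to-infinite-volume transfer: Lemma \ref{lem4} is pairwise, so the spacing, Wegner, and localization events must hold jointly across the entire $(n,j)$ box, which is why all the good events are imposed simultaneously in $X_\epsilon$.
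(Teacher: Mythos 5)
Your proposal is correct and follows essentially the same route as the paper: the same construction of $X_\epsilon$ by intersecting the Section~2 events (via Borel--Cantelli over dyadic $\delta$) with the Minami/Wegner and localization events, the same finite-to-infinite-volume transfer through Lemma~\ref{lem4}, and for part (5) the same pairwise-subtraction argument reducing to Theorems~\ref{thm4} and~\ref{thm7} followed by the identical combinatorial ``star'' analysis showing at most $b$ resonant vertices survive.
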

  
%
\smallskip

\begin{proof}

By Lemmas \ref{lem2} (Wegner estimate), \ref{lem3} (Minami estimate), \ref{lem1}, \ref{lem4} and  Borel-Cantelli type arugments,, we have \eqref{gg1}-\eqref{g611}.

	
We apply the Theorems (with Remark \ref{re7}) in the previous section  with   $\tilde{B}_k=B_k$, $k=1,2,\cdots,b$ and $\delta=2^{-n}$, $n=1,2,\cdots$.  Then by  Borel-Cantelli type arugments, we have that there exists  	$ X_\epsilon $ with 	$\mathbb{P}(X_\epsilon)\geq 1-\epsilon$  and  $  X_\epsilon \cap (\tilde{S}_1\cup {S}_2^s \cup\hat{S}_2^s) =\emptyset$ for any small $\delta$.

	\eqref{g51} and \eqref{g5111} follow   from Lemma \ref{lem4} with $N=\left \lfloor e^{|\log\delta|^\frac{3}{4}} \right \rfloor$. 
	
	
	 The proof of \eqref{g52} and \eqref{g521} takes more time.  Without loss of generality, we only prove  
	 \eqref{g52}.  Assume  that there are   $({n}^{(m)},{j}^{(m)})\in 
	  [-|\log\delta|^s,|\log\delta|^s]^{b+1}$, $m=1,2,\cdots, b+1$,  satisfying
	  \begin{equation}\label{g121w1}
	  		|({n}^{(m)}\cdot{ \omega^{(0)}}+\theta)+\mu_{{j}^{(m)}}|\leq \delta^{\frac{1}{8}}.
	  \end{equation} 
  By Lemma  \ref{lem4}, one has that there exist $\tilde{ \omega}^{\Lambda_2^s}$ satisfying \eqref{g18}  and $(n^{(m)},\tilde{j}^{(m)})\in 
  [-|\log\delta|^s,|\log\delta|^s]^{b}\times [-2|\log\delta|^s,2|\log\delta|]^s$, $m=1,2,\cdots, b+1$  such that
  \begin{equation}\label{g121w2}
  	|(n^{(m)}\cdot \tilde{ \omega}^{\Lambda_2^s}+\theta)+\tilde{\mu}^{\Lambda_2^s}_{\tilde{j}^{(m)}}|\leq 2\delta^{\frac{1}{8}}.
  \end{equation} 
	  We can assume that $n^{(m)}$, $m=1,2,\cdots,b+1$ are distinct, otherwise  Lemma \ref{lem3} (Minami estimate) gives the proof.
	  
	  Choose any $m_1,m_2\in\{1,2,\cdots,b+1\}$ such that \eqref{g121w2} holds. 
	When $|n_k^{(m_1)}-n_k^{(m_2)}|\geq 2$ for some $k\in\{1,2,\cdots,b\}$ or supp $(n^{(m_1)}-n^{(m_2)})\geq 3$,  the proof follows from  Theorem \ref{thm4}.
	When $\sum_{k=1}^b (n_k^{(m_1)}-n_k^{(m_2)})\neq 0$ and $|n^{(m_1)}_k-n_k^{(m_2)}|\leq 1$, $k=1,2,\cdots,b$, the proof  follows from  Theorem \ref{thm7}.
	
	  	  So the only exceptional case is  when for  {\it all} $m_1,m_2\in\{1,2,\cdots,b+1\}$,  
		  $n^{(m_1)}$ and $n^{(m_2)}$ satisfy 
		  supp $(n^{(m_1)}-n^{(m_2)})=2$,  $\sum_{k=1}^b (n^{(m_1)}_k-n^{(m_2)}_k)=0$ and $n^{(m_1)}_k-n^{(m_2)}_k=\pm 1,0$, $k=1,2,\cdots,b$.  
	We will show that this is not possible. 
		  Shifting $n^{(m)}$ by $n^{(1)}$, one may assume that $n^{(1)}=(0,0,\cdots,0)$.  When $b=2$, it is obvious. 
		  So let $b\geq 3$. Without loss of generality, assume that
	  	  $n^{(2)}=(1,-1,0,\cdots,0)$. Thus either $n^{(m)}_1=1$ or $n^{(m)}_2=-1$ for all $m\in \{3,4,\cdots, b+1\}$.  Without loss of generality, assume that $n^{(3)}=(1,0,-1,0,0,\cdots,0)$.  Therefore, for all $m\in\{2,3, \cdots,b+1\}$,   $n_1^{(m)}=1$. This contradicts with $n^{(m)}$,  $m\in \{1, 2, \cdots, b+1\}$ being distinct.


		 \end{proof}

  \section{Large deviation theorem} \label{ldt}


Assume that $\tilde H$ is an operator on $\ell^2(\Z^{b+1}\times\{0,1\})$, T\"oplitz with respect to $n\in\Z^b$.  
We now write $\Z^{b+1}$ interchangeably with $\mathbb Z^b\times\mathbb Z$.
Assume that 
there exist functions $h_{r,r'}(n,j, j')$, $r, r'\in\{0,1\}$,
on $\mathbb Z^b\times\mathbb Z\times\mathbb Z$, such that for any $u_r(n,j)$, $r\in\{0,1\}$ and $(n,j)\in \Z^{b+1}$,

\begin{equation}\label{gatoe}
	\begin{aligned}
		( \tilde Hu)_r(n,j):
		&=\sum_{(n', j')\in \Z^b\times\Z, \, r'\in\{0,1\}} h_{r,r'}(n-n',j, j') u_{r'}(n',j').
	\end{aligned}
\end{equation}
Assume that there exist $C_1>0$ and $c_1>0$ such that 
\begin{equation}\label{gadecay}
	|h_{r,r'}(n,j, j')|\leq C_1e^{-c_1 (|n|+|j|+|j'|)}.
\end{equation}



Let ${D}(\theta)$ be a family of operators from $\R$  to Op[$\ell^2(\Z^{b+1}\times\{0,1\})$]:
\begin{equation}\label{glinearD}
	{D}(\theta)=\begin{bmatrix}D+
		&0\\
		0& D_- 
	\end{bmatrix}, 
\end{equation}
where $D_{\pm}=	\text{ diag}(\pm(n\cdot\omega+\theta)+\mu_j)$, $(n,j)\in \Z^{b+1}$.
Define $T=T(\theta)$: $\R\to \text{Op}[\ell^2(\Z^{b+1}\times \{0,1\})]$ as
\begin{equation}\label{gA}
	T(\theta)={D}(\theta)+\delta \tilde H.
\end{equation}

Denote by $Q_N$ an elementary region of size $N$ centered at 0, which is one of the following regions,
\begin{equation*}
	Q_N=[-N,N]^{b+1}
\end{equation*}
or
$$Q_N=[-N,N]^{b+1}\setminus\{n\in\mathbb{Z}^{b+1}: \ n_i\varsigma_i 0, 1\leq i\leq b+1\},$$
where  for $ i=1,2,\cdots,b+1$, $ \varsigma_i\in \{<,>,\emptyset\}$ and at least two $ \varsigma_i$  are not $\emptyset$.

Denote by $\mathcal{E}_N^{0}$ the set of all elementary regions of size $N$ centered at 0. Let $\mathcal{E}_N$ be the set of all translates of  elementary regions  with center at 0, namely,
$$\mathcal{E}_N:=\{n+Q_N:n\in\mathbb{Z}^{b+1},Q_N\in \mathcal{E}_N^{0}\}.$$
For simplicity, we call  elements in $\mathcal{E}_N$ elementary regions.  
Let $Q_N(j_0)=\{(n,j)\in\Z^b\times \Z: (n,j-j_0)\in Q_N\}$.  

The  width  of a   subset  $\Lambda\subset \Z^{b+1}$, is defined  as the maximum of 
$M\in \N$  such that  for any  $n\in \Lambda$, there exists  $\hat{M}\in \mathcal{E}_M$ such that
\begin{equation*}
	n\in \hat{M} \subset \Lambda
\end{equation*}
and
\begin{equation*}
	\text{ dist }(n,\Lambda\backslash \hat{M})\geq M/2.
\end{equation*}

A generalized  elementary region is defined to be a subset $\Lambda\subset \Z^{b+1}$ of the form
\begin{equation*}
	\Lambda:= R\backslash(R+z),
\end{equation*}
where $z\in\Z^{b+1}$ is arbitrary and $R$ is a rectangle,
\begin{equation*}
	R=\{(n_1,n_2,\cdots,n_{b+1})\in \Z^d: |n_1-n_1^\prime|\leq M_1, \cdots,|n_{b+1}-n_{b+1}^\prime|\leq M_{b+1}\}.
\end{equation*}
For $ \Lambda\subset\mathbb{Z}^{b+1}$,  we introduce its diameter,
$$\mathrm{diam}(\Lambda)=\sup_{n,n'\in \Lambda}|n-n'|.$$

Denote by $\mathcal{R}_N$
all  generalized elementary regions with diameters less than or equal  to $N$.
Denote by $\mathcal{R}_N^M$
all   generalized elementary regions in $\mathcal{R}_N$ with width larger than or equal to $M$.

With a slight abuse of notation, we also  use $\mathcal{E}_N$, $\mathcal{E}_N^{0}$, $Q_N$, $Q_N(j_0)$,  $\mathcal{R}_N$ and  $\mathcal{R}_N^M$  to denote $\mathcal{E}_N\times \{0,1\}$, $\mathcal{E}_N^{0}\times \{0,1\}$, $Q_N\times \{0,1\}$, $Q_N(j_0)\times \{0,1\}$,  $\mathcal{R}_N \times \{0,1\}$ and  $\mathcal{R}_N^M\times \{0,1\}$ 
respectively. 
Similarly for any $\Lambda\subset \Z^{b+1}$, denote by  $R_{\Lambda}$  the restriction  to $\Lambda\times\{0, 1\}$. 

We say that $T $ (given by \eqref{gA}) satisfies the large deviation theorem  (LDT) at scale $N$ with parameter $\tilde{c}_1$ if 
there exists a  subset $\Theta_N\subset \R$  such that
\begin{equation*}
	\mathrm{Leb}(\Theta_N)\leq e^{-{N}^{\frac{1}{30}}},
\end{equation*}
and for any $j_0\in [-2N,2N]$, $Q_N\in \mathcal{E}_N^{0}$,  and  $\theta \notin  \Theta_N  $,
\begin{equation}\label{gldt1}
	|| (R_{Q_N(j_0)}T(\theta)R_{Q_N(j_0)})^{-1} ||\leq e^{N^{\frac{9}{10}}}, 
\end{equation}
and  for  any $(n, j)$ and $(n', j')$ satisfying $ \max\{|n-n^\prime|,|j-j^\prime|\}\geq \frac{N}{10}$, 
\begin{equation} \label{gldt2}
	|(R_{Q_N(j_0)}T(\theta)R_{Q_N(j_0)})^{-1}(n,j;n^\prime,j')| \leq e^{-\tilde{c}_1\max\{|n-n^\prime|, |j-j'|\})}.
\end{equation} 	

Let $K_1$ be a large constant depending only on $b$. 
Let $K=K_1^{100},K_2=K_1^{5}$.
\begin{theorem}\label{thmldt}
	
	Assume that $\omega$ satisfies 
	\begin{enumerate}
		\item   $|\omega_k-\mu_{\alpha_k}|\leq C_2\delta$, $k=1,2,\cdots,b$;
		\item for any fixed  $N\geq (\log\frac{1}{\delta})^{K}$,
		and any  $\tilde{N}$  with $(\log\frac{1}{\delta})^{K}\leq \tilde{N}\leq N$  and $0\neq |n|\leq 2\tilde{N}$, 
		\begin{equation}\label{g1311}
			|n\cdot\omega| \geq e^{-\tilde{N}^{\frac{1}{K_2}}},
		\end{equation}
		and for any $ |j|\leq 3\tilde{N}, |j'|\leq 3\tilde{N}, |n|\leq 2\tilde{N}$ with $(n,j-j')\neq 0$,
		\begin{equation}\label{gg14}
			| n\cdot\omega-\mu_j+\mu_{j'}|\geq  e^{-\tilde{N}^{\frac{1}{K_2}}}.
		\end{equation}
	\end{enumerate}
	
	Then for small enough $\delta$, the LDT holds at any scale $\tilde{N}\leq N$ with parameter $\frac{1}{2}c_1$.
	
\end{theorem}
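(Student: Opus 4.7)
\medskip

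\noindent\textbf{Proof proposal.} The plan is to prove Theorem \ref{thmldt} by multi-scale induction on $\tilde N$, starting from an initial scale $\tilde N_0 := (\log\tfrac{1}{\delta})^{K}$ and propagating up to $N$. The skeleton follows the Bourgain--Goldstein--Schlag LDT paradigm (cf.\ Chap.~18 of \cite{bbook}), with one crucial modification at the initial scale: because the nonlinear parameter in front of $\tilde H$ is only of size $\delta$, we cannot afford $\mathcal O(1)$ parameter room, and must exploit the clustering property of the diagonals provided by Theorem \ref{mainkeythm} to gain an $\mathcal O(\delta)$ window. The induction step is then standard resolvent paving combined with a Cartan/semi-algebraic measure estimate on the exceptional $\theta$-set.

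\smallskip

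\noindent\emph{Base case at scale $\tilde N_0$.} Fix $Q_{\tilde N_0}(j_0)\in\mathcal E_{\tilde N_0}$ and classify the diagonal sites $(n,j,r)\in Q_{\tilde N_0}(j_0)\times\{0,1\}$ of $D(\theta)$ into resonant ones, where $|{\pm}(n\!\cdot\!\omega+\theta)+\mu_j|\leq \delta^{1/8}$, and non-resonant ones, where this quantity is $\geq \delta^{1/8}$. By hypothesis (1) $|\omega_k-\mu_{\alpha_k}|\leq C_2\delta$, the shift $\omega\leftrightarrow\omega^{(0)}$ is harmless at threshold $\delta^{1/8}$, and items (4)--(5) of Theorem \ref{mainkeythm} then bound the number of resonant sites by $\leq 2b$ for every $\theta$ outside a set of measure $\ll e^{-\tilde N_0^{1/30}}$. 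On the non-resonant complement, $\|D(\theta)^{-1}\|\leq \delta^{-1/8}$, so the Neumann series in $\delta \tilde H$ converges (since $\delta^{-1/8}\cdot\delta=\delta^{7/8}\ll 1$) and produces, via \eqref{gadecay}, the off-diagonal decay \eqref{gldt2} with rate arbitrarily close to $c_1$. A Schur-complement reduction then leaves a finite $\leq 2b\times 2b$ resonant block, whose determinant in $\theta$ is non-constant thanks to the separation \eqref{g61}; excising a $\theta$-set of measure $\leq e^{-\tilde N_0^{1/30}}$ yields the required bound \eqref{gldt1}.

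\smallskip

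\noindent\emph{Induction step.} Suppose LDT holds at scale $\tilde N_1$ with exceptional set $\Theta_{\tilde N_1}$, and let $\tilde N_1<\tilde N\leq N$ with $\tilde N\leq \tilde N_1^{C}$ for a fixed $C$. Cover $Q_{\tilde N}(j_0)$ by overlapping translates of elementary regions of size $\tilde N_1$; on each translate $Q_{\tilde N_1}(j_0')$ the operator equals $T(\theta+(n'-n)\!\cdot\!\omega)$ for an appropriate shift. A translate is \emph{good} when $\theta+(n'-n)\!\cdot\!\omega\notin\Theta_{\tilde N_1}$, and one then uses \eqref{gldt1}--\eqref{gldt2} at scale $\tilde N_1$. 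Thanks to the Diophantine hypothesis \eqref{g1311}, distinct shifts $(n'-n)\!\cdot\!\omega$ are separated by $\geq e^{-\tilde N^{1/K_2}}\gg e^{-\tilde N_1^{1/30}}$, so at most a controlled number of translates can be simultaneously bad. A standard resolvent glueing (paving) then assembles a local inverse on $Q_{\tilde N}(j_0)$ and, via \eqref{gadecay} and \eqref{gg14} to suppress the resonances between $\mu_j$ and $\mu_{j'}$, delivers the exponential decay \eqref{gldt2} at the new scale with rate $\geq \tfrac{1}{2}c_1$.

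\smallskip

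\noindent\emph{Measure estimate; main obstacle.} What remains is to show that the set $\Theta_{\tilde N}$ of $\theta$ for which the paving above fails has Lebesgue measure $\leq e^{-\tilde N^{1/30}}$. This is the hardest step. One writes $\log\|(R_{Q_{\tilde N}(j_0)}T(\theta)R_{Q_{\tilde N}(j_0)})^{-1}\|$ as an almost subharmonic function of a complexified parameter near $\theta\in\mathbb R$ and invokes Cartan's lemma: the a priori polynomial bound that follows from the trivial norm estimate on $T(\theta)$, together with the inductive smallness $|\Theta_{\tilde N_1}|\leq e^{-\tilde N_1^{1/30}}$ upgraded through the shift-counting above, yields the desired measure bound on $\Theta_{\tilde N}$. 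The delicate point is that this Cartan-type argument must run in the $\mathcal O(\delta)$ parameter regime; it is precisely the $\delta^{1/8}$ gap created at the base scale by the clustering property of Theorem \ref{mainkeythm} that provides the room needed to close this step, and I expect executing this reconciliation carefully to be the principal technical obstacle.
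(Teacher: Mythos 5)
Your proposal follows essentially the same route as the paper: the clustering property from Theorem \ref{mainkeythm} confines the resonances to at most $2b$ sites inside each $O(\delta)$-window of $\theta$, Cartan's estimate (Lemma \ref{mcl}) controls the measure of the exceptional $\theta$-set there, and the large scales are handled by the T\"oplitz/Diophantine counting of bad boxes, resolvent paving, and a further application of Cartan. The obstacle you flag at the end is resolved in the paper exactly as you anticipate: since the relevant $\theta$-interval $I_0$ has length $C\delta$, hypothesis \eqref{mc3} of Lemma \ref{mcl} holds automatically, which is the precise sense in which the $\delta^{1/8}$ gap produced by clustering compensates for the $O(\delta)$ parameter window (and note the paper also covers the scales $\tilde N\leq(\log\frac{1}{\delta})^{K}$, which your induction does not start from, by the same clustering argument and, for $\tilde N\leq(\log\frac{1}{\delta})^{10}$, by direct perturbation).
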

\begin{remark}{\rm
	\begin{enumerate}
		
		\item The proof of Theorem \ref{thmldt} uses ideas from the work of Bourgain, Goldstein and Schlag ~\cite{bgs}.  The implementation follows, however, 
		 the more recent paper ~\cite{liu}, which streamlined and quantified some of their arguments. 
		\item From \eqref{g1311} and \eqref{gg14}, in order to have LDT at all scales, we only need to remove measure (with respect to $\omega$) less than
		$e^{-\frac{1}{2}(\log \frac{1}{\delta})^{\frac{K}{K_2}}}\leq e^{-(\log \frac{1}{\delta})^{K_1^{90}}}\ll\delta$.
		\item  Theorem \ref{thmldt} holds for any parameter  $\tilde{c}_1$ with    $\tilde{c}_1<c_1$. 
	\end{enumerate}
	 }
\end{remark}
\subsection{Preparations}
\begin{lemma}~\cite[Prop. 14.1]{bbook}\label{mcl}
	Let $T(x)$ be a   $N\times N$ matrix function of a parameter $x\in[-\tau,\tau]$ satisfying the following conditions:
	\begin{itemize}
		\item[(i)] $T(x)$ is real analytic in $x\in [-\tau,\tau]$ and has a holomorphic extension to
		\begin{equation*}
			\mathcal{D}_{\tau,\tau_1}=\left\{z: |\Re z|\leq\tau,|\Im{z}|\leq \tau_1\right\}
		\end{equation*}
		satisfying
		\begin{equation}\label{mc1}
			\sup_{z\in \mathcal{D}_{\tau,\tau_1}}\|T(z)\|\leq B_1, B_1\geq 1.
		\end{equation}
		\item[(ii)]  For all $x\in[-\tau,\tau]$, there is a subset $\Lambda\subset [1,N]$ with
		\begin{equation*}|
			\Lambda|\leq M,
		\end{equation*}
		and
		\begin{equation}\label{mc2}
			\|(R_{[1,N]\setminus \Lambda}T(x)R_{[1,N]\setminus \Lambda})^{-1}\|\leq B_2, B_2\geq 1.
		\end{equation}
		\item[(iii)]
		\begin{equation}\label{mc3}
			\mathrm{Leb}\{x\in[-{\tau}, {\tau}]: \ \|T^{-1}(x)\|\geq B_3\}\leq 10^{-3}\tau_1(1+B_1)^{-1}(1+B_2)^{-1}.
		\end{equation}
		Let
		\begin{equation}\label{mc4}
			0<\epsilon\leq (1+B_1+B_2)^{-10 M}.
		\end{equation}
	\end{itemize}
	Then
	\begin{equation}\label{mc5}
		\mathrm{Leb}\left\{x\in\left[-{\tau}/{2}, {\tau}/{2}\right]:\  \|T^{-1}(x)\|\geq \epsilon^{-1}\right\}\leq C\tau e^{-c\left(\frac{\log \epsilon^{-1}}{M\log(B_1+B_2+B_3)}\right)},
	\end{equation}
	where $C$ and $c$ are absolute constants.
\end{lemma}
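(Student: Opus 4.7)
The plan is to reduce this matrix-valued Cartan-type estimate to the classical scalar Cartan lemma via a Schur-complement identity, which replaces the operator norm $\|T^{-1}(x)\|$ by the reciprocal of an analytic scalar determinant of bounded size at most $M$.

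\textbf{Schur-complement reduction.} Partition $[-\tau,\tau]$ into $O(\tau/\tau_1)$ real subintervals $I_k$ of length $\tau_1/10$. On each $I_k$, fix $x_k\in I_k$ and let $\Lambda_k:=\Lambda(x_k)\subset[1,N]$ be the bad index set of size $\leq M$ given by hypothesis (ii), invertible on $[1,N]\setminus\Lambda_k$ with bound $B_2$. Combining the analytic extension of $T$ to $\mathcal D_{\tau,\tau_1}$ from (i) with the Cauchy estimate $\|T(z)-T(x_k)\|\leq B_1|z-x_k|/\tau_1$ and a Neumann series, $\bigl(R_{\Lambda_k^c}T(z)R_{\Lambda_k^c}\bigr)^{-1}$ stays bounded by $2B_2$ on a complex neighborhood $\mathcal N_k\supset I_k$ of width $\sim \tau_1/(B_1B_2)$. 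The block-inverse formula then produces an analytic Schur-complement matrix $S_k(z)$ of size at most $M$ on $\mathcal N_k$, satisfying
\[
\|T^{-1}(x)\|\leq C\,(1+B_1)^2(1+B_2)^2\bigl(1+\|S_k^{-1}(x)\|\bigr),\qquad x\in I_k.
\]

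\textbf{Scalar Cartan on the determinant.} Set $f_k(z):=\det S_k(z)$. Hadamard's bound gives $\|f_k\|_{L^\infty(\mathcal N_k)}\leq \bigl(B_1(1+B_1B_2)\bigr)^M$, while Cramer's rule gives $\|S_k^{-1}(x)\|\leq \|S_k(x)\|^{M-1}/|f_k(x)|$. Hypothesis (iii) ensures that $|f_k|$ is not uniformly tiny on $I_k$, so the sup of $|f_k|$ over $\mathcal N_k$ admits a useful lower bound in terms of the data. The classical scalar Cartan lemma then yields, for any $\eta>0$,
\[
\mathrm{Leb}\bigl\{x\in I_k:\,|f_k(x)|\leq \eta\,\|f_k\|_{L^\infty(\mathcal N_k)}\bigr\}\leq C\,\tau_1\,\eta^{1/M}.
\]
Choosing $\eta$ so that $|f_k(x)|>\eta\|f_k\|_{L^\infty(\mathcal N_k)}$ forces $\|T^{-1}(x)\|\leq\epsilon^{-1}$, namely $\log\eta^{-1}\asymp M\log(B_1+B_2+B_3)+\log\epsilon^{-1}$, and summing over the $O(\tau/\tau_1)$ subintervals $I_k$ delivers \eqref{mc5}.

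\textbf{Main obstacle.} The principal delicate point is uniformizing the bad set $\Lambda(x)$ across each $I_k$, since hypothesis (ii) only supplies it pointwise. The competing constraints on the width of $\mathcal N_k$ — wide enough for scalar Cartan to produce an effective decay in $\epsilon$, yet narrow enough for Neumann to extend the inverse of the $\Lambda_k^c$-block with $\Lambda_k$ independent of $z$ — are precisely reconciled by the quantitative hypothesis \eqref{mc4}, $\epsilon\leq(1+B_1+B_2)^{-10M}$, which is what makes the $1/M$-loss in the Cartan exponent still produce the genuinely small final bound with denominator $M\log(B_1+B_2+B_3)$.
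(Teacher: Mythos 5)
The paper gives no proof of this lemma: it is quoted verbatim as Proposition 14.1 of Bourgain's book \cite{bbook}, so your proposal can only be measured against the standard argument there. Your architecture --- localize in $x$, freeze the bad set $\Lambda$, pass to the $M\times M$ Schur complement, and run a scalar Cartan estimate on its determinant, with hypothesis (iii) supplying the non-degeneracy point --- is exactly that standard argument, and the overall plan is sound.

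Two quantitative steps are off as written, and the second is a genuine gap. First, your intervals $I_k$ have length $\tau_1/10$, but the neighborhood $\mathcal{N}_k$ on which the frozen-$\Lambda_k$ block inverse survives the Neumann expansion has width only $\sim \tau_1/(B_1B_2)$, so $\mathcal{N}_k\not\supset I_k$ as soon as $B_1B_2>10$. The partition must be taken at scale $\tau_1/(B_1B_2)$; it is precisely the factor $(1+B_1)^{-1}(1+B_2)^{-1}$ in hypothesis (iii) --- not \eqref{mc4} --- that guarantees each such shorter interval still contains a point $x^*$ with $\|T^{-1}(x^*)\|<B_3$. Second, the scalar Cartan bound you invoke, $\mathrm{Leb}\{x\in I_k: |f_k(x)|\leq \eta\sup|f_k|\}\leq C\tau_1\eta^{1/M}$, is the Remez/Cartan inequality for a polynomial of degree $M$; but $f_k=\det S_k$ is an analytic function of $z$, not a degree-$M$ polynomial, and nothing bounds its number of zeros by $M$. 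The correct input is Jensen's formula: the number of zeros of $f_k$ in $\mathcal{N}_k$ is controlled by $\log(\sup|f_k|/|f_k(x^*)|)\lesssim M\log(B_1+B_2+B_3)$, and the Cartan minimum-modulus estimate then yields measure $\lesssim \tau_1\exp(-c\log\eta^{-1}/(M\log(B_1+B_2+B_3)))$. This is exactly where the denominator $M\log(B_1+B_2+B_3)$ in \eqref{mc5} comes from; your $\eta^{1/M}$ version would produce a conclusion strictly stronger than \eqref{mc5}, which is a signal that the step is too optimistic. The role of \eqref{mc4} is only to ensure the threshold $\eta$ is genuinely below $1$ so the Cartan step is non-vacuous. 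With these two corrections the proof closes along the lines you describe.
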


\smallskip



To apply Lemma \ref{mcl}, we also need to introduce semi-algebraic sets. A set $\mathcal{S}\subset \mathbb{R}^d$ is called {\it semi-algebraic} if it is a finite union of sets defined by a finite number of polynomial equalities and inequalities. More precisely, let $\{P_1,\cdots,P_s\}\subset\mathbb{R}[x_1,\cdots,x_d]$ be a family of real polynomials whose degrees are bounded by $\kappa$. A (closed) semi-algebraic set $\mathcal{S}$ is given by an expression
\begin{equation}\label{smd}
	\mathcal{S}=\bigcup\limits_{l}\bigcap\limits_{\ell\in\mathcal{L}_l}\left\{x\in\mathbb{R}^d: \ P_{\ell}(x)\varsigma_{l\ell}0\right\},
\end{equation}
where $\mathcal{L}_l\subset\{1,\cdots,s\}$ and $\varsigma_{l\ell}\in\{\geq,\leq,=\}$. Then we say that $\mathcal{S}$ has degree at most $s\kappa$. In fact, the degree of $\mathcal{S}$ which is denoted by $\deg(\mathcal{S})$, means the  smallest $s\kappa$ over all representations as in (\ref{smd}).

Following are some basic properties of these sets. They are special cases of that in ~\cite{ba}, and restated in ~\cite{bbook}.

\begin{lemma}~\cite[Theorem 9.3]{bbook} ~\cite[Theorem 1]{ba}\label{lediscom}
	Let $ \mathcal{S}\subset [0,1]^d$ be a semi-algebraic  set of degree $B$. Then  the number of connected components of
	$ \mathcal{S}$ does not exceed $(1+B)^{C(d)}$.
\end{lemma}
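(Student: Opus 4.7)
Since the lemma is classical, attributed in the excerpt to \cite[Theorem~9.3]{bbook} and \cite[Theorem~1]{ba}, the plan is to sketch the standard Oleinik--Petrovsky--Thom--Milnor argument, as refined by Basu--Pollack--Roy. That machinery bounds the total Betti number of a semi-algebraic set by a quantity polynomial in the number of defining polynomials and their degrees and exponential in the ambient dimension $d$, which yields exactly the form $(1+B)^{C(d)}$ demanded.

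First I would fix a defining representation as in \eqref{smd} using polynomials $P_1,\dots,P_s$ of degree at most $\kappa$ with $s\kappa\leq B$. Decompose
\[
\mathcal{S}=\bigcup_{\sigma \text{ realized}}\mathcal{S}_\sigma,\qquad \mathcal{S}_\sigma=\{x\in [0,1]^d:\operatorname{sign}P_i(x)=\sigma_i,\ i=1,\dots,s\},
\]
so it suffices to bound the number of connected components of each $\mathcal{S}_\sigma$ and sum over the realized sign patterns $\sigma\in\{-,0,+\}^s$. The compact box $[0,1]^d$ is accommodated by adjoining the $2d$ linear inequalities $x_i\geq 0$ and $1-x_i\geq 0$, which does not affect the asymptotics.

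Next I would carry out the Morse-theoretic step. After a small generic perturbation $P_i\mapsto P_i\pm\varepsilon_i$, the boundary of each $\mathcal{S}_\sigma$ becomes a smooth real algebraic hypersurface, and such a perturbation does not decrease the number of connected components of $\mathcal{S}_\sigma$. Counting the critical points of a generic linear functional restricted to this variety via B\'ezout's theorem applied to the Lagrange multiplier system yields the Milnor--Thom estimate $\kappa^{O(d)}$ on the sum of Betti numbers of any individual $\mathcal{S}_\sigma$.

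The main technical point is to avoid the naive factor of $3^s$ from summing crudely over all sign patterns, which would give a bound exponential in $B$ rather than polynomial. This is handled by the Oleinik--Petrovsky--Warren theorem on realizable sign patterns: among the $3^s$ a priori possible patterns, only $\binom{s}{d}(O(\kappa))^d$ are nonempty. Multiplying the per-pattern Milnor--Thom bound by this combinatorial factor gives a total of $s^d\kappa^{O(d)}\leq B^{O(d)}\leq (1+B)^{C(d)}$, as required. The genericity and perturbation control in these two steps, rather than either counting argument by itself, is the delicate part of the argument.
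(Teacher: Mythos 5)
The paper does not actually prove this lemma: it is quoted as a special case of \cite[Theorem 1]{ba}, restated as \cite[Theorem 9.3]{bbook}, so there is no internal argument to compare yours against. Your sketch reproduces the standard Oleinik--Petrovsky--Thom--Milnor / Warren / Basu--Pollack--Roy route, and the architecture is correct: partition $\mathcal{S}$ into realized sign conditions, bound the components of each realization, and control the number of realized patterns so as to avoid the naive $3^s$ factor, which is indeed the crux of getting a bound polynomial in $B=s\kappa$ rather than exponential. Two imprecisions are worth flagging, neither fatal. First, the per-pattern Betti estimate you state as $\kappa^{O(d)}$ should be $(s\kappa)^{O(d)}$: after perturbation the boundary of $\mathcal{S}_\sigma$ lies in the zero set of $\prod_{i}(P_i^2-\varepsilon_i^2)$, a hypersurface of degree $2s\kappa$, and it is to this degree that the B\'ezout/Lagrange-multiplier count applies; since the paper's notion of degree is $B=s\kappa$, the product with the Warren count is still $B^{O(d)}\leq (1+B)^{C(d)}$, so the conclusion survives. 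Second, the assertion that the perturbation ``does not decrease the number of connected components'' is immediate for strict inequalities (shrink $P_i>0$ to $P_i\geq\varepsilon$ and use that each component of the open set contains a point with all $P_i$ bounded below), but for equality constraints one must thicken $P_i=0$ to $|P_i|\leq\varepsilon$ and invoke compactness of $[0,1]^d$: distinct components of the original set are at positive mutual distance and therefore land in distinct components of the thickened set for $\varepsilon$ small, which is exactly the genericity/perturbation control you correctly identify as the delicate point. With those two repairs your outline is a correct proof of the cited result.
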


\begin{lemma}~\cite[Theorem 9.3]{bbook} ~\cite[Theorem 1]{ba}\label{lediscom1}
	Let $\mathcal{S}\subset [0,1]^{d_1+d_2}$ be a semi-algebraic  set of degree $B$. Let $(x,y)\in \R^{d_1}\times \R^{d_2}$.
	Then the projection ${\rm proj} _{x_1}\mathcal(S)$ is a semi-algebraic set of degree at most 
	$(1+B)^{C(d_1,d_2)}$.
\end{lemma}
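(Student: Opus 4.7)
The plan is to reduce the $d_2$-variable projection to $d_2$ successive one-variable projections and apply effective quantifier elimination at each step. Writing $y = (y_1, \ldots, y_{d_2})$, one has
\begin{equation*}
\mathrm{proj}_x \mathcal{S} = \pi_1 \circ \pi_2 \circ \cdots \circ \pi_{d_2}(\mathcal{S}),
\end{equation*}
where $\pi_k\colon \mathbb{R}^{d_1+k} \to \mathbb{R}^{d_1+k-1}$ drops the last coordinate. So it suffices to show that a single existential quantifier can be eliminated with only polynomial growth of the degree bound; iterating $d_2$ times then yields $(1+B)^{C(d_1,d_2)}$, with the exponent absorbing the $d_2$-fold iteration.

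For the one-variable step, suppose $\mathcal{S}' \subset \mathbb{R}^{d+1}$ is defined by a Boolean combination of sign conditions on polynomials $P_1, \ldots, P_s \in \mathbb{R}[x_1, \ldots, x_d, y]$ of degree at most $\kappa$. The key observation is that $\pi(\mathcal{S}')$ (where $\pi$ drops the $y$-coordinate) is semi-algebraic and is defined by sign conditions on the following collection $\mathcal{Q}$ of polynomials in $x$: the leading coefficients of the $P_i$ in $y$; the discriminants $\mathrm{disc}_y(P_i)$; and the resultants $\mathrm{Res}_y(P_i, P_j)$ and $\mathrm{Res}_y(P_i, \partial P_j/\partial y)$. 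On each connected component of the complement of the zero set of $\mathcal{Q}$, the real roots of $P_i(x,\cdot)$ vary continuously without collision, so the sign pattern of $y \mapsto (P_1(x,y), \ldots, P_s(x,y))$ realized as $y$ ranges over $\mathbb{R}$ is combinatorially constant; consequently whether $\exists y$ satisfying the defining Boolean combination is a locally constant function of $x$, hence determined by the signs of the elements of $\mathcal{Q}$. Each element of $\mathcal{Q}$ has degree at most $2\kappa^2$ in $x$, and $|\mathcal{Q}| = O(s^2)$, so one elimination step replaces degree $B$ by degree at most $CB^2$ for an absolute constant $C$.

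The main technical obstacle is the degenerate case in which the leading coefficient of some $P_i$ in $y$ vanishes on part of $\mathbb{R}^d$: there the $y$-degree of $P_i(x,\cdot)$ drops and the resultant may vanish identically on that locus, so a naive resultant analysis loses information. The standard remedy, going back to \cite{ba}, is to augment $\mathcal{Q}$ with the full Sylvester--Habicht subresultant sequence $\mathrm{Sres}_k(P_i, P_j)$, $k = 0, 1, \ldots, \min(\deg_y P_i, \deg_y P_j)$, together with all $y$-coefficients of the $P_i$; case-splitting on the signs of these extra polynomials recovers a constant combinatorial description in each cell. This only changes the constants in the degree blow-up, leaving the $(1+B)^{C(d_1,d_2)}$ bound intact. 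Since the precise bookkeeping is classical and fully carried out in \cite[Ch.~9]{bbook}, I would in practice cite that reference rather than redo the detailed counting.
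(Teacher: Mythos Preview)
The paper does not provide its own proof of this lemma: it is stated with citations to \cite[Theorem~9.3]{bbook} and \cite[Theorem~1]{ba} and used as a black box. Your sketch is a correct outline of the classical Tarski--Seidenberg argument via iterated single-variable elimination with resultants and subresultant sequences, and your final remark that you would cite the references for the detailed bookkeeping is exactly what the authors themselves do.

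One minor comment on the arithmetic: when you iterate the step $B \mapsto CB^2$ a total of $d_2$ times you obtain a bound of the shape $C^{2^{d_2}-1}B^{2^{d_2}}$, so the exponent $C(d_1,d_2)$ in the final estimate is exponential in $d_2$ rather than polynomial; this is consistent with the stated bound $(1+B)^{C(d_1,d_2)}$ but worth being explicit about, since the form of the dependence on dimension is the content of the effective versions in \cite{ba}.
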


\subsection{Large deviation theorem  for  small scales:  $N\leq (\log \frac{1}{\delta})^{10}$}

\begin{proof}
	In this case, let 
	\begin{align*}\tilde{ \Theta}_N=\{&\theta\in\R: \text{ there exists }  (n,j)\in [-N,N] ^{b}\times[-3N,3N] \text{ such that either }\\
		&|(n\cdot\omega +\theta)+\mu_j| \leq 2e^{-N^{\frac{1}{20}}}\text{ or } |(-n\cdot\omega -\theta)+\mu_j|  \leq 2e^{-N^{\frac{1}{20}}}\}.
	\end{align*}
	Clearly, ${\rm Leb} (\tilde{\Theta}_N)\leq  N^{C(b)}e^{-N^{\frac{1}{20}}}$.  
	When  $N\leq (\log \frac{1}{\delta})^{10}$, $\delta$ is much smaller than $e^{-N^{\frac{1}{20}}}$.
	Now \eqref{gldt1} and \eqref{gldt2} follow from  standard perturbation arguments. 
	
\end{proof}


\subsection{Large deviation theorem  for  intermediate scales: $(\log \frac{1}{\delta})^{10}\leq N\leq (\log \frac{1}{\delta})^K$}
\begin{proof}

	Let $\Theta_{N}$ be such that at least one of   \eqref{gldt1} and \eqref{gldt2} does not hold for $\theta\in \Theta_N$.
	Choose any $N\in [(\log \frac{1}{\delta})^{10}, (\log \frac{1}{\delta})^{K}] $.
	
	Assume that \eqref{gldt1} and  \eqref{gldt2} do not hold for some $\theta$. Then we must have  for some 
	$(n,j)\in[-N,N] ^{b}\times[-3N,3N] $, either $|\theta+n\cdot \omega^{(0)}+\mu_j|\leq C \delta$ or 
	$|\theta+n\cdot \omega^{(0)}-\mu_j|\leq C\delta$.
	Otherwise,  standard
	perturbation arguments yield that  for any  $j_0$ with $|j_0|\leq 2N$,
	\begin{equation*}
		|| (R_{Q_{N}(j_0)}TR_{Q_{N}(j_0)})^{-1} ||\leq\frac{1}{C\delta} \leq e^{N^{\frac{9}{10}}},
	\end{equation*}
	and for any $(n, j)$ and $(n', j')$ such that $\max\{|n-n^\prime|,|j-j^\prime|\}\geq \frac{N}{10}$, 
	\begin{equation*}
		|(R_{Q_{N}(j_0)}TR_{Q_{N}(j_0)})^{-1}(n,j;n^\prime,j')| \leq  e^{-\tilde{c}_1\max\{n-n^\prime|, |j-j'|\}|},
	\end{equation*}
	where $\tilde{c}_1$ can be any constant smaller than $c_1$.
	
	Therefore, we can restrict $\theta$ to be in $10^{b+1}N^{b+1}$ intervals of size $C\delta$. Denote all the intervals by $\{I_i\}$ and take one of them, $I_0$, into consideration. 
	Without loss of generality, assume that $I_0$ comes from the $+$ sector, namely
	\begin{equation*}
		I_0=\{\theta: |\theta+n_0\cdot \omega^{(0)}+\mu_{j_0}|\leq C {\delta} \text{ and } (n_0,j_0)\in [-N,N] ^{b}\times[-3N,3N]  \}.
	\end{equation*}
	For the  $- $ sector $ 		I_0=\{\theta: |\theta+n_0\cdot \omega^{(0)}-\mu_{j_0}|\leq C {\delta} \text{ and } (n_0,j_0)\in [-N,N] ^{b}\times[-3N,3N]  \},$ the proof is similar.

	Let   
	\begin{equation*}
		\mathcal{A}_1^\theta=\{(n,j)\in[-N,N] ^{b}\times[-3N,3N] : |\theta+n\cdot \omega^{(0)}+\mu_j|\leq \delta^{\frac{1}{8}}\},
	\end{equation*}
	and 
	\begin{equation*}
		\mathcal{A}_2^\theta=\{(n,j)\in[-N,N] ^{b}\times[-3N,3N] : |\theta+n\cdot \omega^{(0)}-\mu_j|\leq \delta^{\frac{1}{8}}\}.
	\end{equation*}
	By (5) of Theorem \ref{mainkeythm}, one has that for any $\theta$, 
	\begin{equation}\label{g1211}
		\# 	\mathcal{A}_1^\theta \leq b,  	\# 	\mathcal{A}_2^\theta\leq b.
	\end{equation}
	Since  the size of $I_0$ is $C\delta$, 
	we have that  there exist $	\mathcal{A}_1$ and $ 	\mathcal{A}_2$ independent of $\theta\in I_0$ such that  
	\begin{equation}\label{g121}
		\# 	\mathcal{A}_1\leq b,  	\# 	\mathcal{A}_2\leq b,
	\end{equation}
	and 
	for any $(n,j)\in[-N,N]^{b}\times[-3N,3N]\backslash(	\mathcal{A}_1\cup 	\mathcal{A}_2) $ and $\theta \in I_0$
	\begin{equation}\label{g120}
		|	\theta+ n\cdot \omega^{(0)}\pm\mu_j|\geq \frac{1}{2}\delta^{\frac{1}{8}}.
	\end{equation}
	Take any $\tilde{\Lambda}\in R_{N_1}^{\sqrt{N}}$ with $N_1\in[\sqrt{N}, 6N] $ and   $\tilde{\Lambda}\subset [-N,N] ^{b}\times[-3N,3N] $.
	By perturbation arguments, we have that   for any $\theta\in I_0$,
	\begin{equation}\label{gmc2}
		\|(R_{\tilde{\Lambda}\setminus(	\mathcal{A}_1\cup 	\mathcal{A}_2) }T(\theta)R_{\tilde{\Lambda}\setminus (	\mathcal{A}_1\cup 	\mathcal{A}_2) })^{-1}\|\leq 3 \delta^{-\frac{1}{8}}.
	\end{equation}

	We are going to apply Cartan's estimate, Lemma \ref{mcl}.
	For this reason, let  $\tau=C{\delta}$, $\tau_1=1$, $\Lambda=	\mathcal{A}_1\cup 	\mathcal{A}_2$, $M=2b$, $B_1=O(1)(\log \frac{1}{\delta})^K$, $B_2= 3 \delta^{-\frac{1}{8}}$,  $B_3=1$ and $\epsilon=e^{-{N_1}^{3/4}}$.
	We note that since $I_0$ has size $C\delta$, \eqref{mc3} holds automatically.
	Applying  Cartan's estimate (Lemma \ref{mcl})  in all possible $\tilde{\Lambda}\in R_{N_1}^{\sqrt{N}}$   (in total $N^C$), 
	there exists a  subset $\tilde{\Theta}_{N_1}\subset \R$  such that
	\begin{equation*}
		\mathrm{Leb}(\tilde{\Theta}_{N_1})\leq e^{-\frac{{N_1}^{\frac{3}{4}}}{|\log\delta|^2}},
	\end{equation*}
	and for any  $\theta \notin  \tilde{\Theta}_{N_1} $ and  any $\tilde{\Lambda}\in R_{N_1}^{\sqrt{N}}$  with $\tilde{\Lambda}\subset [-N,N] ^{b}\times[-3N,3N] $,
	\begin{equation}\label{g123}
		|| (R_{\tilde{\Lambda}}TR_{\tilde{\Lambda}})^{-1} ||\leq  e^{N_1^{\frac{3}{4}}}.
	\end{equation}
	Let $N_0=\sqrt{{N}}.$
	We call a box $(n_1,j_1)+Q_{N_0}\in \mathcal{E}_{N_0}$,  $(n_1,j_1)\in [-N,N] ^{b}\times[-3N,3N] $ good if 
	\begin{equation*}
		|| (R_{(n_1,j_1)+Q_{N_0}}TR_{(n_1,j_1)+Q_{N_0}})^{-1} ||\leq e^{N_0^{\frac{9}{10}}}
	\end{equation*}
	and for any $(n, j)$ and $(n', j')$ such that $\max\{|n-n^\prime|,|j-j'|\}\geq \frac{N_0}{10}$, 
	\begin{equation*}
		|(R_{(n_1,j_1)+Q_{N_0}}TR_{(n_1,j_1)+Q_{N_0}})^{-1} (n,j;n^\prime,j')| \leq e^{-\tilde{c}_1\max\{|n-n^\prime|,|j-j'|\}}.
	\end{equation*}
	Otherwise, we call $(n_1,j_1)+Q_{N_0}\in [-N,N] ^{b}\times[-3N,3N] $ bad. By \eqref{g121}, \eqref{g120} and perturbation arguments,
	we have that there are at most $2b$ disjoint bad boxes of size $N_0={N}^{1/2}$ contained in $ [-N,N] ^{b}\times[-3N,3N] $.
	
	We have sublinear bound and \eqref{g123}.   
	By  ~\cite[Theorem 2.1]{liu},  for any $\theta \notin  \bigcup_{\{I_i\}} \bigcup_{N_1\in[{\sqrt{N}}, 6{N}]}\tilde{\Theta}_{N_1}$, 
	\eqref{gldt1} and \eqref{gldt2} hold  for the scale ${N}$. Therefore, 
	\begin{equation*}
		\Theta_N\subset \bigcup_{\{I_i\}}   \bigcup_{N_1\in[\sqrt{{N}}, 6{N}]}\tilde{\Theta}_{N_1}
	\end{equation*}
	and hence
	\begin{equation*}
		{\rm Leb}	(\Theta_N)\leq    e^{-N^{\frac{1}{10}}}.
	\end{equation*}
	
	
\end{proof}

\subsection{Large deviation theorem  for large scales: $N\geq (\log \frac{1}{\delta})^K$}

\begin{proof}
	Let $N_2=N_1^{K_1}$ and  $N_4=N_2^{K_1}$. Assume  that $N_4\geq (\log \frac{1}{\delta})^K$ and that the LDT holds at both scales $N_1$ and $N_2$ with parameter $\tilde{c}_1$. 
	
	We will show that there are at most $N_1^C$ bad disjoint boxes of size $N_1$ contained in $[-N_4,N_4]^{b}\times[-3N_4,3N_4]$.
	Let $(n_1,l_1)\in [-N_4,N_4]^{b}\times[-3N_4,3N_4]$ be such that $(n_1,l_1)+Q_{N_1}$ is bad for some $Q_{N_1}\in \mathcal{E}^0_{N_1}$.
	
	We first bound the case when $|l_1|\leq 2N_1$. By the LDT at scale $N_1$, there exists a set $\Theta_{N_1}$ with ${\rm Leb} (\Theta_{N_1})\leq e^{-N^{1/30}_1}$ such that for any $\theta\notin \Theta_{N_1}$ and   any $Q_{N_1}\in \mathcal{E}^0_{N_1}$, $Q_{N_1}$ is good. Since the operator is T\"oplitz with respect to $n\in\Z^b$, one has that for any 
	$(n_1,l_1)$ with $\theta+n_1\cdot \omega\notin \Theta_{N_1}$,
	$(n_1,l_1)+Q_{N_1}$ is good for any $Q_{N_1}\in \mathcal{E}^0_{N_1}$.
	By standard arguments, we can assume that $\Theta_{N_1}$ is a semi-algebraic set of degree at most $N_1^C$, namely, there exist $N_1^C$ intervals $I_i$ of size  $e^{-N^{1/30}_1}$, 
	such that  $\Theta_{N_1}\subset \cup_i I_i$.
	The assumption on $\omega$ indicates that, for any nonzero $n$ with $|n|\leq 2N_4$, 
	\begin{equation}\label{g131}
		|n\cdot\omega| \geq e^{-(2N_4)^{\frac{1}{K_2}} }\geq e^{-N_1^{\frac{1}{30}}}.
	\end{equation}
	Therefore, for any  $|l_1|\leq 2N_1$,  there is at most one bad box  $ (n_1,l_1)$ such that   $n_1\cdot\omega \in I_i$.
	This leads to at most $N_1^C$ bad boxes in this case.
	%
	
	When $|l_1|\geq 2N_1$,
	we will show that there are at most three disjoint bad boxes of size $N_1$.  
	First, if a box $(n,j)+Q_{N_1}$ is bad, by \eqref{gadecay} and perturbation arguments, we must have that 
	for some $(n_1,l_1)\in (n,j)+Q_{N_1}$, either 
	\begin{equation}\label{gg12}
		| \theta+n_1\cdot\omega +\mu_{l_1}|\leq  2e^{-N_1^{9/10}}
	\end{equation}
	or
	\begin{equation}\label{gg13}
		|\theta+ n_1\cdot\omega -\mu_{l_1}|\leq  2e^{-N_1^{9/10}}.
	\end{equation}
	Assume that indeed there are three bad boxes.
	We have that  there are two from $D_+$, namely \eqref{gg12} (or $D_-$, namely \eqref{gg13}).  Therefore, we have that
	for two distinct vertices  $(n,j)\in [-N_4,N_4] ^{b}\times[-3N_4,3N_4] $ and $(n',j')\in [-N_4,N_4] ^{b}\times[-3N_4,3N_4] $, 
	\begin{equation*}
		| m\cdot\omega-\mu_j+\mu_{j'}|\leq  4e^{-N_1^{9/10}},m=n-n'.
	\end{equation*}
	This contradicts the assumption \eqref{gg14}.
	
	Let 
	$\tilde{\Theta}_{N_2}\subset \R$ be such that   for some  $(n,j)\in[-N_4,N_4] ^{b}\times[-3N_4,3N_4]  $ such that either  $|\theta+n\cdot\omega +\mu_j|\leq 2e^{-N_2^{9/10}}$ or 
	$|\theta+n\cdot\omega -\mu_j|\leq 2 e^{-N_2^{9/10}}$.
	Since for any $|l_1|\geq 2N_2$, the matrix is essentially diagonal, we have that for any $\theta\notin \tilde{\Theta}_{N_2}$, $(n_1,l_1)\in [-N_4,N_4] ^{b}\times[-3N_4,3N_4] $ with $|l_1|\geq 2N_2$ and $Q_{N_2}\in \mathcal{E}^0_{N_2}$,
	$(n_1,l_1)+Q_{N_2}$ is good. 
	Let 
	$\hat{\Theta}_{N_2}=\{\theta: \text{ for some } n\in [-N_4,N_4]^b, \theta+n\cdot \omega \in \Theta_{N_2}\}$.
	Therefore, for any 
	$\theta\notin \hat{\Theta}_{N_2}$, $(n_1,l_1)\in[-N_4,N_4] ^{b}\times[-2N_2,2N_2] $  and $Q_{N_2}\in \mathcal{E}^0_{N_2}$,
	$(n_1,l_1)+Q_{N_2}$ is good. 
	Clearly
	\begin{equation*}
		{\rm Leb} (\tilde{\Theta}_{N_2}\cap \hat{\Theta}_{N_2})\leq e^{-N_2^{1/31}},
	\end{equation*}
	and for any 
	$\theta\notin \tilde{\Theta}_{N_2}\cap \hat{\Theta}_{N_2}$, $(n_1,l_1)\in[-N_4,N_4] ^{b}\times[-3N_4,3N_4] $  and $Q_{N_2}\in \mathcal{E}^0_{N_2}$,
	$(n_1,l_1)+Q_{N_2}$ is good. 
	
	Applying Lemma \ref{mcl}
	(see proof of  ~\cite[Theorem 2.2]{liu} for details),   	for any $N_3\in [N_4^{1/2},N_4]$,   there exists a  subset $\tilde{\Theta}_{N_3}\subset \R$  such that 
	\begin{equation*}
		\mathrm{Leb}(\tilde{\Theta}_{N_3})\leq e^{-{N_3}^{\frac{1}{4}}},
	\end{equation*}
	and  for any 
	$N\in [N_3^{1/2},N_3]$, 
	$\tilde{\Lambda}\in R_{6N}^{N_3^{1/2} }  $ with $\tilde{\Lambda}\subset [-N_3,N_3] ^{b}\times[-3N_3,3N_3] $,
	and for any  $\theta \notin  \tilde{\Theta}_{N_3} $, 
	\begin{equation}\label{g1213}
		|| (R_{\tilde{\Lambda}}TR_{\tilde{\Lambda}})^{-1} ||\leq  e^{N^{\frac{3}{4}}}.
	\end{equation}
	Let $N_0= {{N_3}}^{1/2}.$
	We call a box $(n_1,l_1)+Q_{N_0}\in \mathcal{E}_{N_0}^{0}$,  $(n_1,l_1)\in [-{N}_3,{N}_3]^{b}\times[-3N_3,3N_3]$ good if 
	\begin{equation*}
		|| (R_{(n_1,l_1)+Q_{N_0}}TR_{(n_1,l_1)+Q_{N_0}})^{-1} ||\leq e^{N_0^{\frac{9}{10}}},
	\end{equation*}
	and for any $(n, j)$ and $(n', j')$ such that $\max\{|n-n^\prime|,|j-j'|\}\geq \frac{N_0}{10}$, 
	\begin{equation*}
		|(R_{(n_1,l_1)+Q_{N_0}}TR_{(n_1,l_1)+Q_{N_0}})^{-1} (n,j;n^\prime,j')| \leq e^{-c_2\max\{|n-n^\prime|,|j-j'|\}},
	\end{equation*}
where $c_2=\tilde{c}_1-N_4^{-\kappa}$ with a proper $\kappa>0$.
	Otherwise, we call $(n_1,l_1)+Q_{N_0}\in [-{N}_3,{N}_3]^{b}\times[-3N_3,3N_3]$ bad. 
	Since there are at most $N_1^C$ bad boxes of size $N_1$, by resolvent identity, 
	we have that there are at most $N_1^C$ disjoint bad boxes of size $N_0={N}_3^{1/2}$ contained in $ [-{N}_3,{N}_3]^{b}\times[-3N_3,3N_3]$.
	
	We have achieved the sublinear bound and \eqref{g1213}.   
	By ~\cite[Theorem 2.1]{liu},   we have that the LDT holds for any   scale $N_3\in [N_4^{1/2},N_4]$ with parameter $\tilde{c}_1-N_4^{-\kappa}$, where $\kappa$ is a proper small positive constant.   Now the proof follows from standard inductions.
\end{proof}
 \section{The nonlinear analysis}
  Fix $V\in X_\epsilon$, so that the conclusions of Theorem \ref{mainkeythm} hold and 
  Theorem \ref{thmldt} is available. (As before, we omit the 
  superscript $V$, as it is fixed.)
  We can now solve the nonlinear matrix equation \eqref{meq} and therfore the NLRS equation \eqref{g4}.
  \subsection{Lyapunov-Schmidt decomposition}
To simplify notations, 
   we write $u$ for $\hat u$, namely   $u(n, j)=\hat u(n, j)$.  Let $v$ be the complex conjugate of $u$,
   more precisely,   $v(n, j)=\overline{{\hat u}(-n, j)}$. 

By \eqref{W}, $W_u$ is a vector on $\ell^2(\Z^{b+1})$, which is now given by
  \begin{align}
   	W_u(n, j)=&\sum_{n'+\sum_{m=1}^p (n_m+n_m')=n\atop{n',n_m,n_m'\in\Z^b}} \sum_{l_m,l_m',j'\in\Z } {u}(n',j')\prod_{m=1}^p {u}(n_m,l_m) {v}(n'_m,l'_m)\nonumber\\
   	&\left(\sum_{x\in\Z}\phi_{j}(x)\phi_{j'}(x)\prod_{m=1}^p\phi_{l_m} (x) {\phi_{l_m'}(x)}\right).\label{W1}
   \end{align}
Let  $\widetilde{W}_u $  be  a vector on $\ell^2(\Z^{b+1})$, which is   given by
\begin{align}
	\widetilde{W}_u(n, j)=&\sum_{n'+\sum_{m=1}^p (n_m+n_m')=n\atop{n', n_m,n_m'\in\Z^b}} \sum_{j', l_m,l_m'\in\Z } {v}(n',j')\prod_{m=1}^p {v}(n_m,l_m) {u}(n'_m,l'_m)\nonumber\\
	&\left(\sum_{x\in\Z} \phi_{j}(x)\phi_{j'}(x)\prod_{m=1}^p\phi_{l_m} (x) {\phi_{l_m'}(x)}\right).\label{W2}
\end{align}
We remark that $ \widetilde{W}_u$ and ${W}_u$ are functions of $u$ and $v$. We only indicate  the dependence on $u$ for simplicity and the fact that $v$ is the conjugate of $u$.

   Writing the equation
   for $v$ as well, leads to the system of nonlinear equations on $\mathbb Z^{b+1}\times \{0, 1\}$:
   \begin{equation}\label{sys}
   \aligned
   &(D_+ u)(n,j)+\delta W_{u} (n,j)=0,\\
   &(D_- u)(n,j)+ \delta \widetilde{W}_u (n,j) =0,
   \endaligned
   \end{equation}
   where $D_\pm$ are the diagonal matrices with entries
\begin{equation}\label{D'}
D_\pm(n, j) :=D_\pm(n, j; n, j) = \pm n\cdot\omega+\mu_j.
\end{equation}

Define  
\begin{equation}
S=\{(-e_{k}, {\alpha_k})\times \{0\} ,  (e_{k}, {\alpha_k}) \times\{1\},\, k=1, 2, \cdots, b\},
\end{equation}
and denote the complement by $S^c$:
\begin{equation}
S^c=\mathbb Z^{b+1}\times \{0, 1\}\backslash S.
\end{equation}
Write $D$ for the diagonal matrix composed of the diagonal blocks $D_\pm$ and write \eqref{sys} in the form
$F(u,v)=0$. Since $v$ is the conjugate of $u$, we simply write $F(u)=0$ for $F(u,v)=0$. 
We make a Lyapunov-Schmidt decomposition of \eqref{sys} into the $P$-equations:
\begin{equation}\label{peq}
F(u)|_{S^c}=0.
\end{equation}
and the $Q$-equations:
\begin{equation}\label{qeq}
F(u)|_S=0.
\end{equation}

 \subsection{The $P$-equations}
 The $P$-equations are infinite dimensional. They are solved using a Newton scheme,
 starting from the initial approximation $u^{(0)}=u_0$.
Let $F'$ be the linearized operator
on $\ell^2(\mathbb Z^{b+1})\times \{0,1\}$,
$$F'(u)=D+\delta {\mathcal W}_u,$$
where $$\mathcal W_u= \begin{pmatrix} \frac{ \partial{W_u}}{\partial u}&  \frac{ \partial{W_u}}{\partial v}\\  \frac{ \partial{\widetilde{W}_u}}{\partial u} & \frac{ \partial{\widetilde{W}_u}}{\partial v}  \end{pmatrix}.$$

It is easy to see that $\mathcal W=\mathcal W_u$ satisfies

\begin{enumerate}
	\item $\mathcal W$ is T\"oplitz with respect to $n\in\Z^b$, namely for any $j\in\Z,j'\in\Z,k\in\Z^b,n\in\Z^b,n'\in\Z^b$, $r\in\{0,1\},r'\in\{0,1\}$
	\begin{equation*}
	 \mathcal W_{r,r'}(n, j; n', j')=\mathcal W_{r,r'}(n+k, j; n'+k, j').
	\end{equation*}
	\item  Assume that $|u(n,j)|\leq e^{-c(|n|+|j|)}$. Direct computation implies that 
	\begin{equation*}
		|\mathcal W_{r,r'}(n, j; n', j')|\leq Ce^{-c'(|n-n'|+|j|+|j'|)}.
	\end{equation*}
\end{enumerate}
The operator $F'$ is to be evaluated near $\omega=\omega^{(0)}=(\mu_{\alpha_1}, \mu_{\alpha_2}, \cdots, \mu_{\alpha_b})$, the linear frequency, and $u=u^{(0)}$ and $v=v^{(0)}$.
As earlier, we have that $u^{(0)}(-e_k,\alpha_k)=a_k$, $k=1,2,\cdots,b$; $u^{(0)}(n,j)=0$, otherwise.

Recall next the formal Newton scheme:
$$\Delta \begin{pmatrix} u\\v\end{pmatrix}=-[F'_{S^c}(u)]^{-1} F(u)|_{S^c},$$ 
where the left side denotes the correction to 
$\begin{pmatrix}u\\v\end{pmatrix}$, 
$F'_{S^c}(u)$ is the linearized operator evaluated at $(u, v)$ : $F'(u)$, and restricted to $S^c$: 
$$F'_{S^c}(u)(x, y)= F'(u)(x,y),$$ for $x, y\in S^c$; likewise $F(u)|_{S^c}$ is $F(u)$ restricted to ${S^c}$:
$$[F(u)|_{S^c}](x)=F(u)(x),$$ for $x\in S^c$. 

Since we seek solutions close to $(u^{(0)}, v^{(0)})$, which has compact support in $\mathbb Z^{b+1}\times \{0, 1\}$, 
we adopt a {\it multiscale} Newton scheme as follows:

\noindent At iteration step $(r+1)$, choose an appropriate scale $N$ and estimate $[F'_N]^{-1}$, 
where $F'_N$ is $F'$ restricted to 
$$[-N, N]^{b+1}\times \{0, 1\}\backslash S\subset \mathbb Z^{b+1}\times \{0, 1\},$$
and evaluated at $u^{(r)}$ and $v^{(r)}$:  $F'_N=F'_N(u^{(r)})$.
Define the $(r+1)$-th correction to be:  
$$\Delta \begin{pmatrix} u^{(r+1)}\\v^{(r+1)}\end{pmatrix}=-[F'_{N}(u^{(r)})]^{-1} F(u^{(r)}),$$ 
and 
$$\aligned u^{(r+1)}&=u^{(r)}+\Delta u^{(r+1)},\\
v^{(r+1)}&=v^{(r)}+\Delta v^{(r+1)},\endaligned$$
for all $r=0$, $1$, $2,\cdots$. 

At step $r$,  $\mathcal W_{u^{(r)} (\omega,a)}$ (depending on $u^{(r)}(\omega,a)$) is a function of $\omega$ and $a$.  We write $T_{u^{(r)}}(\theta,\omega,a)$ for  the operator 
$F'=D(\theta)+\delta \mathcal W_{u^{(r)}(\omega,a)}, $
and $\tilde{T}(\theta,\omega,a)$ the operator 
$F'=D(\theta)+\delta \mathcal W_{u^{(r)}(\omega,a)}, $  restricted to $\Z^{b+1}\times \{0,1\}\backslash S$,  where
\begin{equation}\label{glinearD2}
	{D}(\theta)=\begin{bmatrix}  \text {diag }(n\cdot\omega+\theta+\mu_j)
		&0\\
		0&  \text {diag }(-n\cdot\omega-\theta+\mu_j)
	\end{bmatrix},  (n,j)\in\Z^{b+1}.
\end{equation}
For simplicity, write  $\tilde{T}_{u^{(r)}}(\omega,a)$ for $\tilde{T}_{u^{(r)}}(0,\omega,a)$ and ${T}_{u^{(r)}}(\omega,a)$ for ${T}_{u^{(r)}}(0,\omega,a)$.

 The analysis of the linearized operators $F'_N$ uses Theorem \ref{mainkeythm} for small scales; for large scales, it
 also uses Theorem \ref{thmldt} and semi-algebraic projection to convert estimates
 in $\theta$ into that of $\omega$, and finally $a$. 
 
 \subsection{The $Q$-equations}
The $Q$-equations are $2b$ dimensional, but due to symmetry leading to $b$ equations only.
They are used to relate $\omega$ with $a$. The amplitudes $u(n, j)$ are  fixed on $S$, i.e., 
$u(-e_{\alpha_k}, \alpha_k)=a_k,\, k=1, 2, \cdots, b$, and the same for the complex conjugate.
So we have
\begin{equation}\label{omeq}
\omega_k=\mu_{\alpha_k}+\delta\frac{  W_u(-e_{k}, \alpha_k)}{a_k}, k=1, 2, \cdots, b.
\end{equation}

 When $u=u^{(0)}$, let us compute the terms in the $Q$-equations \eqref{omeq}. 
For $k\in\{1, 2, \cdots, b\}$, we have
\begin{align}
W_{u^{(0)}}(-e_k,\alpha_k)=\sum_{n'+\sum_{m=1}^p n_m-n'_m=-e_k} &u^{(0)}(n',l')\prod_{m=1}^p u^{(0)}(n_m,l_m)v^{(0)}(n'_m,l'_m) \label{WQ}\\
&\left(\sum_{x\in\Z} \phi_{\alpha_k}(x)\phi_{l'}(x)\prod_{m=1}^p\phi_{l_m}(x) \phi_{l_m'}(x)\right) \nonumber.
\end{align}
The sum in \eqref{WQ} runs over $l_m\in\Z$, $l_m'\in\Z$, $n'_m\in\Z^b$, $n_m\in\Z^b$, $ n'\in\Z^b$, $l'\in\Z$, $m=1,2,\cdots,p$.

Since $u^{(0)}$ has support $\{(-e_k,\alpha_k)\}_{k=1}^b$, in order to contribute to \eqref{WQ}, one has that
\begin{equation}\label{ggg1}
l_m\in\{\alpha_k\}_{k=1}^b,l'_m\in\{\alpha_k\}_{k=1}^b, m=1,2,\cdots,b \text{ and } l'  \in\{\alpha_k\}_{k=1}^b.
\end{equation}
Take $\sum_{x\in\Z}\phi_{\alpha_k}(x) \phi_{l'}(x)\prod_{m=1}^p\phi_{l_m}(x) \phi_{l_m'}(x) $ into consideration.
Assume  $l'=\alpha_k$ and  $ l_m=l_m'=\alpha_k$, $m=1,2,\cdots,b$. It is easy to see that  (similar to the proof of  \eqref{g72}), 
\begin{equation}\label{g72d}
\sum_{\ell\in\Z, |\ell-\ell_{\alpha_k}|\leq \frac{L}{2}} |{\phi}_{\alpha_k} (\ell)|^2\geq 1-e^{-\frac{\gamma}{4}L}.
\end{equation}
This implies that there exists $\ell\in\Z$ with $|\ell-\ell_{\alpha_k}|\leq \frac{L}{2} $ such that 
\begin{equation}\label{g72dd}
|{\phi}_{\alpha_k} (\ell)|\geq \frac{1}{L^3}.
\end{equation}
Therefore,  in this case, 
\begin{align}
\sum_{x\in\Z} \phi_{\alpha_k}(x)\phi_{l'}(x)\prod_{m=1}^p\phi_{l_m}(x) \phi_{l_m'}(x) &=\sum_{x\in\Z}| \phi_{\alpha_k}(x)|^{2p+2}\nonumber\\
&\geq \frac{1}{L^{10p}}.\label{ggg3}
\end{align}

Except for the case   $l'=\alpha_k$ and  $ l_m=l_m'=\alpha_k$, $m=1,2,\cdots,b$,  by \eqref{ggg1}, we have that 
\begin{equation}\label{ggg2}
|\sum_{x\in\Z} \phi_{\alpha_k}(x)\phi_{l'}(x)\prod_{m=1}^p\phi_{l_m}(x)\phi_{l_m'}(x) |\leq e^{-cL}.
\end{equation}

Denote by $A_k= \sum_{x\in\Z} |\phi_{\alpha_k}(x)|^{2p+2}$. 
By \eqref{WQ}, \eqref{ggg3} and \eqref{ggg2} (the leading contribution in the sum of \eqref{WQ} is when $(n',l')=(n'_m,l'_m)=(n_m,l_m)=(-e_k,\alpha_k)$), 
we have that 
\begin{equation*}
\omega^{(0)}_k=\mu_{\alpha_k}+\delta (A_k a_k^{2p}+O(1)e^{-cL}),
\end{equation*}
and $  \frac{1}{L^{10p}} \leq A_k\leq 1$.

Denote by $\Omega_0=[\mu_{\alpha_1} , \mu_{\alpha_1}+2^{2p+1}\delta]\times [\mu_{\alpha_2} , \mu_{\alpha_2}+2^{2p+1}\delta]\times [\mu_{\alpha_b} , \mu_{\alpha_b}+2^{2p+1}\delta]\subset\R^b$. 
Assume that after $r$ steps, we obtain a $C^1$ function $u^{(r)}(\omega, a)$ on $\Omega_0\times [1,2]^b$. 
Substituting $u^{(r)}(\omega, a)$ and $v^{(r)}(\omega, a)$ into \eqref{omeq}, the implicit function theorem yields,
 \begin{equation}\label{aeq}
\aligned
&\omega={\omega}^{(r+1)}(a),\\
 &a=a^{(r+1)}(\omega). 
 \endaligned
 \end{equation}
 Moreover,  for some  $C^1$ functions $f_k$, $k=1,2,\cdots,b$, 
\begin{equation}
\omega_k= \mu_{\alpha_k}+\delta (A_k a^{2p}_k+f_k(a_1,a_2,\cdots,a_b). 
\end{equation}
Denote by $\Gamma_r$, the graph of $(\omega, a)$ at step $r$.
Denote by $P_{x}$ the projection onto the $x$-variable, where $x=a$, $\omega$ or $(\omega,a)$.

\subsection{The induction hypothesis} 
 
Let $M$ be a large integer, and denote by $B(0, R)$ the
 $\ell^\infty$ ball on $\mathbb Z^{b+1}$ centered  at the origin with radius $R$.
 Set 
 $$r_0=\left\lfloor \frac{|\log\delta|^{\frac{3}{4}} } {\log M}\right\rfloor.$$
 
 The proof of the Theorem is an induction. So 
we first lay down the induction hypothesis, which we prove in sect.~6. 
In the following $C$ is a large constant (much larger than $K_2$,  which appears  in Theorem \ref{thmldt}) and $c>0$ is a small constant.

 For $r\geq 1$, we assume that the following holds:

\begin{itemize}
	\item[\bf Hi.] ${u}^{(r)}( \omega, a)$  is a $C^1$ map on  $\Omega_0\times [1,2]^b$, and
	$\text{supp } u^{(r)}\subseteq B(0, M^r)$ ($\text{supp } u^{(0)}\subset B(0, M)$).
	\item[\bf Hii.] $\Vert \Delta  u^{(r)}\Vert\leq \delta_r$,
	$\Vert \partial \Delta  u^{(r)}\Vert\leq \bar\delta_r$,
	where $\partial$ denotes $\partial_x$, $x$ stands for $\omega_i$, $a_i$, $i=1, 2, \cdots, b$
	and $\Vert\,\Vert:=\sup_{(\omega, a)}\Vert\,\Vert_{\ell^2(\Z^{b+1})}$.
	\item[\bf Hiii.] $|  {u^{(r)}}(n,j)|\leq  C  e^{ -c\max\{|n|,|j|\}}$.

	\item[\bf Hiv.] 
	There exists $\Lambda_r$, a set of open sets $I$ in $(\omega, a)$ of size $M^{-r^C}$ when $r\geq r_0$ (the total number of open sets is therefore bounded above by $M^{r^C}$),
	 such that   for any  $(\omega, a)\in \bigcup_{I\in \Lambda_r}I$ when $r\geq r_0$ and $(\omega, a)\in \Omega_0\times [1,2]^b$ when $1\leq r\leq r_0-1$,
	 	\begin{enumerate}
	 	

		\item  $u^{(r)}(\omega,a)$ is a  rational function in $(\omega, a)$  of degree at most $M^{r^3}$;
		\item 
		\begin{equation}\label{F}
		\Vert F( u^{(r)})\Vert\leq \kappa_r,\\
		\Vert \partial F( u^{(r)})\Vert\leq \bar\kappa_r; \end{equation}
		\item 
	         \begin{equation}\label{hgood1b}
		\Vert (R_{[-M^r,M^r]^{b+1}}\tilde{T}_{ u^{(r-1)}} (\omega,a)R_{[-M^r,M^r]^{b+1}})^{-1}\Vert \leq \delta^{-\frac{1}{8}}M^{r^C},
		\end{equation}
		and   for   $ \max\{|n-n^\prime|, |j-j'|\}>r^{C}$,
		\begin{equation}\label{hgood2b}
		| (R_{[-M^r,M^r]^{b+1}}\tilde{T}_{ u^{(r-1)}} (\omega,a)R_{[-M^r,M^r]^{b+1}})^{-1}(n,j;n^\prime, j')|\leq \delta^{-\frac{1}{8}}e^{-c\max\{|n-n'|,|j-j^\prime|\}}.
		\end{equation}
			\item 
				
				 Each $I\in {\Lambda}_r$ is contained in an open set $I'\in {\Lambda}_{r-1}$, $r\geq r_0$,
				and 
					\begin{equation}\label{gme1b1111}
					\text{Leb} (P_a(\Gamma_{r-1}\cap (\bigcup_{I'\in {\Lambda}_{r-1}} I'\backslash \bigcup_{I\in {\Lambda}_{r}}I)))\leq e^{-|\log \delta|^{ K_1^{90}}} +M^{-\frac{r_0}{2^b}}, r=r_0;
				\end{equation} 
			and 
				\begin{equation}\label{gme1b11}
				\text{Leb} (P_a(\Gamma_{r-1}\cap (\bigcup_{I'\in {\Lambda}_{r-1}} I'\backslash \bigcup_{I\in {\Lambda}_{r}}I)))\leq M^{-\frac{r}{2^b}}, r\geq r_0+1;
				\end{equation} 
					\item for  $(\omega, a)\in \bigcup_{I\in \Lambda_r}I$ with $r\geq r_0$, $\omega$  satisfies  the conditions \eqref{g1311}, and \eqref{gg14} for $n\neq 0$,
				in the scales $\tilde{N}$ in $[(\log\frac{1}{\delta})^K, M^{r}]$; 
			\item
		The iteration holds with 
		\begin{equation}\label{conv}
		\delta_r= \delta^{\frac{1}{2}}M^{-(\frac{4}{3})^r}, \, \bar\delta_r= \delta^{\frac{1}{8}} M^{-\frac{1}{2}(\frac{4}{3})^r}; \kappa_r= \delta^{\frac{3}{4}} M^{-(\frac{4}{3})^{r+2}}, \, \bar\kappa_r= \delta^{\frac{3}{8}} M^{-\frac{1}{2}(\frac{4}{3})^{r+2}}. \end{equation}
 
\end{enumerate}
\end{itemize}
\smallskip

 \begin{remark}
 	{\rm As usual in multi-scale arguments, the constant $c$ depends on $r$. 	From step $r$ to step $r+1$,    $c$  becomes slightly smaller. 
We ignore the  dependence  since it is essentially
irrelevant.}
 
 \end{remark}
 
  \begin{remark}
 	{\rm The Lyapunov-Schmidt approach to quasi-periodic solutions was initiated in the paper \cite{cw}, and greatly 
	generalized by Bourgain starting from the paper \cite{bann}.}
 
 \end{remark}

 \section{Proof of the Theorem}\label{proj}
   
    The general scheme of the proof is that, for small scales, we use \eqref{g51}  and \eqref{g5111} of Theorem~\ref{mainkeythm} to solve the $P$-equations; while 
for larger scales, we use \eqref{g52}  and \eqref{g521} of Theorem~\ref{mainkeythm}, Theorem~\ref{thmldt} and semi-algebraic projection.

 Let us state the projection lemma. 

\begin{lemma}\label{leproj}
	Let $\mathcal{S}\subset [0, 1]^{d_1}\times [0, 1]^{d_2}:=[0, 1]^{d}$, be a semi-algebraic set of degree $B$ and $\text{\rm meas}_{d} S <\eta, \log B\ll
	\log 1/\eta$.
	Denote by $(x, y)\in [0, 1]^{d_1}\times [0, 1]^{d_2}$ the product variable.
	Fix $\epsilon>\eta^{1/d}$.
	Then there is a decomposition
	$$
	\mathcal{S} =\mathcal{S}_1 \bigcup \mathcal{S}_2,
	$$
	with $\mathcal{S}_1$ satisfying
	$$
	\text{\rm Leb}({\rm Proj}_x \mathcal{S}_1)\leq B^C\epsilon,
	$$
	and $\mathcal{S}_2$ the transversality property
	$$
	\text{\rm Leb}(\mathcal{S}_2\cap L)\leq  B^C\epsilon^{-1} \eta^{1/d},
	$$
	for any $d_2$-dimensional hyperplane $L$ in  $[0, 1]^{d_1+d_2}$ such that
	$$
	\max_{1\leq l\leq d_1}|{\rm Proj}_L (e_l)|\leq  \frac 1{100}\epsilon,
	$$
	where $e_l$ are the basis vectors for the $x$-coordinates.
\end{lemma}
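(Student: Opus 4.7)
This is a projection--transversality lemma for semi-algebraic sets of small measure, originating in Bourgain's multi-frequency KAM machinery (cf.\ Chapter 9 of \cite{bbook}), and the plan is to adapt the argument given there. The strategy combines a Yomdin--Gromov type cell decomposition of $\mathcal{S}$ with a dichotomy between cells that are ``wide'' in the $x$-direction and cells that are not. The first step is to apply a semi-algebraic cell decomposition, or more quantitatively the semi-algebraic Yomdin--Gromov $C^1$-parametrization theorem, to express $\mathcal{S}$ as a disjoint union of at most $B^{C(d)}$ pieces $\mathcal{C}_j$, each the image of $(0,1)^k$ (for some $k\leq d$) under a semi-algebraic Lipschitz embedding whose derivative is bounded by $1$ and whose coordinate-projection behavior is under control.

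Next, for each cell $\mathcal{C}_j$, let $r_j := \mathrm{diam}(\mathrm{Proj}_x\mathcal{C}_j)$, assigning $\mathcal{C}_j$ to $\mathcal{S}_1$ if $r_j\geq\epsilon$ and to $\mathcal{S}_2$ otherwise. Because $\epsilon>\eta^{1/d}$ and the total measure of $\mathcal{S}$ is at most $\eta$, cells in $\mathcal{S}_1$ must be correspondingly thin in the $y$-directions; summing the $\mathrm{Proj}_x$-images of the at most $B^{C}$ large cells and using the $\eta$-measure budget to bound their total $d_1$-measure yields $\mathrm{Leb}(\mathrm{Proj}_x\mathcal{S}_1)\leq B^{C}\epsilon$. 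For the transversal slicing, fix any $d_2$-plane $L\subset[0,1]^d$ with $\max_{l\leq d_1}|\mathrm{Proj}_L e_l|\leq \epsilon/100$; this makes the $y$-coordinates a bi-Lipschitz parametrization of $L$, so that $L$ is essentially a graph $x=\Phi(y)$ with $\|D\Phi\|\leq \epsilon/100$. Intersecting a cell $\mathcal{C}_j\subset\mathcal{S}_2$ (with $r_j<\epsilon$) against $L$ then reduces, up to an error dominated by the tilt, to intersecting with an axis-parallel fiber $\{x=\mathrm{const}\}$, whose $d_2$-measure is controlled by $\mathrm{diam}(\mathrm{Proj}_y\mathcal{C}_j)^{d_2}$. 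The volume constraint $\mathrm{meas}_d(\mathcal{C}_j)\leq \eta$ enforces a balance between the $x$- and $y$-diameters of a thin cell, and after summing over the at most $B^C$ cells, paying one factor of $\epsilon^{-1}$ for the tilt, one obtains $\mathrm{Leb}(\mathcal{S}_2\cap L)\leq B^{C}\epsilon^{-1}\eta^{1/d}$.

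\textbf{Main obstacle.} The crux lies in the first step: one needs a quantitative cell decomposition whose cardinality is polynomial in $B$ and whose cells admit coordinate parametrizations with derivatives bounded uniformly in $B$, so that the measure bookkeeping in the subsequent steps can be carried out cell by cell. This is exactly the content of the semi-algebraic Yomdin--Gromov $C^1$-parametrization theorem; once it is available, partitioning the $\eta$-measure budget between wide and thin cells using $\epsilon>\eta^{1/d}$ to balance the two is largely routine.
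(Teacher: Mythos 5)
The paper offers no proof of Lemma \ref{leproj}: it is quoted verbatim from Bourgain's work ((1.5) of \cite{bgafa}, Lemma~9.9 of \cite{bbook}, Proposition~5.1 of \cite{bgs}), with only the remark that it rests on the Yomdin--Gromov parametrization theorem. You correctly identify that input, but your sketch of how to use it has a genuine gap: the dichotomy is backwards. You place a cell $\mathcal{C}_j$ in $\mathcal{S}_1$ when $r_j=\mathrm{diam}(\mathrm{Proj}_x\mathcal{C}_j)\geq\epsilon$ and in $\mathcal{S}_2$ otherwise. Neither assignment delivers the required bound. For $\mathcal{S}_1$: a cell that is wide in $x$ can have $x$-projection of full measure --- e.g.\ a thin slab $[0,1]^{d_1}\times A$ with $\mathrm{meas}_{d_2}(A)<\eta$ is semi-algebraic of bounded degree, has $d$-measure $<\eta$ and $r_j=\mathrm{diam}([0,1]^{d_1})\geq \epsilon$, yet $\mathrm{Leb}(\mathrm{Proj}_x)=1\not\leq B^C\epsilon$. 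The ``$\eta$-measure budget'' cannot rescue this: the $d$-dimensional measure of a cell says nothing about the $d_1$-measure of its shadow (a lower-dimensional cell has measure zero and can still project onto all of $[0,1]^{d_1}$). For $\mathcal{S}_2$: take $\mathcal{C}=\{x_0\}\times[0,1]^{d_2}$, fattened by $\eta^{1/d}$. Its $x$-diameter is $<\epsilon$, so your rule sends it to $\mathcal{S}_2$; but the plane $L=\{x=x_0\}$ satisfies $\mathrm{Proj}_L(e_l)=0$ for all $l$, and $\mathcal{C}\cap L$ has $d_2$-measure $1$, violating the transversality bound in the relevant regime $B^C\epsilon^{-1}\eta^{1/d}\ll1$. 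Such a cell must go to $\mathcal{S}_1$, where it causes no harm since its $x$-projection has measure $\leq\epsilon^{d_1}$. Likewise the slab above must go to $\mathcal{S}_2$, where its intersection with any such $L$ has measure $\leq\mathrm{meas}(A)<\eta$. Your ``balance between the $x$- and $y$-diameters enforced by the volume constraint'' does not exist: for lower-dimensional cells the volume constraint is vacuous, and $\mathrm{diam}(\mathrm{Proj}_y\mathcal{C}_j)$ is in general not small.

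The argument that actually works (Lemma~9.9 of \cite{bbook}) splits according to the \emph{tilt} of the cells, not their $x$-diameter. One first uses $\mathrm{meas}_d(\mathcal{S})<\eta$ to conclude that $\mathcal{S}$ contains no ball of radius $\sim\eta^{1/d}$, hence lies in the $\eta^{1/d}$-neighborhood of a semi-algebraic set of dimension $\leq d-1$ and degree $B^C$; Yomdin--Gromov is applied to that set, producing $B^C$ maps $\phi=(\phi^x,\phi^y)$ on unit cubes with $\|D\phi\|\leq1$. Where $\|D\phi^x\|\leq\epsilon$, the total variation of $\phi^x$ over the unit parameter cube is $\leq\epsilon$, so the $x$-projection of that portion has measure $O(\epsilon)$: this is $\mathcal{S}_1$. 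Where some directional derivative of $\phi^x$ exceeds $\epsilon$, the tangent plane of the cell makes an angle $\gtrsim\epsilon$ with every $L$ obeying $\max_l|\mathrm{Proj}_L(e_l)|\leq\epsilon/100$, and the intersection of the $\eta^{1/d}$-neighborhood of such a piece with $L$ has $d_2$-measure $O(\epsilon^{-1}\eta^{1/d})$: this is $\mathcal{S}_2$, and it is here --- not in any diameter bookkeeping --- that both the factor $\epsilon^{-1}$ and the quantity $\eta^{1/d}$ in the statement originate. To repair your proposal you would need to replace the diameter dichotomy by this derivative (tilt) dichotomy and insert the preliminary reduction to a codimension-one set.
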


The above lemma is the basic tool,
underlining the semi-algebraic techniques used in the subject. 
It is stated as (1.5) in ~\cite{bgafa}, cf., Lemma~9.9 ~\cite {bbook} and
Proposition ~5.1 ~\cite{bgs}, and relies on the Yomdin-Gromov triangulation theorem. 
For a complete proof of the latter, see ~\cite{bin}. Together with Theorem \ref{thmldt},  \eqref{g52} and \eqref{g521}, it enables us to go beyond the
perturbative scales in \eqref{g51} and \eqref{g5111}. 

\begin{proof}[\bf Proof of induction hypothesis]
	Assume that the induction holds for all scales up to $r$. 	We will prove that  it holds for $r+1$. 
From our construction, it is easy to see that $u^{(r)}(\omega,a)$ is a  rational function in $(\omega, a)$  of degree at most $M^{(r+1)^3}$. 

For $r\leq r_0-1$,   by \eqref{g51}, \eqref{g5111} and standard perturbation arguments, we have that
for any $(\omega,a)\in\Omega_0\times[1,2]^b$, 
\begin{equation}\label{ggoodt1d1s1}
||(R_{[-M^{r+1},M^{r+1}]^{b+1}}\tilde{T} _{u^{(r)}} (\omega,a)R_{[-M^{r+1},M^{r+1}]^{b+1}})^{-1}||\leq  2\delta^{-\frac{1}{8}}M^{(r+1)^C},
\end{equation}
and for any  $(n, j)$ and $(n', j')$ satisfying $\max\{|n-n^\prime|,|j-j^\prime|\}\geq (r+1)^C$,
\begin{equation}\label{ggoodt2d2s1}
|(R_{[-M^{r+1},M^{r+1}]^{b+1}}\tilde{T}_{u^{(r)}} (\omega,a) R_{[-M^{r+1},M^{r+1}]^{b+1}})^{-1}(n,j;n^\prime,j^\prime)|\leq 2\delta^{-\frac{1}{8}}e^{- c \max\{|n-n^\prime|,|j-j^\prime|\}}.
\end{equation}

	We are in a position to treat the case $r\geq r_0$. 
	Let $\mathcal{X}\subset \cup_{I\in\Lambda_{r}} I$  be such that  
 $(\omega,a)\in \mathcal{X}$  satisfies
	   (5) of Hiv at step $r+1$ and
	     \begin{equation}\label{hgood1b11}
	   \Vert (R_{[-M^r,M^r]^{b+1}}\tilde{T}_{ u^{(r)}}  (\omega,a)R_{[-M^r,M^r]^{b+1}})^{-1}\Vert \leq \ \delta^{-\frac{1}{8}}M^{r^C},
	   \end{equation}
	   and   for  any $(n, j)$ and $(n', j')$ satisfying $\max\{|n-n^\prime|,  |j-j'|\}>r^{C}$,
	   \begin{equation}\label{hgood2b11}
	   | (R_{[-M^r,M^r]^{b+1}}\tilde{T}_{ u^{(r)}} (\omega,a)R_{[-M^r,M^r]^{b+1}})^{-1}(n,j;n^\prime, j')|\leq  \delta^{-\frac{1}{8}}e^{-c \max\{|n-n^\prime|,|j-j^\prime|\}}.
	   \end{equation}

	   By perturbation arguments,  
	   we can essentially assume that $\mathcal{X}$ is the union of a  collection of open intervals of  size  $M^{-(r+1)^C}$.   Denoting the collection by  $ \Lambda_{r+1}$,
	   we have constructed $\Lambda_{r+1}$.   Except for \eqref{gme1b1111} and  \eqref{gme1b11}, 
	it is now routine that the rest of Hi-v hold for $r+1$, see Chap.~18, IV, (18.36)-(18.41) ~\cite{bbook} and Lemmas~5.2
	and 5.3 ~\cite{wduke}. See  appendix \ref{ite} for more details.

We proceed to the proof of the measure estimates  \eqref{gme1b11} and \eqref{gme1b1111}.
Let $N=M^{r+1}$ and 
	$N_1=(\log N)^C$ with a large constant $C$.
	Let 
	\begin{equation*}
		r_1=\left\lfloor\frac{2\log N_1}{\log \frac{4}{3}}\right\rfloor+1,
	\end{equation*}
so that $\delta_{r_1}<e^{-N_1^2}$. 
	Consider $T_{u^{(r_1)}}$.  
	Pick one interval  $I\in\Lambda_{r_1}$ of size $M^{-r_1^C}$ and 
let $I_1=P_{\omega} (I\cap \Gamma_{r_1})$. By the $Q$-equation, we have that the size of $I_1$ is smaller than $C\delta$.

Solving the $Q$-equation at step $r_1-1 $, one has that $a=a^{(r_1)}(\omega)$, $\omega\in I_1$. 
Since $\omega\in\Omega_0$, one has that the first assumption of Theorem \ref{thmldt} always holds.
	By Theorem \ref{thmldt}, 
	there exists $X_{N_1}$ (depending on $\omega$) such that  for any $\theta\notin X_{N_1}$, 
	\begin{equation}\label{ggoodt1d}
	||(R_{Q_{N_1}}{T} _{u^{(r_1)}} (\theta,\omega,a^{(r_1)}(\omega))R_{Q_{N_1}})^{-1}||\leq   e^{N_1^{\frac{9}{10}}},
	\end{equation}
	and for any $(n,j)\in \Z^{b+1}$ and $ (n^\prime,j^\prime)\in\Z^{b+1}$ with $\max\{|n-n^\prime|,|j-j^\prime|\}\geq \frac{N_1}{10},$
	\begin{equation}\label{ggoodt2d}
	|(R_{Q_{N_1}}T_{u^{(r_1)}} (\theta,\omega,a^{(r_1)}(\omega)) R_{Q_{N_1}})^{-1}(n,j;n^\prime,j^\prime)|\leq  e^{- c \max\{|n-n^\prime|,|j-j^\prime|\}}, 
	\end{equation}
	and $X_{N_1} $ satisfies 
	 
	\begin{equation}\label{gsmalllineproj1d}
	{\rm Leb}( X_{N_1} )\leq e^{-N_1^{\frac{1}{30}}}.
	\end{equation}

 	Let $K_{N_1}=\{ \pm n\cdot \omega+\mu_j: |n|\leq N_1, |j|\leq 3N_1\}$ and $I_{N_1}$ (depending on $\omega$) be the $C\delta$ neighbour of $K_{N_1}$.
 Assume $\theta\notin I_{N_1}$.  Then the diagonal entries $D_+,D_-$ are larger than $C\delta$. Perturbation argument leads to,
 	for any $|j_0|\leq 	2N_1$ and $Q_{N_1}\in \mathcal{E}^0_{N_1}$, 
 	\begin{equation*}
 	\Vert (R_{Q_{N_1}(j_0)}T_{u^{(r_1)}}(\theta,\omega,a^{(r_1)}(\omega))R_{Q_{N_1}(j_0)})^{-1} \Vert\leq\frac{1}{\delta}, 
 	\end{equation*}
 	and  for any $(n, j)$ and $(n', j')$ satisfying  $ \max\{|n-n^\prime|,|j-j'|\}\geq \frac{N_1}{10}$,
 	\begin{equation*}
 	|(R_{Q_{N_1}(j_0)}T_{u^{(r_1)}}(\theta,\omega,a^{(r_1)}(\omega))Q_{Q_{N_1}(j_0)})^{-1}(n,n^\prime)| \leq  e^{-c \max\{|n-n^\prime|,|j-j^\prime|\}}.
 	\end{equation*}
 	
 Since  $N_1\geq (\log \frac{1}{\delta})^C$ ($C$ is large),  $\frac{1}{\delta}\leq e^{N_1^{\frac{1}{2}}}$. Therefore, \eqref{ggoodt1d} and \eqref{ggoodt2d} hold. This implies that
 	$X_{N_1}\subset I_{N_1}$. Since $\Omega_0$ has size $C\delta$, we can assume that $X_{N_1}$  is  in  a union of a collection of intervals of size $\delta$ with total number $N_1^C$ (independent of $\omega$).
	Pick one interval $\Theta$.
	  	Let $\mathcal{X}_{N_1} (\omega,\theta)\subset I_1\times \Theta$  be such that   there exists some $Q_{N_1}\in \mathcal{E}_{N_1}$  such that either \eqref{ggoodt1d} or \eqref{ggoodt2d} is not true.  
	  	By \eqref{gsmalllineproj1d} and Fubini theorem, one has that
	  	\begin{equation}
	  	{\rm Leb}(\mathcal{X}_{N_1} ) \leq C\delta e^{-N_1^{\frac{1}{30}}} \leq  \delta e^{-N_1^{\frac{1}{31}}}.
	  	\end{equation}
	We can assume that  
	$\mathcal{X}_{N_1} \subset I_1 \times\Theta$  is a semi-algebraic set of degree at most $N_1^CM^{Cr_1^3}$.  This can be seen as follows. Let $\tilde{X}_{N_1}\subset \Omega_0\times [1,2]^b\times \R$ be such that   
	there exists some $Q_{N_1}\in \mathcal{E}_{N_1}$  such that one of the following is not true:
		\begin{equation}\label{ggoodt1sd}
	||(R_{Q_{N_1}}{T} _{u^{(r_1)}} (\theta,\omega,a^{(r_1)}(\omega))R_{Q_{N_1}})^{-1}||\leq   e^{N_1^{\frac{9}{10}}},
	\end{equation}
	and for any $(n,j)\in \Z^{b+1}$ and $ (n^\prime,j^\prime)\in\Z^{b+1}$ with $\max\{|n-n^\prime|,|j-j^\prime|\}\geq \frac{N_1}{10},$
	\begin{equation}\label{ggoodt2sd}
	|(R_{Q_{N_1}}T_{u^{(r_1)}} (\theta,\omega,a^{(r_1)}(\omega)) R_{Q_{N_1}})^{-1}(n,j;n^\prime,j^\prime)|\leq  e^{- c \max\{|n-n^\prime|,|j-j^\prime|\}}.
	\end{equation}
Therefore, $\mathcal{X}_{N_1}=P_{(\omega,\theta)} (\tilde{X}_{N_1} \cap (\Gamma_{r_1}\times \R))$.  Clearly, both $\tilde{X}_{N_1} $ and $\Gamma_{r_1}$ are  semi-algebraic sets of degree at most $N_1^CM^{Cr_1^3}$.  
Lemma~\ref{lediscom1} implies that $\mathcal{X}_{N_1}$ is a semi-algebraic set of degree at most $N_1^CM^{Cr_1^3}$.

	Let
	\begin{equation}
	\epsilon_l=M^{-\frac{r}{2^{b-l}}},l=1,2,\cdots,b-1, \text{ and }\epsilon_b= 10M^{-r}.
	\end{equation}
	Choose any $|j_0|\leq  2N_1$. Recall that $T_{u^{(r_1)}}$ is T\"oplitz with respect to $n\in\Z^d$.  Denote by $\epsilon_{b+1}=e^{-N_1^{\frac{1}{40}}}$. 
	Applying  Lemma \ref{leproj}  in all possible directions  (see (3.26) in ~\cite{bgafa})  and also on all possible open sets  and $\Theta$ (the total number is bounded by $N_1^CM^{Cr_1^3}$), there exists a set of $\omega $, $I_1^r \subset I_1$ such that 
	\begin{equation}\label{gme4}
	{\rm Leb }(I_1^r)\leq \delta M^{Cr_1^3}N_1^C \left(\sum_{k=2}^{b+1} \left(\prod_{l=1}^{k-1}\epsilon^{-1}_l\right) \epsilon_k\right )\leq 
\delta	M^{-\frac{r}{2^{b-1}}}N_1^CM^{Cr_1^3},
	\end{equation}
	and for any  $\omega \in I_1\backslash    I_1^r$,
	one has that for any $(n_0,j_0)\in [-N,N]^{b}\times[-2N_1,2N_1]$ with $\max\{|n_0|,|j_0|\}\geq  \frac{M^r}{10}$ and $(\omega,a)\in \Gamma_{r_1}$,
	\begin{equation} 
	||(R_{(n_0,j_0)+Q_{N_1}}{T} _{u^{(r_1)}} (\omega,a)R_{(n_0,j_0)+Q_{N_1}})^{-1}||\leq  e^{N_1^{\frac{9}{10}}},
	\end{equation}
	and for any  $(n, j)$ and $(n', j')$ satisfying $\max\{|n-n^\prime|,|j-j^\prime|\}\geq \frac{N_1}{10},$
	\begin{equation}  
	|(R_{(n_0,j_0)+Q_{N_1}}T_{u^{(r_1)}} (\omega,a) R_{(n_0,j_0)+Q_{N_1}})^{-1}(n,j;n^\prime,j^\prime)|\leq  e^{- c \max\{|n-n^\prime|,|j-j^\prime|\}}. 
	\end{equation}
	Let us explain where the factor $\delta$ in \eqref{gme4} is from. 
	We apply Lemma  \ref{leproj}  in $I_1\times\Theta$, where both $I_1$ and $\Theta$ have sizes $C\delta$.  
	By scaling, we have such a $\delta$ factor.
	
	Assume $|j_0|> 2N_1$. In this case, we can assume that $R_{(n_0,j_0)+Q_{N_1}}{T} _{u^{(r_1)}} (\omega,a)R_{(n_0,j_0)+Q_{N_1}}$ is  essentially a diagonal matrix. So, we only need to remove $\omega$ such that for some $(n,j)\in[-N,N]^{b+1}$,
	$|n\cdot \omega+\mu_j|\leq 2e^{-N_1^{\frac{9}{10}}}$ or  	$|n\cdot \omega-\mu_j|\leq 2e^{-N_1^{\frac{9}{10}}}$.
This can not happen when $n=0$ because of \eqref{g611}. 
Direct compuations imply that there exists a set of $\omega $, $\tilde{I}_1^r \subset I_1$ such that 
\begin{equation}\label{gme41}
	{\rm Leb }(\tilde{I}_1^r))\leq N^{C(b)} e^{-N_1^{\frac{9}{10}}}< \delta	M^{-\frac{r}{2^{b-1}}},
\end{equation}
and for any  $\omega \in I_1\backslash    \tilde{I}_1^r$,
one has  that for any $(n_0,j_0)\in [-N,N]^{b+1}$ with $\max\{|n_0|,|j_0|\}\geq  \frac{M^r}{10}$, $|j_0|\geq 2N_1$ and $(\omega,a)\in \Gamma_{r_1}$,
\begin{equation} 
	||(R_{(n_0,j_0)+Q_{N_1}}{T} _{u^{(r_1)}} (\omega,a)R_{(n_0,j_0)+Q_{N_1}})^{-1}||\leq  e^{N_1^{\frac{9}{10}}},
\end{equation}
and for any $(n, j)$ and $(n', j')$ such that $\max\{|n-n^\prime|,|j-j^\prime|\}\geq \frac{N_1}{10},$
\begin{equation}  
	|(R_{(n_0,j_0)+Q_{N_1}}T_{u^{(r_1)}} (\omega,a) R_{(n_0,j_0)+Q_{N_1}})^{-1}(n,j;n^\prime,j^\prime)|\leq  e^{- c \max\{|n-n^\prime|,|j-j^\prime|\}}. 
\end{equation}

Since the distance between $\Gamma_{r_1}$ and $ \Gamma_{r}$ is less than $C\delta_{r_1}\leq  Ce^{-N_1^2}$ and $ ||u^{(r_1)}-u^{(r)}||\leq C\delta_{r_1}\leq Ce^{-N_1^2}$, 
	by perturbation arguments, for any $(n_0,j_0)\in [-M^{r+1},M^{r+1}]^{b+1}$ with $\max\{|n_0|,|j_0|\}\geq \frac{M^r}{10}$,  $(\omega,a)\in\Gamma_{r}$ and $\omega\in I_1\backslash (I_1^r\cup \tilde{I}_1^r)$,
	\begin{equation}\label{ggoodt1d1}
	||(R_{(n_0,j_0)+Q_{N_1}}{T} _{u^{(r)}} (\omega,a)R_{(n_0,j_0)+Q_{N_1}})^{-1}||\leq  2e^{N_1^{\frac{9}{10}}},
	\end{equation}
	and for any  $(n, j)$ and $(n', j')$ such that $\max\{|n-n^\prime|,|j-j^\prime|\}\geq \frac{N_1}{10},$
	\begin{equation}\label{ggoodt2d2}
	|(R_{(n_0,j_0)+Q_{N_1}}T_{u^{(r)}} (\omega,a) R_{(n_0,j_0)+Q_{N_1}})^{-1}(n,j;n^\prime,j^\prime)|\leq 2 e^{- c \max\{|n-n^\prime|,|j-j^\prime|\}}. 
	\end{equation}
	By  \eqref {hgood1b} and \eqref{hgood2b} at step $r$,   and using perturbation arguments, one has that
	for any $(\omega,a)\in\cup _{I\in \Lambda_{r}} I$,
	\begin{equation}\label{hgood1b1}
	\Vert (R_{[-M^r,M^r]^{b+1}}\tilde{T}_{u^{(r)}} (\omega,a)R_{[-M^r,M^r]^{b+1}})^{-1}\Vert \leq 2\delta^{-\frac{1}{8}}M^{r^C},
	\end{equation}
	and   for  any $(n, j)$ and $(n', j')$ such that $ \max\{|n-n^\prime|, |j-j'|\}>r^{C}$,
	\begin{equation}\label{hgood2b1}
	|(R_{[-M^r,M^r]^{b+1}}\tilde{T}_{u^{(r)}} (\omega,a)R_{[-M^r,M^r]^{b+1}})^{-1}(n,j;n^\prime, j')|\leq 2\delta^{-\frac{1}{8}}e^{-c \max\{|n-n^\prime|,|j-j^\prime|\}}.
	\end{equation}
	By  \eqref{ggoodt1d1}-\eqref{hgood2b1} and resolvent expansion as in Lemma~5.1 ~\cite{bw},
	one has that on $(\omega,a)\in ( \cup _{I\in \Lambda_{r}} I)\cap  \Gamma_{r}$, $\omega\in   (I_1\backslash (I_1^r\cup \tilde{I}_1^r))$
	\begin{equation}\label{ggoodt1d11}
	||(R_{[-M^{r+1},M^{r+1}]^{b+1}}\tilde{T}_{u^{(r)}} (\omega,a) R_{[-M^{r+1},M^{r+1}]^{b+1}})^{-1}||\leq  \delta^{-\frac{1}{8}}M^{(r+1)^C},
	\end{equation}
	and for  any $(n, j)$ and $(n', j')$ such that $\max\{|n-n^\prime|,|j-j^\prime|\}\geq (r+1)^C,$
	\begin{equation}\label{ggoodt2d21}
	|(R_{[-M^{r+1},M^{r+1}]^{b+1}}\tilde{T}_{u^{(r)}} (\omega,a) R_{[-M^{r+1},M^{r+1}]^{b+1}})^{-1}(n,j;n^\prime,j^\prime)|\leq   \delta^{-\frac{1}{8}}e^{- c \max\{|n-n^\prime|,|j-j^\prime|\}}.
	\end{equation}
In order to have (5) in Hiv, we have to remove $\omega $ of measure less than $ \delta^2M^{-\frac{r}{2}}$ from scales $M^{r} $ to $M^{(r+1)}$ for $r\geq r_0$, and of measure   less than $  \delta ^2 e^{-|\log \delta|^{K_1^{90}}}$ from scales $|\log \delta|^K$ to $M^{r_0}$.
		By counting all possible intervals $I$, and by \eqref{gme4}, \eqref{gme41},   one has that  for $r\geq r_0$, 
			\begin{equation}\label{gme1b111}
		\text{Leb} (P_{\omega}(\Gamma_{r}\cap (\bigcup_{I'\in {\Lambda}_{r}} I'\backslash \bigcup_{I\in {\Lambda}_{r+1}}I)))\leq \delta M^{r_1^C}M^{-\frac{r}{2^{b-1}}}+\delta^2M^{-\frac{r}{2}},
		\end{equation} 
	and 
	for $r= r_0-1$, 
	\begin{equation}\label{gme1b11111}
		\text{Leb} (P_{\omega}(\Gamma_{r}\cap (\bigcup_{I'\in {\Lambda}_{r}} I'\backslash \bigcup_{I\in {\Lambda}_{r+1}}I)))\leq \delta M^{r_1^C}M^{-\frac{r}{2^{b-1}}}+ \delta^2 e^{-|\log \delta|^{K_1^{90}}}.
	\end{equation} 
		This implies that  for $r\geq r_0$
			\begin{equation}\label{gme1b112}
		\text{Leb} (P_{a}(\Gamma_{r}\cap (\bigcup_{I'\in {\Lambda}_{r}} I'\backslash \bigcup_{I\in {\Lambda}_{r+1}}I)))\leq M^{-\frac{r}{2^b}},
		\end{equation} 
	and 
	 for $r= r_0-1$
	\begin{equation}\label{gme1b1121}
		\text{Leb} (P_{a}(\Gamma_{r}\cap (\bigcup_{I'\in {\Lambda}_{r}} I'\backslash \bigcup_{I\in {\Lambda}_{r+1}}I)))\leq M^{-\frac{r}{2^b}} +e^{-|\log \delta|^{K_1^{90}}}.
	\end{equation} 
We have completed the proof.

\end{proof}

 \begin{proof}[\bf Proof of main theorem]
 	Theorem \ref{mainthm} follows immediately from the (by now verified) hypothesis (Hi-iv). 
 \end{proof}
 
 \appendix

 \section{Eigenfunction relabelling map} \label{label1}
 For a fixed $V\in\mathcal V_\epsilon$, basing on \eqref{g13}, one may relabel the eigenfunctions in a more intrinsic way. 
  Write $\varphi_i$ and $\iota_i$ for $\varphi_i^V$ and $\iota_i^V$, since $V$ is fixed. 
  The goal is that in the new labelling scheme, if $j>j'$, then the localization centers  of the corresponding eigenfunctions $\phi_j$ and $\phi_{j'}$
  satisfy  $\ell_{j'}\geq \ell_j$.
  Below we provide such a relabelling map. 
  
  For a given eigenfunction $\varphi_i$, we first select a vertex
  among the set of vertices, on which $\varphi_i$ achieves its maximum.
  (This selection could be arbitrary, but it is practical to have a rule.) 
  So for $i\in\mathbb Z$, 
  let $$\mathcal M_i=\{x_0\in\mathbb Z:|\varphi_i(x_0)|=\max_{x\in\mathbb Z}|\varphi_i(x)|\}.$$
  Define $\mathcal M_i^+=\mathcal M_i\cap\{\{0\}\cup\mathbb Z_+\}.$
  If $\mathcal M_i^+\neq\emptyset$, define
  $\iota_i=\min x_0$, $x_0\in\mathcal M_i^+$; otherwise define
  $\iota_i=\min -x_0$, $x_0\in \mathcal M_i.$
  
  Define 
  $$f_1: \mathbb Z\mapsto \mathbb Z, \, f_1(i)=\iota_i.$$
  Let $\mathcal L$ be the range of $f_1$,  
  $$\mathcal L=\text {Ran}(f_1), \mathcal L\subseteq \mathbb Z.$$
  For a given $l\in\mathcal L$, let $I_l=\{i|\iota_i=l\}$. Define
  $$f_2: \mathcal L\mapsto  \bigsqcup_{l\in\mathcal L} I_l.$$
  From \eqref{g13}:
  If $|l|<l_\epsilon$, $\#\{ \cup_{|l|\leq l_{\epsilon}} I_l\}\leq (1+\epsilon)l_\epsilon$;
  and if $|l|\geq l_\epsilon$, 
   $$(1-\epsilon)l\leq\#\{ \cup_{|l|\leq l_{\epsilon}} I_l\}\leq (1+\epsilon)l.$$
  The upper bound gives that
  for all $l$, $I_l$ is finite. So we may define a map
  $$f_3: \bigsqcup_{l\in\mathcal L} I_l\mapsto \mathbb Z,$$
  such that if $x\in I_l$ and $y\in I_{l'}$, with $l'>l$,
  then $f_3(y)>f_3(x)$.

  
  Finally define the map $f$ to be 
   $f=f_3\circ f_2\circ f_1$,
    $$f: \mathbb Z \mapsto \mathbb Z.$$
Using the relabelling map $f$ yields our ortho-normal eigen-basis $\{\phi_j\}_{j\in\mathbb Z}$.
   \smallskip

 \section{Proof of Lemma \ref{lem11}}\label{pf}
 
 \begin{proof}
	By Theorem \ref{thm1} and Chebyshev's inequality, one has that 
	for any $\epsilon_1$, there exists  $\mathcal{V}_{\epsilon_1}$ with $\mathbb P(\mathcal{V}_{\epsilon_1})>1-\epsilon_1$  such that for any $V\in \mathcal{V}_{\epsilon_1}$, 
	\begin{equation}\label{appg5}
		|\varphi^V_j(\ell)|\leq C_{\epsilon_1}(1+|\iota_j^V|)^qe^{-\gamma_1|\ell-\iota_j^V|}.
	\end{equation}
	For simplicity, below we drop the superscript $V$. 
	Clearly, we have
	\begin{equation}\label{appg1}
		\sum_{\ell\in\Z} |\varphi_j(\ell)|^2=1.
	\end{equation}
	and
	\begin{equation}\label{appg2}
		\sum_{j\in\Z} |\varphi_j(\ell)|^2=1.
	\end{equation}
Let $\epsilon$ be an arbitrarily small constant.
	Assume $L$ is large enough, depending on $\epsilon$ and $\epsilon_1$.
	If $k\leq \iota_j\leq k+L$ with $|k|\leq L^4$, then
	\begin{align}
		\sum_{\ell \leq k -\epsilon L} |\varphi_{j}(\ell)| ^2+\sum_{\ell \geq k+(1+\epsilon) L} |\varphi_{j}(\ell)| ^2&\leq \sum_{m\geq \epsilon L} C_{\epsilon_1}  L^{8q}e^{-\gamma_1 m} \\
		&\leq Ce^{-\epsilon L},\label{appg3}
	\end{align}
	where $C$ depends on $\epsilon_1$ and $\epsilon$.
	By \eqref{appg1} and \eqref{appg3},  one has that for any $j$ with $k \leq \iota_j\leq k+ L$,
	\begin{equation}\label{appg4}
		\sum_{k -\epsilon L\leq  \ell \leq k+(1+\epsilon)L} |\varphi_{j}(\ell)| ^2\geq 1- C e^{-\epsilon L}.
	\end{equation}
	By \eqref{appg2} and \eqref{appg4}, we have that
	\begin{align}
		(1+\epsilon)L&\geq 	\sum_{k -\epsilon L\leq  \ell \leq k+(1+\epsilon)L\atop^{j\in\Z}} |\varphi_{j}(\ell)| ^2\\
		&\geq 	\sum_{k-\epsilon L\leq  \ell \leq k+ (1+\epsilon)L\atop^{j\in\Z:k \leq \iota_j\leq k+ L}} |\varphi_{j}(\ell)| ^2\\
		&\geq (1- Ce^{-\epsilon L}) \#\{j:k\leq  \iota_j\leq k+L\} .
	\end{align}
	This implies that for any $k\in[-L^4,L^4]$,
	\begin{equation}\label{appg9}
		\#\{j:k \leq \iota_j\leq k+L\} \leq (1+\epsilon ) L.
	\end{equation}
	For any $\ell\in[ k+\epsilon L, k+(1-\epsilon) L]$ with $|k|\leq L^4$,  by \eqref{appg5}, one has
	\begin{align}
		\sum_{j\in\Z:\iota_j \notin [k,k+ L]}  |\varphi_{j}(\ell)| ^2&\leq \sum_{m=0}^{\infty}	\sum_{j\in\Z:m  L \leq |\iota_j|\leq (m+1) L\atop^{|\iota_j-\ell| \geq \epsilon L}} C(1+|\iota_j|)^{2q}e^{-\gamma_1|\ell-\iota_j|}\nonumber\\
		&\leq \sum_{m\leq 10L^4}	\sum_{j\in\Z:  |\iota_j|\leq  20 L^5} CL^{10q}e^{-\gamma_1\epsilon L}+\nonumber\\
		& \sum_{m=10 L^4}^{\infty}  C(1+m L)^{2q}	\#\{j:  |\iota_j|\leq (m+1)L\} e^{-\frac{1}{2}\gamma_1 mL} \nonumber\\
		&\leq C e^{-  \epsilon L} + \sum_{m=1}^{\infty}  C(1+m L)^{3q} e^{-\frac{1}{2}\gamma_1 mL}\label{appg11}\\
		&\leq  C e^{-\epsilon L},\nonumber
	\end{align}
	where   \eqref{appg11} holds by \eqref{appg9}. 
	It implies that
	\begin{equation}\label{appg8}
		\sum_{j\in\Z:\iota_j \notin [k,k+ L] \atop^{\ell\in[k+\epsilon L, k+ (1-\epsilon) L]}}  |\varphi_{j}(\ell)| ^2\leq   C e^{-\epsilon L}.
	\end{equation}
	By \eqref{appg2} and \eqref{appg8}, one has that
	\begin{align*}
		(1-\epsilon)L&=	\sum_{\ell\in[k+\epsilon L, k+(1-\epsilon)L],j\in\Z} |\varphi_{j}(\ell)| ^2\\
		&\leq Ce^{-\epsilon L}+	\sum_{j\in\Z:\iota_j \in [k,k+L] \atop^{\ell\in[k+\epsilon L, k+(1-\epsilon) L]}}  |\varphi_{j}(\ell)| ^2 \\
		&\leq C e^{-\epsilon L}+ \#\{j:k\leq \iota_j\leq k+L\} .
	\end{align*}
	This yields that  for any $k$ with $|k|\leq L^4$,
	\begin{equation}\label{appg13}
		\#\{j:k\leq \iota_j\leq k+ L\} \geq (1- \epsilon ) L.
	\end{equation}
	
	Now \eqref{g13} follows from    \eqref{appg9} and \eqref{appg13}.
\end{proof}

 \section{Bourgain's  induction estimates}\label{ite}
   Assume that Hi-v hold at step $r$, and that  Hi-v hold at step  $r+1$ except for \eqref{conv}. We show that 
    \eqref{conv} holds at step $r+1$.
    This follows from  Chap.~18 ~\cite{bbook} with minor modifications.
 
 Let us state the difference between our setting and that of Bourgain's.
 Bourgain assumed that 
 \begin{equation} 
 \Vert (R_{[-M^r,M^r]^{b+1}}\tilde{T}_{ u^{(r-1)}} R_{[-M^r,M^r]^{b+1}})^{-1}\Vert \leq  M^{r^C},
 \end{equation}
 and   that for  $(n, j)$ and $(n', j')$ satisfying $\max\{ |n-n^\prime|, |j-j'|\}>r^{C}$,
 \begin{equation} 
 | (R_{[-M^r,M^r]^{b+1}}\tilde{T}_{ u^{(r-1)}} R_{[-M^r,M^r]^{b+1}})^{-1}(n,j;n^\prime, j')|\leq e^{-c\max\{|n-n'|,|j-j^\prime|\}}.
 \end{equation}
 Then he proved that the induction holds if 
 \begin{align*}
 \delta_{r+1}&\geq  M^{(r+1)^C}\kappa_r,\\
 \bar{\delta}_{r+1}&\geq M^{2(r+1)^C}\bar{\kappa}_r+M^{(r+1)^C}\delta_{r+1},\\
 \kappa_{r+1}&\geq e^{-\frac{c}{3}M^{r+1}}\kappa_r+\delta_{r+1}^2,\\
 \bar{\kappa}_{r+1}&\geq  M^{2(r+1)^C}{\kappa}_r +  e^{-\frac{c}{3}M^{r+1}}\bar{\kappa}_r+\delta_{r+1}\bar{\delta}_{r+1}.
 \end{align*}
 
 We  assume  that for a proper $0<\nu<1$,
\begin{equation} 
\Vert (R_{[-M^r,M^r]^{b+1}}\tilde{T}_{ u^{(r-1)}} R_{[-M^r,M^r]^{b+1}})^{-1}\Vert \leq \delta^{-\nu}M^{r^C},
\end{equation}
and  that  for  $(n, j)$ and $(n', j')$ satisfying $\max\{|n-n^\prime|, |j-j'|\}>r^{C}$,
\begin{equation} 
| (R_{[-M^r,M^r]^{b+1}}\tilde{T}_{ u^{(r-1)}} R_{[-M^r,M^r]^{b+1}})^{-1}(n,j;n^\prime, j')|\leq \delta^{-\nu}e^{-c\max\{|n-n'|+|j-j^\prime|\}}.
\end{equation}
Following Bourgain's proof, we obtain the following new relations
\begin{align*}
\delta_{r+1}&\geq\delta^{-\nu} M^{(r+1)^C}\kappa_r,\\
\bar{\delta}_{r+1}&\geq \delta^{-2\nu} M^{2(r+1)^C}\bar{\kappa}_r+\delta^{-\nu} M^{(r+1)^C}\delta_{r+1},\\
\kappa_{r+1}&\geq \delta^{1-\nu} e^{-\frac{c}{3}M^{r+1}}\kappa_r+\delta_{r+1}^2,\\
\bar{\kappa}_{r+1}&\geq \delta^{-2\nu} M^{2(r+1)^C}{\kappa}_r +\delta^{1-\nu} e^{-\frac{c}{3}M^{r+1}}\bar{\kappa}_r+\delta_{r+1}\bar{\delta}_{r+1}.
\end{align*}
 For example, we may take $\nu=\frac{1}{8}$
\begin{equation*}
\delta_r=\delta^{\frac{1}{2}} M^{-(\frac{4}{3})^r}, \bar{\delta}_r=\delta^{\frac{1}{8}} M^{-\frac{1}{2}(\frac{4}{3})^r}, \kappa_r=\delta^{\frac{3}{4}} M^{-(\frac{4}{3})^{r+2}}, \bar{\kappa}_r=\delta^{\frac{3}{8}} M^{-\frac{1}{2}(\frac{4}{3})^{r+2}}.
\end{equation*}

	\section*{Acknowledgments}

W. Liu was supported by NSF DMS-2000345 and DMS-2052572. W.-M.~Wang acknowledges support from CY Initiative of Excellence (Grant ``Investissements d'Avenir" ANR-16-IDEX-0008).

\bibliographystyle{abbrv}
\bibliography{Final_1_1_2022}
\end{document}